\newtheorem{theorem}{Theorem}
\newtheorem{lemma}[theorem]{Lemma}
\newtheorem {example}[theorem]{Example}
\newtheorem{note}[theorem]{Note}
\newtheorem{property}{Property}
\newcommand{\sgraph}{G}
\newcommand{\weight}{w}
\newcommand{\neighbour}{N}
\newcommand{\products}{\mathcal{P}}
\newcommand{\snet}{\mathcal{S}}
\newcommand{\prodset}{P}
\newcommand{\obar}[1]{\overline{#1}}
\newcommand{\srcnodes}{\mathit{source}}
\newcommand{\payoff}{p}
\newcommand{\strprofile}{s}
\newcommand{\agents}{\mathcal{A}}
\newcommand{\spred}{\operatorname{Pred}}
\newcommand{\nat}{\mathbb{N}}
\newcommand{\inflset}{\mathcal{N}}
\newcommand{\dagrank}{\operatorname{rank}}
\newcommand{\argmax}{\operatornamewithlimits{argmax}}
\newcommand{\constutil}{c_0}
\newcommand{\setn}[1]{[#1]}
\newcommand{\val}{\operatorname{Val}}
\newcommand{\bigo}{O}
\newcommand{\ES}{\emptyset}
\newcounter{symbol}
\newcommand{\indexsyma}[1]%
{\stepcounter{symbol}\index{zzz1 \thesymbol @\protect#1}}
\newcommand{\indexsymb}[1]%
{\stepcounter{symbol}\index{zzz2 \thesymbol @\protect#1}}
\newcommand{\indexsymc}[1]%
{\stepcounter{symbol}\index{zzz3 \thesymbol @\protect#1}}
\newcommand{\indexsymd}[1]%
{\stepcounter{symbol}\index{zzz4 \thesymbol @\protect#1}}
\newcommand{\indexsyme}[1]%
{\stepcounter{symbol}\index{zzz5 \thesymbol @\protect#1}}
\newcommand{\bfe}[1]{\begin{bfseries}\emph{#1}\end{bfseries}\index{#1}}
\newcommand{\oldbfe}[1]{\begin{bfseries}\emph{#1}\end{bfseries}}
\newcommand{\La}{\mbox{$\:\Leftarrow\:$}}
\newcommand{\Ra}{\mbox{$\:\Rightarrow\:$}}
\newcommand{\sse}{\mbox{$\:\subseteq\:$}}
\newcommand{\fa}{\mbox{$\forall$}}
\newcommand{\LLn}{\mbox{$1,\ldots,n$}}
\newcommand{\LL}{\mbox{$\ldots$}}
\newcommand{\C}[1]{\mbox{$\{{#1}\}$}}           
\newcommand{\NI}{\noindent}
\newcommand{\HB}{\hfill{$\Box$}}
\newcommand{\VV}{\vspace{5 mm}}
\newcommand{\III}{\vspace{3 mm}}
\newcommand{\II}{\vspace{2 mm}}
\newcommand{\szkew}[1]{\relax \setbox0=\hbox{\kern -24pt $\displaystyle#1$\kern 0pt }%
\box0}
{\catcode`\@=11 \global\let\ifjusthvtest@=\iffalse}
\newcounter{oldmycaption}
\newcommand{\ceiling}[1]{\lceil #1 \rceil}
\title{Social Network Games}
\author{Sunil Simon\\
      Centre for Mathematics and Computer Science (CWI), Amsterdam\\
E-mail: \texttt{s.e.simon@cwi.nl}
  \and 
Krzysztof R. Apt\\
      Centre for Mathematics and Computer Science (CWI)
      and ILLC, University of Amsterdam, The Netherlands\\
E-mail: \texttt{k.r.apt@cwi.nl}
    }
\date{}
\begin{document}

\maketitle

\begin{abstract}
  One of the natural objectives of the field of the social networks is
  to predict agents' behaviour. To better understand the spread of
  various products through a social network \cite{AM11} introduced a
  threshold model, in which the nodes influenced by their neighbours
  can adopt one out of several alternatives. To analyze the
  consequences of such product adoption we associate here with each such
  social network a natural strategic game between the agents.
  
  In these games the payoff of each player weakly increases when more
  players choose his strategy, which is exactly opposite to the
  congestion games. The possibility of not choosing any product
  results in two special types of (pure) Nash equilibria.
  
  We show that such games may have no Nash equilibrium and that
  determining an existence of a Nash equilibrium, also of a special
  type, is NP-complete. This implies the same result for a more
  general class of games, namely polymatrix games.  The situation
  changes when the underlying graph of the social network is a DAG, a
  simple cycle, or, more generally, has no source nodes. For these
  three classes we determine the complexity of an existence of (a
  special type of) Nash equilibria.
  
  We also clarify for these categories of games the status and the
  complexity of the finite best response property (FBRP) and the
  finite improvement property (FIP). Further, we introduce a new
  property of the uniform FIP which is satisfied when the underlying
  graph is a simple cycle, but determining it is co-NP-hard in the
  general case and also when the underlying graph has no source
  nodes. The latter complexity results also hold for the
  property of being a weakly acyclic game. A preliminary version of this paper appeared as \cite{SA12}.

\medskip
\noindent \textit{Keywords:} Social networks, strategic games, Nash equilibrium, finite improvement property, complexity.
\end{abstract}

\section{Introduction}

\subsection{Background}

Social networks are a thriving interdisciplinary research area with
links to sociology, economics, epidemiology, computer
science, and mathematics.  A flurry of numerous articles and recent
books, see, e.g., \cite{EK10}, testifies to the
relevance of this field. It deals with such diverse topics as
epidemics, analysis of the connectivity, 
spread of certain patterns of social behaviour, effects of
advertising, and emergence of `bubbles' in financial markets.

One of the prevalent types of models of social networks are the {\em
  threshold models} introduced in~\cite{Gra78} and~\cite{Sch78}.  In
such a setup each node $i$ has a threshold $\theta(i) \in (0,1]$ and
  adopts an `item' given in advance (which can be a disease, trend, or
  a specific product) when the total weight of incoming edges from the
  nodes that have already adopted this item exceeds the threshold.
  One of the most important issues studied in the threshold models has
  been that of the spread of an item, see, e.g.,
  \cite{Mor00,KKT03,Che09}.  From now on we shall refer to an `item'
  that is spread by a more specific name of a `product'.

In this context very few papers dealt with more than one product.  One
of them is~\cite{IKMW07} with its focus on the notions of
compatibility and bilinguality that result when one adopts both
available products at an extra cost.  Another one is~\cite{BFO10},
where the authors investigate whether the algorithmic approach
of~\cite{KKT03} can be extended to the case of two products.

In \cite{AM11} we introduced a new threshold model of a social network
in which nodes (agents) influenced by their neighbours can adopt one
out of \emph{several} products. This model allowed us to study various
aspects of the spread of a given product through a social network, in
the presence of other products.  We analysed from the complexity point
of view the problems of determining whether adoption of a given
product by the whole network is possible (respectively, necessary), and
when a unique outcome of the adoption process is guaranteed.  We also
clarified for social networks without unique outcomes the complexity
of determining whether a given node has to adopt some (respectively, a
given) product in some (respectively, all) final network(s), and the
complexity of computing the minimum and the maximum possible spread of
a given product.

\subsection{Motivation}
\label{subsec:motiv}

We are interested in understanding and predicting the behaviour of the
consumers (agents) who form a social network and are confronted with
several alternatives (products).  To carry out such an analysis we use
the above model of~\cite{AM11} and associate with each such social
network a natural strategic game. In this game the strategies of an
agent are products he can choose. Additionally a `null' strategy is
available that models the decision of not choosing any product.  The
idea is that after each agent chose a product, or decided not to
choose any, the agents assess the optimality of their choices
comparing them to the choices made by their neighbours.  This leads to
a natural study of (pure) Nash equilibria, in particular of those in
which some, respectively all, constituent strategies are non-null.

Social network games are related to graphical games of
\cite{KLS01}, in which the payoff function of each player depends only
on a (usually small) number of other players.
In this work the focus was mainly on finding mixed (approximate) 
Nash equilibria.  However, in graphical
games the underlying structures are undirected graphs.  Also,
social network games exhibit the following \bfe{join the crowd} property: 

\begin{quote}
  the payoff of each player weakly increases when more players choose his strategy.
\end{quote}
(We define this property more precisely in Subsection~\ref{subsec:sng}.)

Since these games are related to social networks, some natural special
cases are of interest: when the underlying graph is a DAG or has no
source nodes, with the special case of a simple cycle. Such social
networks correspond respectively to a hierarchical organization or to
a `circle of friends', in which everybody has a friend (a
neighbour). Studying Nash equilibria of these games and various
properties defined in terms of improvement paths allows us to gain
better insights into the consequences of adopting products.



\subsection{Related work}

There are a number of papers that focus on games associated with
various forms of networks, see, e.g., \cite{TW07} for an overview.  A
more recent example is~\cite{AFPT10} that analyses a strategic game
between players being firms who select nodes in an undirected graph in
order to advertise competing products via `viral marketing'.  However,
in spite of the focus on similar questions concerning the existence
and structure of Nash equilibria and on their reachability, from a
technical point of view, the games studied here seem to be unrelated
to the games studied elsewhere.

Still, it is useful to mention the following phenomenon.  When the
underlying graph of a social network has no source nodes, the game
always has a trivial Nash equilibrium in which no agent chooses a
product. A similar phenomenon has been recently observed
in~\cite{BK11} in the case of their network formation games, where
such equilibria are called degenerate.  Further, note that the `join
the crowd' property is exactly the opposite of the defining property
of the congestion games with player-specific payoff functions
introduced in~\cite{Mil96}. In these game the payoff of each player
weakly decreases when more players choose his strategy.  Because in
our case (in contrast to~\cite{Mil96}) the players can have different
strategy sets, the resulting games are not coordination games.

\subsection{Plan of the paper and summary of the results}

In the next section we recall the model of social networks introduced
in~\cite{AM11} and define strategic games associated with these
networks.  Next, in Section~\ref{sec:general} we show that in general
Nash equilibria do not need to exist even if we limit ourselves to a
special class of networks in which for each node all its neighbours
have the same weight.  We prove that determining an existence of a
Nash equilibrium is NP-complete, also when we limit our attention to
the two special types of equilibria.  Then in
Section~\ref{sec:polymatrix} we show that this NP-completeness result
holds for a more  general class of games, namely polymatrix games.

Motivated by these results we consider in Section~\ref{sec:Ne-special}
strategic games associated with three classes of social networks, the
ones whose underlying graph is a DAG, a simple cycle, or, more
generally, has no source nodes.  For each class we determine the
complexity of deciding whether a Nash equilibrium (possibly of a
special type) exists.  We also show that for these games the price of
anarchy and the price of stability are unbounded.

Next, in Section~\ref{sec:FBRP} we study the finite best response
property (FBRP) of \cite{MS96}. We prove that deciding whether a game
associated with a social network has the FBRP is co-NP-hard.  Then, in
Section~\ref{sec:fbrp-special}, we consider the above three classes of
games. We show that when the underlying graph is a DAG or there are
just two nodes, the game has the FBRP, that there is an efficient
algorithm when the underlying graph is a simple cycle, and that the
problem is co-NP-hard when the underlying graph has no source
nodes. In Sections \ref{sec:FIP} and \ref{sec:fip-special} we obtain
analogous results for the finite improvement property (FIP), though
the complexity when the underlying graph is a simple cycle remains
open.

In Section~\ref{sec:uniform-FIP} we introduce a new property, that we
call \emph{the uniform FIP}, that is of independent interest.  We show
that deciding whether a game associated with a social network has the
uniform FIP is co-NP-hard.  Then, we study in
Section~\ref{sec:uniform-FIP-special} the special cases.  We show that
when the underlying graph is a simple cycle the game has the uniform
FIP. However, when the underlying graph has no source nodes the
problem is co-NP-hard. In Section \ref{sec:WA} we show that the
property of having the uniform FIP is stronger than that of being
weakly acyclic (see \cite{You93} and \cite{Mil96}). Determining
whether a game is weakly acyclic is also co-NP-hard, also when the
underlying graph has no source nodes

Finally, in Section~\ref{sec:conc} we summarize the obtained
complexity results and suggest some further research.

\section{Preliminaries}
\label{sec:prelim}

\subsection{Strategic games}
\label{subsec:games}
Assume a set $\{1, \ldots, n\}$ of players, where $n > 1$.  A
\bfe{strategic game} for $n$ players, written as $(S_1, \ldots, S_n,
p_1, \ldots, p_n)$, consists of a non-empty set $S_i$ of
\bfe{strategies} and a \bfe{payoff function} $p_i : S_1 \times \ldots
\times S_n \: \rightarrow \: \mathbb{R}$,
for each player $i$.

Fix a strategic game
$
G := (S_1, \ldots, S_n, p_1, \ldots, p_n).
$
We denote $S_1 \times \cdots \times S_n$ by $S$, 
call each element $s \in S$
a \bfe{joint strategy},
denote the $i$th element of $s$ by $s_i$, and abbreviate the sequence
$(s_{j})_{j \neq i}$ to $s_{-i}$. Occasionally we write $(s_i,
s_{-i})$ instead of $s$.  

We call a strategy $s_i$ of player $i$ a \bfe{best response} to a
joint strategy $s_{-i}$ of his opponents if $ \fa s'_i \in S_i
\  p_i(s_i, s_{-i}) \geq p_i(s'_i, s_{-i})$. We call a joint strategy
$s$ a \bfe{Nash equilibrium} if each $s_i$ is a best response to
$s_{-i}$, that is, if
\[
\fa i \in \{1, \ldots, n\} \ \fa s'_i \in S_i \ p_i(s_i, s_{-i}) \geq p_i(s'_i, s_{-i}).
\]
Further, we call a strategy $s_i'$ of player $i$ a \bfe{better
  response} given a joint strategy $s$ if $p_i(s'_i, s_{-i}) >
p_i(s_i, s_{-i})$.

Given a joint strategy $s$ we call the sum $\mathit{SW}(s)=\sum_{j =
  1}^{n} p_j(s)$ the \bfe{social welfare} of $s$.  When the social
welfare of $s$ is maximal we call $s$ a \bfe{social optimum}.  Recall
that, given a finite game that has a Nash equilibrium, its \bfe{price
  of anarchy} (respectively, \bfe{price of stability}) is the ratio
$\frac{\mathit{SW}(s)}{\mathit{SW}(s')}$ where $s$ is a social optimum
and $s'$ is a Nash equilibrium with the lowest (respectively, highest)
social welfare. In the case of division by zero, we interpret the
outcome as $\infty$.

Following the terminology of \cite{MS96}, a \bfe{path} in $S$ is a
sequence $(s^1, s^2, \LL)$ of joint strategies such that for every $k
> 1$ there is a player $i$ such that $s^k = (s'_i, s^{k-1}_{-i})$ for
some $s'_i \neq s^{k-1}_{i}$.  A path is called an \bfe{improvement
  path} if it is maximal and for all $k > 1$, $p_i(s^k) >
p_i(s^{k-1})$ where $i$ is the player who deviated from $s^{k-1}$. If
an improvement path satisfies the additional property that $s^k_i$ is
a best response to $s^{k-1}_{-i}$ for all $k >1$ then it is called a
\bfe{best response improvement path}.

The last condition simply means that each deviating player selects a
better (best) response.  
A game has the \bfe{finite improvement property} (\bfe{FIP}),
(respectively, the \bfe{finite best response property} (\bfe{FBRP}))
if every improvement path (respectively, every best response
improvement path) is finite. Obviously, if a game has the FIP or the
FBRP, then it has a Nash equilibrium --- it is the last element of
each path. Finally, a game is called \bfe{weakly acyclic} (see
\cite{You93,Mil96}) if for every joint strategy there exists a finite
improvement path that starts at it.

\subsection{Social networks}

We are interested in specific strategic games defined over social networks.
In what follows we focus on a model of the social networks recently introduced
in \cite{AM11}.

Let $V=\{1,\ldots,n\}$ be a finite set of \bfe{agents} and
$\sgraph=(V,E,\weight)$ a weighted directed graph with $\weight_{ij}
\in [0,1]$ being the weight of the edge $(i,j)$. We assume that
$\sgraph$ does not have self loops, i.e., for all $i \in
\{1,\ldots,n\}$, $(i,i) \not\in E$. We often use the notation $i \to
j$ to denote $(i,j) \in E$ and write $i \to^*j$ if there is a path
from $i$ to $j$ in the graph $\sgraph$. Given a node $i$ of $G$ we
denote by $\neighbour(i)$ the set of nodes from which there is an
incoming edge to $i$.  We call each $j \in \neighbour(i)$ a
\oldbfe{neighbour} of $i$ in $G$.  We assume that for each node $i$
such that $\neighbour(i) \neq \ES$, $\sum_{j \in \neighbour(i)} w_{ji}
\leq 1$.  An agent $i \in V$ is said to be a \bfe{source node} in
$\sgraph$ if $\neighbour(i)=\emptyset$.

Let $\products$ be a finite set of alternatives or \bfe{products}.  By
a \bfe{social network} (from now on, just \bfe{network}) we mean a
tuple $\snet=(\sgraph,\products,\prodset,\theta)$, where $\prodset$
assigns to each agent $i$ a non-empty set of products $\prodset(i)$
from which it can make a choice. $\theta$ is a \bfe{threshold
  function} that for each $i \in V$ and $t \in \prodset(i)$ yields a
value $\theta(i,t) \in (0,1]$.

Given a network $\snet$ we denote by $\srcnodes(\snet)$ the set of
source nodes in the underlying graph $\sgraph$.  One of the classes of
the networks we shall study are the ones with $\srcnodes(\snet) =
\ES$. We call a network \bfe{equitable} if the weights are defined as
$w_{ji}=\frac{1}{|\neighbour(i)|}$ for all nodes $i$ and $j \in
\neighbour(i)$.

\subsection{Social network games}
\label{subsec:sng}

Fix a network $\snet=(\sgraph,\products,\prodset,\theta)$.
Each agent can adopt a product from his product set or
choose not to adopt any product. We denote the latter choice by
$t_0$. 

With each network $\snet$ we associate a strategic game
$\mathcal{G}(\snet)$. The idea is that the nodes
simultaneously choose a product or abstain from choosing any.
Subsequently each node assesses his choice by comparing it with the
choices made by his neighbours.  Formally, we define the game as
follows:

\begin{itemize}
\item the players are the agents,

\item the set of strategies for player $i$ is
$S_i :=\prodset(i) \cup \{t_0\}$,

\item for $i \in V$, $t \in \prodset(i)$ and a joint strategy
  $\strprofile$, let $ \inflset_i^t(\strprofile) :=\{j \in
  \neighbour(i) \mid s_j=t\}, $ i.e., $\inflset_i^t(\strprofile)$ is
  the set of neighbours of $i$ who adopted the product $t$ in $s$.

The payoff function is defined as follows, where $\constutil$ is some
positive constant given in advance:

\begin{itemize}
\item for $i \in \srcnodes(\snet)$,

$\payoff_i(\strprofile) :=\left \{\begin{array}{ll}
                         0          & \mbox{if~~} \strprofile_i = t_0\\ 
                         \constutil & \mbox{if~~} \strprofile_i \in \prodset(i)\\
  \end{array}
  \right. $\\

\item for $i \not\in \srcnodes(\snet)$,

$\payoff_i(s) :=\left\{\begin{array}{ll}
               0 &\mbox{if~~} \strprofile_i = t_0\\
               (\sum\limits_{j \in \inflset_i^t(\strprofile)} w_{ji})-\theta(i,t) & \mbox{if~~} \strprofile_i=t, \mbox{ for some } t \in \prodset(i)\\
\end{array}     
                           \right. $\\

\end{itemize}

\end{itemize}

Let us explain the underlying motivations behind the above definition.
In the first item we assume that the payoff function for the source
nodes is constant only for simplicity.  In the last section of the
paper we explain that the obtained results hold equally well in the
case when the source nodes have arbitrary positive utility for each
product.

The second item in the payoff definition is motivated by the following
considerations.  When agent $i$ is not a source node, his
`satisfaction' from a joint strategy depends positively on the
accumulated weight (read: `influence') of his neighbours who made the
same choice as he did, and negatively from his threshold level (read:
`resistance') to adopt this product.  The assumption that $\theta(i,t)
> 0$ reflects the view that there is always some resistance to adopt a
product. So when this resistance is high, it can happen that the
payoff is negative. Of course, in such a situation not adopting any
product, represented by the strategy $t_0$, is a better alternative.

The presence of this possibility allows each agent to refrain from
choosing a product.  This refers to natural situations, such as
deciding not to purchase a smartphone or not going on vacation. In the
last section we refer to an initiated research on social network games
in which the strategy $t_0$ is not present.  Such games capture
situations in which the agents have to take some decision, for
instance selecting a secondary school for their children.

By definition the payoff of each player depends only on the strategies
chosen by his neighbours, so social network games are related to
graphical games of \cite{KLS01}. However, the underlying dependence
structure of a social network game is a directed graph and the
presence of the special strategy $t_0$ available to each player makes
these games more specific. Also, as already mentioned in
Subsection~\ref{subsec:motiv}, these games satisfy the \bfe{join the
  crowd} property that we define as follows:

\begin{quote}
Each payoff function $p_i$ depends only on the strategy chosen by player $i$ and
the set of players who also chose his strategy. Moreover, the dependence
on this set is monotonic.
\end{quote}

In what follows for $t \in \products \cup \{t_0\}$ we use the
notation $\obar{t}$ to denote the joint strategy $\strprofile$ where
$\strprofile_j=t$ for all $j \in V$.  This notation is legal only if for
all agents $i$ it holds that $t \in P(i)$.

The presence of the strategy $t_0$ motivates the introduction and
study of special types of Nash equilibria. We say that a Nash
equilibrium $s$ is

\begin{itemize}
\item \bfe{determined} if for all $i$, $s_i \neq t_0$,

\item \bfe{non-trivial} if for some $i$, $s_i \neq t_0$, 

\item \bfe{trivial} if for all $i$, $s_i = t_0$, i.e., $s = \obar{t_0}$.
\end{itemize}

\section{Nash equilibria: general case}
\label{sec:general}

The first natural question that we address is that of the existence of
Nash equilibria in social network games.  We establish the following
result.

\begin{theorem} \label{thm:np}
Deciding whether for a network $\snet$ the game
$\mathcal{G}(\snet)$ has a (respectively, non-trivial) Nash
equilibrium is NP-complete.
\end{theorem}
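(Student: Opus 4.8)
The plan is to observe first that the problem is in NP: a certificate is simply a joint strategy $s$, and checking whether $s$ is a Nash equilibrium (respectively a non-trivial Nash equilibrium) can be done in polynomial time, since each payoff $p_i(s)$ depends only on the strategies of the neighbours of $i$ and is given by an explicit arithmetic formula, so verifying that each $s_i$ is a best response to $s_{-i}$ amounts to comparing at most $|S_i|$ rational numbers per player.

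\textbf{NP-hardness: reduction design.} The core of the argument is a polynomial-time reduction from a known NP-complete problem. I would reduce from a satisfiability-style problem (for instance 3-SAT, or a constraint problem such as graph colourability), building for each instance $\varphi$ a network $\snet_\varphi$ — and, for the claimed robustness, an \emph{equitable} one — whose associated game $\difgraph(\snet_\varphi)$ has a Nash equilibrium if and only if $\varphi$ is satisfiable. The key gadgets exploit the sign of the payoff formula $(\sum_{j \in \inflset_i^t(\strprofile)} w_{ji}) - \theta(i,t)$: by choosing thresholds appropriately one can make a product $t$ profitable for a node $i$ exactly when a prescribed set of its neighbours plays $t$, and unprofitable (so that $t_0$ strictly dominates it) otherwise. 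This lets one encode Boolean values as choices between two products, encode clauses as nodes that are `unhappy' precisely at falsifying assignments, and — crucially — wire in a small cyclic `inconsistency' gadget that has \emph{no} equilibrium, which gets activated only when some clause node is forced into an unstable configuration. Thus the whole game has an equilibrium iff no clause is violated iff $\varphi$ is satisfiable. The non-trivial case is handled by attaching, for every node, a cheap always-available product so that $\obar{t_0}$ is never an equilibrium, forcing any equilibrium to be non-trivial; this makes the two variants of the decision problem reduce to each other.

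\textbf{Correctness.} The verification splits into two implications. If $\varphi$ is satisfiable, one reads off a joint strategy from a satisfying assignment and checks directly that every node is playing a best response — variable nodes are happy by construction, clause nodes see at least one satisfying literal and so are stable, and the inconsistency gadget stays dormant. Conversely, if the game has any Nash equilibrium $s$, one argues that the variable nodes of $s$ must encode a consistent Boolean assignment (the gadgets penalise `half-choices'), and that if this assignment falsified some clause then the corresponding clause node would trigger the no-equilibrium gadget, contradicting that $s$ is an equilibrium; hence the assignment satisfies $\varphi$.

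\textbf{Main obstacle.} The delicate part is the construction and analysis of the `no equilibrium' sub-gadget and its coupling to the clause nodes: one must ensure that it genuinely has no stable configuration on its own (an explicit best-response cycle has to be exhibited, in the spirit of the earlier-promised example of a network whose game has no Nash equilibrium), that it does \emph{not} interfere with the rest of the game when all clauses are satisfied, and that all the weights can simultaneously be taken of the equitable form $w_{ji} = 1/|\neighbour(i)|$ while still making the threshold inequalities come out with the required strict signs. Balancing these numerical constraints — especially keeping everything equitable — is where the real work lies; the rest is bookkeeping.
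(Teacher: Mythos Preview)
Your high-level architecture is sound and matches the paper's: build a small ``no-equilibrium'' cyclic gadget, then couple it to an encoding of a hard problem so that the gadget is \emph{deactivated} (admits a stable configuration) exactly on yes-instances. Where you diverge is in the choice of hard problem and hence in the coupling mechanism. The paper reduces from \textsc{Partition}, not 3-SAT. This makes the coupling essentially arithmetic and avoids all the variable/clause gadgetry you sketch: the $n$ numbers become $n$ source nodes with product set $\{t_1,t_1'\}$, each feeding with weight $a_i$ into two ``switch'' nodes $a$ (product set $\{t_1\}$) and $b$ (product set $\{t_1'\}$) with threshold $\tfrac12$; each switch node is then identified with the external source of a separate copy of the no-NE triangle. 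A partition exists iff both switches can sit at $t_0$ (accumulated weight exactly $\tfrac12$ on each side), which is exactly what lets both triangles stabilise. Your 3-SAT route is workable in principle, but you would have to invert the activation logic (in the triangle gadget it is the feeder playing a \emph{product} that breaks equilibrium, whereas a falsified clause naturally pushes its node toward $t_0$), and that inversion costs you an extra layer of gadget that the \textsc{Partition} reduction sidesteps entirely.

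Two smaller points. First, the theorem makes no equitability claim, so your self-imposed constraint $w_{ji}=1/|N(i)|$ is unnecessary and, as you yourself note, is where the numerical balancing gets painful; drop it. Second, your treatment of the non-trivial variant is off: you cannot ``attach a cheap always-available product'' to a non-source node, since any product there has payoff $-\theta(i,t)<0$ when no neighbour adopts it. The correct (and much simpler) observation, which the paper uses implicitly, is that the reduction already contains source nodes, and for any source node $t_0$ is strictly dominated; hence every Nash equilibrium of the constructed game is automatically non-trivial, collapsing the two variants.
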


To prove it we first construct an example of a social network game
with no Nash equilibrium and then use it to determine the complexity
of the existence of Nash equilibria.

\begin{example} \label{exa:nonash}
\rm
Consider the network given in Figure~\ref{fig:noNe1}, where the product
set of each agent is marked next to the node denoting it and
the weights are labels on the edges. The source nodes are represented
by the unique product in the product set. 
\begin{figure}[ht]
\centering
$
\def\objectstyle{\scriptstyle}
\def\labelstyle{\scriptstyle}
\xymatrix@R=20pt @C=30pt{
& & \{t_1\} \ar[d]_{w_1}\\
& &1 \ar[rd]^{w_2} \ar@{}[rd]^<{\{t_1,t_2\}}\\
\{t_2\} \ar[r]_{w_1} &3 \ar[ur]^{w_2} \ar@{}[ur]^<{\{t_2,t_3\}}& &2 \ar[ll]^{w_2} \ar@{}[lu]_<{\{t_1,t_3\}} &\{t_3\} \ar[l]^{w_1}\\
}$

\caption{\label{fig:noNe1}A network with no Nash equilibrium}
\end{figure}
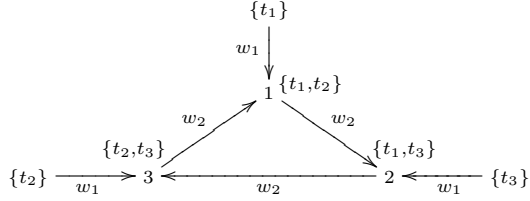

So the weights on the edges from the nodes $\{t_1\}, \{t_2\}, \{t_3\}$
are marked by $w_1$ and the weights on the edges forming the triangle
are marked by $w_2$. We assume that each threshold is a constant
$\theta$, where $ \theta < w_1 < w_2.  $ So it is more profitable to a
player residing on the triangle to adopt the product adopted by his
neighbour residing on a triangle than by the other neighbour who is a
source node. For convenience we represent each joint strategy as a
triple of strategies of players 1, 2 and 3.

It is easy to check that in the game associated with this network no
joint strategy is a Nash equilibrium.  Indeed, each agent residing on
the triangle can secure a payoff of at least $w_1 - \theta>0$, so it
suffices to analyze the joint strategies in which $t_0$ is not
used. There are in total eight such joint strategies. Here is their
listing, where in each joint strategy we underline the strategy that
is not a best response to the choice of other players:
$(\underline{t_1}, t_1, t_2)$, $(t_1, t_1, \underline{t_3})$, $(t_1,
t_3, \underline{t_2})$, $(t_1, \underline{t_3}, t_3)$, $(t_2,
\underline{t_1}, t_2)$, $(t_2, \underline{t_1}, t_3)$, $(t_2, t_3,
\underline{t_2})$, $(\underline{t_2}, t_3, t_3)$.  \HB
\end{example}

\NI
\emph{Proof of Theorem~\ref{thm:np}.}

As in \cite{AM11} we use a reduction from the NP-complete PARTITION
problem, which is: given $n$ positive rational numbers
$(a_1,\LL,a_n)$, is there a set $S$ such that $\sum_{i\in S} a_i =
\sum_{i\not\in S} a_i$?  Consider an instance $I$ of PARTITION.
Without loss of generality, suppose we have normalised the numbers so
that $\sum_{i=1}^n a_i = 1$. Then the problem instance sounds: Is
there a set $S$ such that $\sum_{i\in S} a_i = \sum_{i\not\in S} a_i =
\frac{1}{2}$?
  
To construct the appropriate network we employ the networks given in
Figure~\ref{fig:noNe1} and in Figure~\ref{fig:partition}, where for
each node $i\in\{1,\LL,n\}$ we set $w_{i a} = w_{i b} = a_i$, and
assume that the threshold of the nodes $a$ and $b$ is constant and
equals $\frac12$.

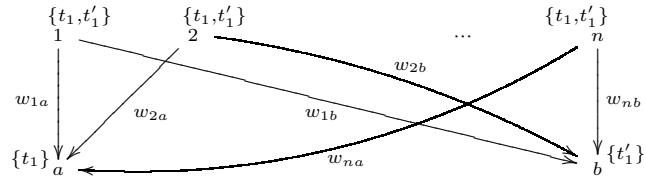
\begin{figure}[ht]
\centering
$
\def\objectstyle{\scriptstyle}
\def\labelstyle{\scriptstyle}
\xymatrix@W=10pt @R=40pt @C=35pt{
1 \ar@{}[r]^<{\{t_1,t_1'\}} \ar[d]_{w_{1a}} \ar[rrrrd]_{w_{1b}}& 2 \ar@{}[r]^<{\{t_1,t_1'\}} \ar[ld]^{w_{2a}} \ar@/^0.7pc/[rrrd]^{w_{2b}}&  &\cdots &n \ar@{}[l]_<{\{t_1,t_1'\}} \ar@/^1.5pc/[lllld]^{w_{na}} \ar[d]^{w_{nb}}\\
a \ar@{}[u]^<{\{t_1\}}  & & & &b \ar@{}[u]_<{\{t_1'\}} \\
}$
\caption{\label{fig:partition}A network related to the PARTITION problem}
\end{figure}

We use two copies of the network given in
Figure~\ref{fig:noNe1}, one unchanged and the other in which the
product $t_1$ is replaced by $t'_1$, and
construct the desired network
$\snet$ by identifying the node $a$ of the network from
Figure~\ref{fig:partition} with the node marked by $\{t_1\}$ in the
network from Figure~\ref{fig:noNe1}, and the node $b$ with the node marked
by $\{t'_1\}$ in the modified version of the network from
Figure~\ref{fig:noNe1}.


Suppose now that a solution to the considered instance of the
PARTITION problem exists, that is, for some set $S \sse \{1, \LL, n\}$
we have $\sum_{i\in S} a_i = \sum_{i\not\in S} a_i = \frac{1}{2}$.
Consider the game $\mathcal{G}(\snet)$. 
Take the joint strategy formed by the following strategies:

\begin{itemize}

\item $t_1$ assigned to each node $i \in S$ in
the network from Figure~\ref{fig:partition},

\item $t'_1$ assigned to each node $i \in \{1, \LL, n\} \setminus S$
in the network from Figure~\ref{fig:partition},

\item $t_0$ assigned to the nodes $a$, $b$ and the nodes 1 in both versions of
the network from Figure~\ref{fig:noNe1},

\item $t_3$ assigned to the nodes 2, 3 in both versions
of the networks from Figure~\ref{fig:noNe1}
and the two nodes marked by $\{t_3\}$,

\item $t_2$ assigned to the nodes marked by $\{t_2\}$.

\end{itemize}

We claim that this joint strategy is a non-trivial Nash
equilibrium. Consider first the player (i.e, node) $a$. The
accumulated weight of its neighbours who chose strategy $t_1$ is
$\frac12$, so its payoff after switching to the strategy $t_1$ is 0.
Therefore $t_0$ is indeed a best response for player $a$.  For the
same reason, $t_0$ is also a best response for player $b$.  The
analysis for the other nodes is straightforward and left to the
reader.

Conversely, suppose that a joint strategy $s$ is a Nash equilibrium in
the game $\mathcal{G}(\snet)$. Then it is also a non-trivial Nash
equilibrium since for each source node $i \in \{1,\ldots,n\}$,
$\strprofile_i \neq t_0$. We claim that the strategy selected by the
node $a$ in $s$ is $t_0$. Otherwise, this strategy equals $t_1$ and
the strategies selected by the nodes of the network of
Figure~\ref{fig:noNe1} form a Nash equilibrium in the game associated
with this network. This yields a contradiction with our previous
analysis of this network.

So $t_0$ is a best response of the node $a$ to the strategies of the
other players chosen in $s$.  This means that $ \sum_{i \in \{1, \LL,
  n\} \mid s_i = t_1} w_{i a} \leq \frac12$. By the same reasoning
$t_0$ is a best response of the node $b$ to the strategies of the
other players chosen in $s$.  This means that $\sum_{i \in \{1, \LL,
  n\} \mid s_i = t'_1} w_{i b} \leq \frac12$.

But $\sum_{i=1}^n a_i = 1$ and for $i\in\{1,\LL,n\}$, $w_{i a} = w_{i
  b} = a_i$, and $s_i \in \{t_1, t'_1\}$ since $i$ is a source node.
So both above inequalities are in fact equalities.  Consequently for
$S := \{i \in \{1, \LL, n\} \mid s_i = t_1\}$ we have $\sum_{i\in S}
a_i = \sum_{i\not\in S} a_i$.  In other words, there exists a solution
to the considered instance of the PARTITION problem.

Finally, given an instance of the PARTITION problem, 
the above network $\snet$ can be constructed from it
in polynomial time. This proves the NP-hardness
of the considered problem.

To prove that the problem lies in NP it suffices to notice that given
a network $\snet=(\sgraph,\products,\prodset,\theta)$ with $n$
nodes checking whether a joint strategy is a non-trivial Nash
equilibrium can be done by means of $n \cdot |\products|$ checks, so
in polynomial time.  \qed


\VV
  
Focussing on determined Nash equilibria does not change the matters.

\begin{theorem} \label{thm:nd}
Deciding whether for a network $\snet$ the game
$\mathcal{G}(\snet)$ has a determined Nash equilibrium is NP-complete.
\end{theorem}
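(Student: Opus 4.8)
The plan is to adapt the reduction from Theorem~\ref{thm:np} so that the trivial and the non-trivial distinction collapses into the determined case. Membership in NP is immediate by the same argument as before: given a candidate joint strategy, one checks in time $O(n \cdot |\products|)$ whether every player is playing a best response and additionally that no player plays $t_0$; so the whole effort goes into NP-hardness.

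For hardness I would again reduce from PARTITION, reusing the gadget of Figure~\ref{fig:noNe1} together with the PARTITION gadget of Figure~\ref{fig:partition}, but now I must remove the source nodes and the $t_0$-equilibria that the previous construction relied on. The key observation is that in the network $\snet$ built for Theorem~\ref{thm:np} the nodes that are forced to play $t_0$ in every Nash equilibrium are exactly $a$, $b$, and the two copies of node~$1$; all other nodes already play a non-null product in every equilibrium. So the idea is to \emph{attach to each of these four nodes a small forcing gadget} that, on the one hand, does not disturb the equilibrium analysis already carried out, and, on the other hand, contributes a private product that the node can always profitably adopt. Concretely, for each such node $v$ I would add a fresh source node $s_v$ with a unique product $t_v$, an edge $s_v \to v$ of some weight $w$, and enlarge $\prodset(v)$ by $t_v$, choosing the thresholds so that $\theta(v,t_v) < w$; then playing $t_v$ yields $v$ a strictly positive payoff, so $v$ never plays $t_0$ in any Nash equilibrium, and since $t_v$ is private to $v$ it interferes with no other payoff. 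One must double-check that this does not create \emph{new} Nash equilibria in the standalone network of Figure~\ref{fig:noNe1}: adding $t_v$ only helps node $v$ deviate, so the eight-case analysis of Example~\ref{exa:nonash} still shows there is no equilibrium in which every triangle node plays a product from $\{t_1,t_2,t_3\}$, and now the escape to $t_0$ is also blocked — hence that subnetwork still has no equilibrium at all. Finally I need to handle the remaining source nodes $1,\dots,n$ of the PARTITION gadget and the nodes marked $\{t_2\},\{t_3\},\{t_v\}$: these are source nodes with a singleton product set, so they always play that product (payoff $\constutil > 0$ versus $0$), i.e.\ they are automatically non-null; no modification is needed there.

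With these gadgets in place the argument runs parallel to the proof of Theorem~\ref{thm:np}. If a PARTITION solution $S$ exists, take the joint strategy exhibited there, except that the four previously-$t_0$ nodes $a$, $b$ and the two copies of node~$1$ now play their private products $t_a$, $t_b$, $t_1$-copy respectively; each such node gets payoff $w - \theta(v,t_v) > 0$ and, crucially, its accumulated weight from neighbours playing $t_1$ (resp.\ $t_1'$) is still exactly $\tfrac12$, which equals its threshold on that product, so it has no incentive to switch — it is a determined Nash equilibrium. Conversely, if a determined Nash equilibrium $s$ exists, then in particular every node of the (modified) copy of Figure~\ref{fig:noNe1} plays some product; I then argue that in $s$ the node $a$ must play its private product $t_a$ (not $t_1$), for otherwise the restriction of $s$ to the triangle-plus-sources subnetwork would be a Nash equilibrium of that standalone network, contradicting the strengthened Example~\ref{exa:nonash}; hence $t_a$ beats $t_1$ for $a$, giving $\sum_{i : s_i = t_1} w_{ia} \le \tfrac12$, and symmetrically $\sum_{i : s_i = t_1'} w_{ib} \le \tfrac12$, and since $\sum_i a_i = 1$ and each source node $i \in \{1,\dots,n\}$ plays $t_1$ or $t_1'$, both inequalities are equalities, yielding the desired partition.

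The main obstacle I anticipate is bookkeeping rather than conceptual: I must verify carefully that attaching the private-product source gadgets to $a$, $b$ and the two copies of node~$1$ does not accidentally create a spurious determined Nash equilibrium in the full network $\snet$ — in particular that the case analysis of Example~\ref{exa:nonash} remains valid once each triangle-adjacent forced node has an extra product available. Since the new product of node $v$ is beneficial to $v$ alone and never to anyone else, this should go through, but it is the step where an oversight would break the reduction, so it deserves the most care.
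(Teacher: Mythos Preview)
Your plan works in spirit but is far more elaborate than necessary, and the converse direction as written has a real gap.

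\medskip

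\noindent\textbf{The gap.} In your backward direction you correctly force $s_a = t_a$ via the triangle argument, and then write ``hence $t_a$ beats $t_1$ for $a$, giving $\sum_{i : s_i = t_1} w_{ia} \le \tfrac12$''. But best-response of $t_a$ over $t_1$ only yields
\[
\sum_{i : s_i = t_1} a_i - \tfrac12 \;\le\; p_a(t_a,s_{-a}) \;=\; w - \theta(a,t_a),
\]
and you have explicitly required $w - \theta(a,t_a) > 0$ so that $a$ never plays $t_0$. So you obtain $\sum_{i\in S} a_i \le \tfrac12 + \epsilon$ and symmetrically $\sum_{i\notin S} a_i \le \tfrac12 + \epsilon$ for some $\epsilon>0$, which does \emph{not} force an exact partition. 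This is fixable by a granularity argument (choose $\epsilon$ below the minimum gap among partial sums of the rational $a_i$, as the paper does later in Property~\ref{partition-tau}), but you have not said so, and it is not mere bookkeeping: without it the reduction is incorrect.

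\medskip

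\noindent\textbf{Comparison with the paper.} The paper's proof is dramatically simpler: it drops the triangle gadget entirely and uses only the PARTITION network of Figure~\ref{fig:partition}. The point is that in that network $\prodset(a)=\{t_1\}$, so in any \emph{determined} equilibrium $s$ one has $s_a=t_1$; the best-response condition $p_a(t_1,s_{-a})\ge p_a(t_0,s_{-a})=0$ then directly gives $\sum_{i:s_i=t_1} a_i \ge \tfrac12$, and symmetrically for $b$, whence equality. No private-product gadgets, no triangle, no granularity argument. Your construction inherits the triangle from Theorem~\ref{thm:np} because there it was needed to rule out the trivial equilibrium; but for \emph{determined} equilibria the requirement $s_a\neq t_0$ is already built into the question, so the triangle is doing no work.
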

\begin{proof}
We use an instance of the PARTITION problem in the form of $n$
positive rational numbers $(a_1,\LL,a_n)$, normalised as in the
previous proof, and the network given in Figure~\ref{fig:partition}.

Suppose now that a solution to the considered instance of the
partition problem exists, that is for some set $S \sse \{1, \LL, n\}$
we have $\sum_{i\in S} a_i = \sum_{i\not\in S} a_i = \frac{1}{2}$.
Take the joint strategy $s$ formed by the following strategies:

\begin{itemize}

\item $t_1$ assigned to each node $i \in S$ and the node $a$,

\item $t'_1$ assigned to each node $i \in \{1, \LL, n\} \setminus S$
and the node $b$.

\end{itemize}

Then $p_a(t_1, s_{-i}) = \frac12 - \theta(a, t_1) = 0$, so  $p_a(t_1, s_{-i}) \geq  p_a(t_0, s_{-i})$.
Analogously $p_b(t'_1, s_{-i}) \geq  p_b(t_0, s_{-i})$. So $s$ is a determined Nash equlibrium
in the strategic game associated with the above network.

Consider now a determined Nash equilibrium $s$ in this game.  Then
$s_a = t_1$ and $p_a(t_1, s_{-i}) \geq p_a(t_0, s_{-i}) = 0$.  So for
$S := \{i \in \{1, \LL, n\} \mid s_i = t_1\}$ we have $\sum_{i\in S}
a_i \geq \frac12$.  Analogously $\sum_{i \not\in S} a_i \geq
\frac12$. Since $\sum_{i=1}^n a_i = 1$, in both cases we have in fact
equalities and hence there exists a solution to the considered
instance of the PARTITION problem.
\end{proof}

Recall that the social network in Example \ref{exa:nonash} used three
products.  The following result shows that to construct a social
network $\snet$ such that $\mathcal{G}(\snet)$ has no Nash equilibrium
in fact at least three products are required.

\begin{theorem}
For a network $\snet$, if there exists a non-empty set $X \subseteq
\products$ such that $|X| \leq 2$ and for all $i \in
\srcnodes(\snet)$, $\prodset(i) \cap X \neq \emptyset$ then
$\mathcal{G}(\snet)$ has a Nash equilibrium and it can be computed in
polynomial time.
\end{theorem}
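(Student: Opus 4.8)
The plan is to reduce to a game in which every player has at most three strategies and then run a monotone best‑response iteration from the bottom.

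First, write $X=\{t_1,t_2\}$ (allowing $t_1=t_2$ when $|X|=1$). The key reduction step is the observation that, in any joint strategy $s$ in which every player's strategy belongs to $X\cup\{t_0\}$, no player $i$ can profit by switching to a product $t\in\prodset(i)\setminus X$: since no neighbour of $i$ plays $t$, we have $\inflset_i^t(s)=\ES$ and the deviation yields payoff $-\theta(i,t)<0$, which is worse than $p_i(t_0,s_{-i})$ (and for a source node the deviation changes nothing, as its payoff stays $\constutil$). Hence it suffices to produce a Nash equilibrium of the game $\hat G$ obtained from $\mathcal G(\snet)$ by shrinking the strategy set of each player $i$ to $\hat S_i:=(\prodset(i)\cap X)\cup\{t_0\}$; this set is non‑empty ($t_0\in\hat S_i$ always, and $\prodset(i)\cap X\neq\ES$ for source nodes), and by the observation every Nash equilibrium of $\hat G$ is one of $\mathcal G(\snet)$.

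Next I would linearly order each $\hat S_i$ by $t_1\prec t_0\prec t_2$ (restricted to whichever of the three actually lie in $\hat S_i$), which makes $\hat S=\prod_i \hat S_i$ a finite lattice under the product order. The crucial monotonicity fact is: moving the opponents' profile $\preceq$‑upwards can only move a neighbour out of ``playing $t_1$'' and into ``playing $t_2$'', so it weakly decreases each $\sum_{j\in\inflset_i^{t_1}(s)}w_{ji}$, weakly increases each $\sum_{j\in\inflset_i^{t_2}(s)}w_{ji}$, and leaves $p_i(t_0,\cdot)=0$ unchanged; a short case check then shows that the $\preceq$‑greatest best response of player $i$ is a monotone function of $s_{-i}$ (source nodes contribute a constant coordinate, since $t_0$ is strictly dominated for them). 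This is exactly where $|X|\le 2$ is used: for three or more products there is no linear order on $\{t_0\}\cup\products$ making every player's best response monotone, consistent with Example~\ref{exa:nonash}. The case $|X|=1$ needs no separate treatment.

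Finally, let $BR:\hat S\to\hat S$ send $s$ to the profile of $\preceq$‑greatest best responses; it is monotone, so by the Knaster--Tarski theorem it has a fixpoint — equivalently, iterating $BR$ from the bottom element $\bot$ of $\hat S$ yields an increasing, hence eventually constant, chain whose limit $s^\ast$ satisfies $s^\ast=BR(s^\ast)$, i.e.\ $s^\ast$ is a Nash equilibrium of $\hat G$ and therefore of $\mathcal G(\snet)$. For the complexity claim, starting at $\bot$ each of the $n$ coordinates can strictly increase at most twice, so a fixpoint is reached within $2n+1$ applications of $BR$, and a single application is polynomial (each player has at most three strategies, and each payoff is a sum of at most $n$ weights minus a threshold). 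The step I expect to require the most care is the case analysis behind monotonicity of the greatest‑best‑response selection — handling ties and source nodes, and checking that the single order $t_1\prec t_0\prec t_2$ is consistent across players with different strategy sets; once that is pinned down, Tarski (or the bottom‑up iteration) completes the proof.
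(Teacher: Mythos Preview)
Your approach is correct and genuinely different from the paper's. The paper proceeds by an explicit iterative construction: it first runs a $t_1$-phase from $\obar{t_0}$ on a subnetwork (dropping source nodes lacking $t_1$), then extends to the full network and repeatedly alternates $t_2$-phases with $t_0$-phases, arguing by a fairly delicate case analysis (using the join-the-crowd property) that the set of $t_2$-players weakly grows, the process terminates, and the limit is a Nash equilibrium. Your route is more structural: after the (sound) reduction to the restricted game $\hat G$, ordering strategies as $t_1\prec t_0\prec t_2$ turns the joint-strategy space into a finite product lattice on which the $\preceq$-greatest best-response map is monotone, and Knaster--Tarski (equivalently, iterating from~$\bot$) yields a fixpoint. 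The monotonicity check you flag as delicate does go through: writing $a=p_i(t_1,s_{-i})$ and $b=p_i(t_2,s_{-i})$, one has $\mathrm{GBR}_i=t_2$ iff $b\ge\max(a,0)$, $\mathrm{GBR}_i=t_1$ iff $a>\max(b,0)$, and $\mathrm{GBR}_i=t_0$ otherwise; since $a$ is weakly decreasing and $b$ weakly increasing in $s_{-i}$, each of these regions is upward-closed in the right way (and source nodes contribute a constant coordinate, as you note). Your argument is cleaner and makes transparent \emph{why} $|X|\le2$ is the threshold---three products admit no linear order making best responses monotone, cf.\ Example~\ref{exa:nonash}---whereas the paper's argument is more elementary and self-contained, avoiding any fixed-point machinery. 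Both give polynomial running time; your $O(n)$ bound on the number of $BR$ rounds is a bit sharper than the paper's $O(n)$ bound on phase iterations (which hides an extra factor from the phases themselves).
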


In particular $\mathcal{G}(\snet)$ has a Nash equilibrium when all
nodes $i$ have the same set of two products.

\begin{proof}
Given an initial joint strategy we call a maximal sequence of best
response deviations to a given strategy $t$ (in an arbitrary order) a
\bfe{$t$-phase}. Let $\snet=(\sgraph,\products,\prodset,\theta)$ where
$\sgraph=(V,E,\weight)$.

First, suppose that $|X|=1$, say $X=\{t_1\}$. Let $\strprofile$ be the
resulting joint strategy after performing a $t_1$-phase starting in
the joint strategy $\obar{t_0}$. We show that $\strprofile$ is a Nash
equilibrium. First note that $\strprofile_j = t_1$ for every $j \in
\srcnodes(\snet)$. Further, in the $t_1$-phase, if a joint strategy
$s^2$ is obtained from $s^1$ by having some nodes switch to product
$t_1$ and $t_1$ is a best response for a node $i$ to $s^1_{-i}$, then
$t_1$ remains a best response for $i$ to $s^2_{-i}$. Indeed, by the
join the crowd property $p_{i}(t_1, s^2_{-i}) \geq p_{i}(t_1,
s^1_{-i})$ and $p_{i}(t_1, s^1_{-i}) \geq p_{i}(t_0, s^1_{-i})$, so
$p_{i}(t_1, s^2_{-i}) \geq p_{i}(t_0, s^2_{-i})$.  Consequently after
the first $t_1$-phase, in the resulting joint strategy $\strprofile$,
each node that has the strategy $t_1$ plays a best response. If for
some $j$, $\strprofile_j = t_0$ then by the definition of the
$t_1$-phase, $j$ is playing his best response as well. Therefore
$\strprofile$ is a Nash equilibrium.

Now suppose that $|X|=2$, say $X=\{t_1,t_2\}$. Let $V_{t_1}=\{j \in
\srcnodes(\snet) \mid t_1 \in \prodset(j)\}$ and $\obar{V}_{t_1}=\{j
\in \srcnodes(\snet) \mid t_1 \not\in \prodset(j)\}$. 
Let $\snet^{t_1}=(\sgraph^{t_1},\products,\prodset,\theta)$, where
$\sgraph^{t_1}$ is the induced subgraph of $\sgraph$ on the nodes $V
\setminus \obar{V}_{t_1}$. Let $\strprofile^{t_1}$ be the resulting
joint strategy in $\snet^{t_1}$ after performing a $t_1$-phase
starting in $\obar{t_0}$. By the previous argument,
$\strprofile^{t_1}$ is a Nash equilibrium in
$\mathcal{G}(\snet^{t_1})$. Now consider the joint strategy
$\strprofile$ in $\mathcal{G}(\snet)$ defined as follows:
\[
\strprofile_i = 
\begin{cases}
t_0 &\mathrm{if}\ i \in \obar{V}_{t_1}\\
\strprofile^{t_1}_i & \mathrm{otherwise}
\end{cases}
\]

Starting at $\strprofile$, we repeatedly perform a $t_2$-phase
followed by a $t_0$-phase. We claim that this process terminates in a
Nash equilibrium in $\mathcal{G}(\snet)$.

First note that if a joint strategy $s^2$ is obtained from $s^1$ by
having some nodes switch to product $t_2$ and $t_2$ is a best response
for a node $i$ to $s^1_{-i}$, then $t_2$ remains a best response for
$i$ to $s^2_{-i}$. The argument is analogous to the one in the previous
case. Therefore after the first $t_2$-phase each node that has the
strategy $t_2$ plays a best response. Call the outcome of the first
$t_2$-phase $s''$.

Now consider a node $i$ that deviated to $t_0$ starting at $s''$ by
means of a best response. By the observation just made, node $i$
deviated from product $t_1$. So, again by the join the crowd property,
this deviation does not affect the property that the nodes that
selected $t_2$ in $s''$ play a best response.  Iterating this
reasoning we conclude that after the first $t_0$-phase each node that
has the strategy $t_2$ continues to play a best response.

By the same reasoning subsequent $t_2$ and $t_0$-phases have the same
effect on the set of nodes that have the strategy $t_2$, namely that
each of these nodes continues to play a best response.

Moreover, this set continues to weakly increase.  Consequently these
repeated applications of the $t_2$-phase followed by the $t_0$-phase
terminate, say in a joint strategy $s'$.  Now suppose that a node $i$
does not play a best response to $s'_{-i}$. Then clearly $i \not\in
\srcnodes(\snet)$. If $s'_i = t_0$, then by the construction $t_2$ is
not a best response, so $t_1$ is a best response.

Suppose $s_i = t_0$. Consider the joint strategy $\strprofile^{t_1}$
which is a Nash equilibrium in $\mathcal{G}(\snet^{t_1})$. We have
$p_i(t_1, s_{-i}^{t_1}) \leq p_i(t_0, s_{-i}^{t_1})$. Since $i \not\in
\srcnodes(\snet)$, we have $\strprofile_i =
\strprofile_i^{t_1}$. Since for all $j \in \obar{V}_{t_1}$, $t_1
\not\in \prodset(j)$ we have $p_i(t_1, s_{-i}) \leq p_i(t_0, s_{-i})$
as well. By the join the crowd property $p_i(t_1, s'_{-i}) \leq
p_i(t_1, s_{-i})$, so $p_i(t_1, s'_{-i}) \leq p_i(t_0, s'_{-i})$,
which yields a contradiction.  Hence node $i$ deviated to $t_0$ from
some intermediate joint strategy $s^1$ by selecting a best response.
So $p_i(t_1, s^1_{-i}) \leq p_i(t_0, s^1_{-i})$. Moreover, by the join
the crowd property $p_i(t_1, s'_{-i}) \leq p_i(t_1, s^1_{-i})$, so
$p_i(t_1, s'_{-i}) \leq p_i(t_0, s'_{-i})$, which yields a
contradiction, as well.

Further, by the construction $s'_i \neq t_2$, so the only alternative
is that $s'_i = t_1$. But then either $t_0$ or $t_2$ is a best
response, which contradicts the construction of $s'$.  We conclude
that $s'$ is a Nash equilibrium in $\mathcal{G}(\snet)$.

Finally, note that each $t$-phase takes at most $n$ steps to
terminate, where $n$ is the number of nodes in $\sgraph$. Since in
each repeated iteration of a $t_2$-phase followed by a $t_0$-phase,
the set of nodes which chose $t_2$ is weakly increasing, this process
terminates after at most $\mathcal{O}(n)$ iterations. Therefore this
Nash equilibrium can be computed in polynomial time.
\end{proof}

\section{Polymatrix games}
\label{sec:polymatrix}
We now apply the results of the previous section to a class of games
that are more general than social network games.

A polymatrix game, see \cite{Jan68,How72}, is a finite strategic game
in which the influence of a pure strategy selected by any player on
the payoff of any other player is always the same, regardless what
strategies other players select. Formally, it is a game
$(S_1,\ldots,S_n,\payoff_1,\ldots,\payoff_n)$ in which for all pairs of
players $i$ and $j$ there exists a partial payoff function
$a^{ij}$ such that for any joint strategy
$\strprofile=(\strprofile_1,\ldots,\strprofile_n)$, the payoff of
player $i$ is given by $\payoff_i(\strprofile):=\sum_{j \neq i}
a^{ij}(\strprofile_i,\strprofile_j)$.

\begin{theorem}
\label{thm:polymatrix-NP}
Deciding whether a polymatrix game has a Nash equilibrium is NP-complete.
\end{theorem}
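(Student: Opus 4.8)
The plan is to show membership in NP and then NP-hardness by reducing from the previously established result. For membership: given a polymatrix game with $n$ players, each having at most $m$ strategies, and the partial payoff functions $a^{ij}$, a candidate Nash equilibrium $\strprofile$ can be verified by checking, for each player $i$ and each alternative strategy $\strprofile_i' \in S_i$, whether $\sum_{j \neq i} a^{ij}(\strprofile_i, \strprofile_j) \geq \sum_{j \neq i} a^{ij}(\strprofile_i', \strprofile_j)$; this is $O(n m)$ inequality tests, each computable in polynomial time, so the problem lies in NP.

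For NP-hardness, the key observation is that every social network game $\difgraph(\snet)$ is a polymatrix game. First I would verify this claim directly from the payoff definition in Subsection~\ref{subsec:sng}: for a non-source node $i$ with $\strprofile_i = t \in \prodset(i)$ we have $\payoff_i(\strprofile) = \bigl(\sum_{j \in \inflset_i^t(\strprofile)} w_{ji}\bigr) - \theta(i,t)$, which can be written as a sum over all $j \neq i$ of a partial payoff $a^{ij}(\strprofile_i, \strprofile_j)$ defined by $a^{ij}(t, t) := w_{ji}$ when $j \in \neighbour(i)$ and $t \in \prodset(i)$, by $a^{ij}(t, t') := 0$ when $t' \neq t$, by $a^{ij}(t_0, \cdot) := 0$, and absorbing the constant threshold term $-\theta(i,t)$ into, say, $a^{i j_0}$ for one fixed $j_0 \neq i$ (legitimate since $n > 1$). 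For a source node $i$ the payoff depends only on $\strprofile_i$, so we set $a^{i j_0}(\strprofile_i, \cdot)$ to be $0$ or $\constutil$ accordingly and all other $a^{ij} := 0$. This construction is clearly polynomial-time computable from the description of $\snet$.

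Given this, the reduction is immediate: from an instance $\snet$ produced by the reduction in the proof of Theorem~\ref{thm:np}, output the corresponding polymatrix game $\difgraph(\snet)$ together with the partial payoff functions constructed above. Since $\difgraph(\snet)$ has a Nash equilibrium if and only if the original PARTITION instance has a solution, and the polymatrix game has the exact same players, strategies, and payoff functions as $\difgraph(\snet)$, it has a Nash equilibrium if and only if $\difgraph(\snet)$ does. Hence deciding Nash equilibrium existence for polymatrix games is NP-hard, and combined with membership in NP, NP-complete. I do not expect a genuine obstacle here; the only point requiring care is the bookkeeping that rewrites the threshold constants and the source-node payoffs as pairwise partial payoffs without changing any player's best-response structure, which the above assignment handles.
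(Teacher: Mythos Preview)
Your proposal is correct and follows essentially the same approach as the paper: verify NP membership directly, then obtain NP-hardness by exhibiting every social network game as a payoff-equivalent polymatrix game and invoking Theorem~\ref{thm:np}. The only cosmetic difference is bookkeeping---the paper distributes the constant terms uniformly across all $n-1$ partners (using $-\theta(i,t)/(n-1)$ and $\constutil/(n-1)$), whereas you lump each constant into a single fixed partner $j_0$; both choices yield the same total payoff $\payoff_i(\strprofile)$ and hence the same Nash equilibria.
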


\begin{proof}
It is clear that the problem is in NP since one can guess a joint
strategy and check whether it is a Nash equilibrium in polynomial
time. To prove the hardness result we show that social network games
are a subclass of polymatrix games. The claim then follows from
Theorem~\ref{thm:np}.

For a social network $\snet=(\sgraph,\products,\prodset,\theta)$ with
$n$ agents, let the associated game be
$\mathcal{G}(\snet)=(S_1,\ldots,S_n,\payoff_1,\ldots,\payoff_n)$. We
translate $\mathcal{G}(\snet)$ into a payoff equivalent polymatrix
game $\mathcal{G}'=(S_1,\ldots,S_n,\payoff_1',\ldots,\payoff_n')$ as
follows. For a joint strategy $\strprofile$ and players $i$ and $j$,
the partial payoff functions are given by

\begin{itemize}
\item for $i \in \srcnodes(\snet)$,

  $a^{ij}(s_i,s_j) := \begin{cases}
                    0 & \text{ if } \strprofile_i=t_0,\\
                    \frac{\constutil}{n-1} & \text{ if } \strprofile_i \in \prodset(i),
                    \end{cases}$

\item for $i \not\in \srcnodes(\snet)$,

$a^{ij}(s_i,s_j) :=\begin{cases}
    0 & \text{ if } \strprofile_i=t_0,\\
    w_{ji}-\frac{\theta(i,\strprofile_i)}{n-1} & \text{ if } j \in \neighbour(i), s_i \neq t_0 \mbox{ and } \strprofile_i=\strprofile_j,\\
    -\frac{\theta(i,\strprofile_i)}{n-1} & \mbox{ otherwise}.
    \end{cases}$

\end{itemize}

The payoff function for player $i$ is then defined as
$\payoff'_i(\strprofile)=\sum_{j \neq i}
a^{ij}(\strprofile_i,\strprofile_j)$. To complete the proof, we show
that for all $i \in \{1,\ldots,n\}$, for all joint strategies
$\strprofile$, we have
$\payoff_i'(\strprofile)=\payoff_i(\strprofile)$. Recall that for a
player $i$ and a joint strategy $\strprofile$,
$\inflset_i^{\strprofile_i}(\strprofile)=\{j \in \neighbour(i) \mid
\strprofile_i=\strprofile_j\}$. We have the following cases:

\begin{itemize}
\item $\strprofile_i=t_0$. Then
  $\payoff_i'(\strprofile)=0=\payoff_i(\strprofile)$.
\item $\strprofile_i \neq t_0$ and $i \in \srcnodes(\snet)$.

Then  $\payoff_i'(\strprofile)=\sum_{j \neq i} \frac{\constutil}{n-1}=
  (n-1) \cdot \frac{\constutil}{n-1}=\payoff_i(\strprofile)$.

\item $\strprofile_i \neq t_0$ and $i \not\in \srcnodes(\snet)$.

$\begin{array}{lll}
\text{Then }\payoff_i'(\strprofile) & = &\sum_{j \in
    \inflset_i^{\strprofile_i}(\strprofile)}
  (w_{ji}-\frac{\theta(i,s_i)}{n-1}) + \sum_{j\neq i, j \not\in
    \inflset_i^{\strprofile_i}(\strprofile)}
 -\frac{\theta(i,s_i)}{n-1}\\
  & = & \sum_{j \in \inflset_i^{\strprofile_i}(\strprofile)} w_{ji} - (n-1) \cdot \frac{\theta(i,s_i)}{n-1}\\
  & = & \payoff_i(\strprofile).
\end{array}$
\end{itemize}
\end{proof}

\section{Nash equilibria: special cases}
\label{sec:Ne-special}

In view of the fact that in general Nash equilibria may not exist we
now consider networks with special properties of the underlying
directed graph. We focus on three natural classes.

\subsection{Directed acyclic graphs}
\label{subsec:dag}

We consider first networks whose underlying graph is a directed
acyclic graph (DAG).  Intuitively, such networks correspond to
hierarchical organizations. This restriction leads to a different
outcome in the analysis of Nash equilibria.

Given a DAG $G := (V,E)$, we use 
a fixed level by level
enumeration $\dagrank()$ of its nodes so that
for all $i,j \in V$
\begin{equation}
 \label{equ:rank}
\mbox{if $\dagrank(i) < \dagrank(j)$, then there is
no path in $\sgraph$ from $j$ to $i$.}
\end{equation}

\begin{theorem}
Consider a network $\snet$ whose underlying graph is a DAG.
\begin{enumerate}[(i)]
\item $\mathcal{G}(\snet)$ always has a non-trivial Nash equilibrium.
\item Deciding whether $\mathcal{G}(\snet)$ has a determined Nash
  equilibrium is NP-complete.
\end{enumerate}
\end{theorem}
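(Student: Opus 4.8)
The plan is to prove (i) constructively by processing the DAG level by level, and to prove (ii) by reducing from PARTITION in a way that is compatible with the DAG structure.

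For part (i), I would build a non-trivial Nash equilibrium greedily using the enumeration $\dagrank()$ satisfying (\ref{equ:rank}). Process the nodes in the order $\dagrank(1), \dagrank(2), \ldots$. For a source node $i$, assign any product from $\prodset(i)$ (this is possible since $\prodset(i) \neq \ES$, and it gives payoff $\constutil > 0$, which is optimal for a source node; this also guarantees non-triviality). For a non-source node $i$, all of its neighbours lie earlier in the enumeration by (\ref{equ:rank}), so their strategies are already fixed; simply let $i$ pick a best response to that fixed partial assignment — either some product maximizing $(\sum_{j \in \inflset_i^t(s)} w_{ji}) - \theta(i,t)$ over $t \in \prodset(i)$, or $t_0$ if all those values are negative. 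Since $i$'s payoff depends only on the strategies of its neighbours, all of which precede it, this choice is never invalidated by later assignments, so the resulting joint strategy is a Nash equilibrium. The only slightly delicate point is confirming that nothing a player does later can change an earlier player's payoff, which is exactly the content of (\ref{equ:rank}) together with the fact that $p_i$ depends only on $\neighbour(i)$.

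For part (ii), membership in NP is immediate (guess a joint strategy, check the at most $n \cdot |\products|$ best-response conditions, as in the proof of Theorem~\ref{thm:np}). For NP-hardness I would reuse the gadget of Figure~\ref{fig:partition}, which is already a DAG: the nodes $1, \ldots, n$ are sources, and $a, b$ are sinks, so there are no cycles. The argument from the proof of Theorem~\ref{thm:nd} then applies verbatim — a determined Nash equilibrium must assign $t_1$ to $a$ and $t_1'$ to $b$, forcing $\sum_{s_i = t_1} a_i \geq \frac12$ and $\sum_{s_i = t_1'} a_i \geq \frac12$, hence (since the $a_i$ sum to $1$) a valid PARTITION; conversely a solution $S$ yields the determined equilibrium assigning $t_1$ to $S \cup \{a\}$ and $t_1'$ to the rest together with $b$. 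The reduction is clearly polynomial.

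The main obstacle is essentially presentational rather than mathematical: part (i) is a straightforward level-by-level construction, and part (ii) is a direct transplant of the PARTITION reduction once one observes the gadget is acyclic. If anything needs care, it is making the "best response computed from already-fixed neighbours stays a best response" argument fully rigorous — but this follows cleanly from the locality of $p_i$ and property (\ref{equ:rank}), and needs no appeal to the join-the-crowd monotonicity.
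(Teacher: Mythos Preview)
Your proposal is correct and matches the paper's proof essentially line for line: part~(i) is the same level-by-level best-response construction along the $\dagrank()$ order, and part~(ii) is exactly the observation that the network of Figure~\ref{fig:partition} used in Theorem~\ref{thm:nd} is already a DAG, so that reduction applies verbatim. The paper additionally remarks that (i) also follows from the later FIP result for DAGs, but your direct construction is the same as the one given.
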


\begin{proof}
~\\ \NI($i$) We proceed by assigning to each node a strategy following
  the order determined by (\ref{equ:rank}). Given a node we assign to
  it a best response to the sequence of strategies already assigned to
  all his neighbours.  (By definition the strategies of other players
  have no influence on the choice of a best response.) This yields a
  non-trivial Nash equilibrium. This result is also an immediate
  consequence of Theorem~\ref{thm:fip-DAG} proved in
  Section~\ref{sec:FIP}.

\NI ($ii$) The claim follows from Theorem~\ref{thm:nd} since the
underlying graph of the network from Figure~\ref{fig:partition} used
in its proof is a DAG.
\end{proof}

Note that when the underlying graph is a DAG all Nash equilibria are
non-trivial. Further, in the procedure described in ($i$), in general
more than one best response can exist.  In that case multiple Nash
equilibria exist.

The above procedure uses the set of best responses
$BR_i(s_{\neighbour(i)})$ of player $i$ to the joint strategy
$s_{\neighbour(i)}$ of his neighbours in $\mathcal{G}(\snet)$.  This
set is defined directly in terms of $s_{\neighbour(i)}$ and $\snet$ as
follows, where $\snet =(\sgraph,\products,\prodset,\theta)$.

Let
\[
\begin{array}{l}
Z_i^{>0}(s_{\neighbour(i)})  :=\{t \in \prodset(i)
\mid \sum\limits_{k \in \neighbour(i), s_k = t} w_{ki} -
\theta(i,t)>0\} \text{, } \\
Z_i^{=0}(s_{\neighbour(i)}) := \{t_0\} \cup \{t \in \prodset(i) \mid
\sum\limits_{k \in \neighbour(i), s_k = t} w_{ki} - \theta(i,t)=0\}.
\end{array}
\]
Then
\[
BR_i(s_{\neighbour(i)}) := 
\begin{cases}
\argmax\limits_{t \in Z_i^{>0}(s_{\neighbour(i)})} \big{(}\sum\limits_{k \in \neighbour(i), s_k = t}
  w_{ki} - \theta(i,t)\big{)} & \textrm{if } Z_i^{>0}(s_{\neighbour(i)}) \neq \emptyset \\
Z_i^{=0}(s_{\neighbour(i)})                                & \textrm{otherwise}
\end{cases}
\]

\smallskip

Finally, we consider the price of anarchy and the price
of stability for the considered class of games. The following simple
result holds.

\begin{theorem}
\label{thm:poa1}
  The price of anarchy and the price of stability for the games associated
  with the networks whose underlying graph is a DAG is
  unbounded.
\end{theorem}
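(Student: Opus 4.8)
The plan is to build, for each bound $K$, a network whose underlying graph is a DAG and whose associated game has price of stability larger than $K$; since the price of anarchy divides the social welfare of a social optimum by that of the \emph{worst} Nash equilibrium while the price of stability divides by that of the \emph{best}, we always have (price of anarchy) $\ge$ (price of stability), so a single family establishes both claims. Observe first that no single DAG network can do the job: a finite DAG has a source node, every source node plays a product in every Nash equilibrium, and the non-source nodes contribute payoff at least $0$ in a Nash equilibrium (since $t_0$ is always available), so every Nash equilibrium of $\mathcal{G}(\snet)$ has social welfare at least $\constutil>0$; as the social optimum has finite social welfare, the ratio is always a finite positive real. Hence a family is genuinely needed, and it should force the \emph{unique} Nash equilibrium to have social welfare bounded by a constant while some non-equilibrium joint strategy has social welfare growing with a parameter.

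For a parameter $m$ I would take the network $\snet_m$ on nodes $r, v, u_1, \ldots, u_m$: the node $r$ is a source with $\prodset(r)=\{t_2\}$ and an edge $r\to v$ of weight $1$; the node $v$ has $\prodset(v)=\{t_1,t_2\}$ with $\theta(v,t_1)=\theta(v,t_2)=\tfrac12$; and each $u_i$ has $v$ as its only neighbour, with $w_{vu_i}=1$, product set $\prodset(u_i)=\{t_1\}$, and $\theta(u_i,t_1)=\tfrac12$. The underlying graph is acyclic, the incoming weights at every non-source node sum to $1$, and all thresholds lie in $(0,1]$, so $\snet_m$ is a legal network.

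The first step is to verify that $\mathcal{G}(\snet_m)$ has a \emph{unique} Nash equilibrium $s$: the source $r$ must play $t_2$; given that, the payoff of $v$ is $\tfrac12$ for $t_2$, $-\tfrac12$ for $t_1$, and $0$ for $t_0$, so its unique best response is $t_2$; and given $s_v=t_2$, each $u_i$ playing $t_1$ obtains the negative payoff $0-\tfrac12$, so its unique best response is $t_0$. Thus $\mathit{SW}(s)=\constutil+\tfrac12$, independent of $m$. The second step is to evaluate the joint strategy $s^{\ast}$ in which $r$ plays $t_2$, $v$ plays $t_1$, and every $u_i$ plays $t_1$: here $v$ incurs the small negative payoff $-\tfrac12$ but each $u_i$ now earns $w_{vu_i}-\tfrac12=\tfrac12$, so $\mathit{SW}(s^{\ast})=\constutil-\tfrac12+\tfrac{m}{2}$, whence the social optimum has social welfare at least $\constutil-\tfrac12+\tfrac{m}{2}$. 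Consequently the price of stability of $\mathcal{G}(\snet_m)$ is at least $\frac{\constutil-1/2+m/2}{\constutil+1/2}$, a quantity with fixed denominator and linearly growing numerator, hence unbounded in $m$; choosing $m$ large enough makes it exceed $K$.

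The only genuine design choice is the middle node $v$: it must have a \emph{strict} incentive to play the ``wrong'' product $t_2$ --- which is why it cannot be a source and must be driven by a source $r$ committed to $t_2$ --- and yet the act of playing $t_2$ must shut every $u_i$ down, while the counterfactual choice $t_1$, costing $v$ itself only $\tfrac12$, lets arbitrarily many $u_i$ collect a positive payoff. The one point to watch is keeping the weights and thresholds consistent with the legality constraints on a network (thresholds in $(0,1]$, incoming weights summing to at most $1$); with the values above this is immediate, and everything else is a routine payoff computation.
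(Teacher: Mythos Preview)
Your proof is correct and follows essentially the same strategy as the paper: build a DAG where a middle node's unique best response (driven by an upstream source) is the ``wrong'' product for a fan of downstream nodes, so the unique Nash equilibrium has bounded welfare while the social optimum grows linearly in the number of leaves. The paper's gadget differs only cosmetically---its middle node $j$ has product set $\{t_2\}$ disjoint from the source's $\{t_1\}$ and so abstains in equilibrium, whereas your $v$ actively picks $t_2$ over $t_1$---and the paper first tries a three-node version before replicating the leaf, while you parametrize by $m$ from the outset; your presentation is if anything a bit cleaner.
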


\begin{proof}
Consider the network depicted in Figure~\ref{fig:poa1}.

\begin{figure}[htb]
\centering
$ \def\objectstyle{\scriptstyle} \def\labelstyle{\scriptstyle}
  \xymatrix@W=10pt @C=40pt @R=5pt{
& & \\
i \ar[r]^{w_{ij}} \ar@{}[r]^<{\{t_1\}}& j \ar[r]^{w_{jk}} \ar@{}[r]^<{\{t_2\}}& k \ar@{}[u]_<{\{t_2\}}\\
}$
\caption{\label{fig:poa1}A network with a high price of anarchy and stability}
\end{figure}
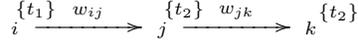

Choose an arbitrary value $r > 0$.  Suppose first that there exists weights and
thresholds such that $\weight_{jk} - (\theta(j,t_2) + \theta(k,t_2)) >
r \constutil$.  (Recall that the payoff of player $i$ is
$\constutil$.)

The game associated with this network has a unique Nash equilibrium,
namely the joint strategy $(t_1, t_0, t_0)$ assigned to the sequence
$(i,j,k)$ of nodes.  Its social welfare is $\constutil$. In contrast, the
social optimum is achieved by the joint strategy $(t_1, t_2, t_2)$ and equals
\[
\constutil + \weight_{jk} - (\theta(j,t_2) + \theta(k,t_2)) > r \constutil.
\]
So for every value $r > 0$ there is a network whose
game has price of anarchy and price of stability higher than $r$.

Suppose now that the inequality $\weight_{jk} - (\theta(j,t_2) +
\theta(k,t_2)) > r \constutil$ does not hold for any choice of weights
and thresholds. (This is for instance the case when $r \constutil \geq
1$, which can be the case as $\constutil$ and $r$ are arbitrary.)  In
that case, we modify the above social network as follows. First, we
replace the node $k$ by $\ceiling{r \constutil + 1}$ nodes, all direct
descendants of node $j$ and each with the product set $\{t_2\}$.  Then
we choose the weights and the thresholds in such a way that the sum of
all these weights minus the sum of all the thresholds for the product
$t_2$ exceeds $r \constutil$. In the resulting game, by the same
argument as above, both the price of anarchy and price of stability
are higher than $r$.
\end{proof}

\subsection{Simple cycles}
\label{subsec:simple}
Next, we consider networks whose underlying graph is a simple cycle.
To fix the notation suppose that the underlying graph is $1 \to 2 \to
\LL \to n \to 1$.  We assume that the counting is done in cyclic order
within \C{\LLn} using the increment operation $i \oplus 1$ and the
decrement operation $i \ominus 1$. In particular, $n \oplus 1 = 1$ and
$1 \ominus 1 = n$.  The payoff functions can then be rewritten as
follows:
\[
p_i(s) := \begin{cases}
        0 & \mathrm{if}\  s_{i} = t_0 \\
        w_{i \ominus 1 \: i} - \theta(i, s_i)    & \mathrm{if}\  s_{i} = s_{i \ominus 1} \ \mathrm{and} \ s_{i} \in P(i) \\
        - \theta(i, s_i)    & \mathrm{otherwise}
        \end{cases}
\]

Clearly $\obar{t_0}$ is a trivial Nash equilibrium.
The following observation clarifies when other Nash equilibria exist.

\begin{theorem} \label{thm:cycle}
Consider a network $\snet$ whose underlying graph is a simple cycle.  
Then $s$ is a non-trivial (respectively, determined) Nash equilibrium
of the game $\mathcal{G}(\snet)$ iff $s$ is of the form $\bar{t}$
for some product $t$ and for all $i$, $p_i(s) \geq 0$.
\end{theorem}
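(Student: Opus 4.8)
The plan is to prove both directions of the equivalence directly, using the rewritten payoff formula for simple cycles. The "if" direction is the easy one: suppose $s = \bar{t}$ for some $t \in \products$ with $p_i(s) \geq 0$ for all $i$. Since every node $i$ has $s_i = t = s_{i \ominus 1}$, each player's payoff is $w_{i \ominus 1\, i} - \theta(i,t) \geq 0$. I must check that no player can profit by deviating. If player $i$ switches to $t_0$, the payoff becomes $0 \leq p_i(s)$. If player $i$ switches to some other product $t' \in P(i)$ with $t' \neq t$, then since the sole neighbour $i \ominus 1$ still plays $t \neq t'$, the "otherwise" case of the payoff formula applies and the new payoff is $-\theta(i,t') < 0 \leq p_i(s)$. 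Hence no deviation is profitable, so $s$ is a Nash equilibrium; it is determined (hence also non-trivial) precisely because $t \neq t_0$.

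For the "only if" direction, let $s$ be a non-trivial Nash equilibrium; I claim it must be of the stated form. First, the key structural observation: if $s_i = t_0$ for some $i$ but $s$ is non-trivial, I derive a contradiction by walking around the cycle. Since $s \neq \bar{t_0}$, there is some node $j$ with $s_j \neq t_0$. Consider the first node after a $t_0$-node, i.e. find $i$ with $s_i = t_0$ and $s_{i \oplus 1} = t' \neq t_0$ — such a pair exists because the cycle contains both a $t_0$-entry and a non-$t_0$-entry, so going around we must hit a transition from $t_0$ to non-$t_0$. For player $i \oplus 1$, the neighbour is $i$, which plays $t_0 \neq t'$, so $p_{i \oplus 1}(s) = -\theta(i \oplus 1, t') < 0$; but then deviating to $t_0$ gives payoff $0 > p_{i \oplus 1}(s)$, contradicting that $s$ is a Nash equilibrium. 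Therefore $s_i \neq t_0$ for every $i$, i.e. $s$ is determined.

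Now, with every $s_i \neq t_0$, I show all the $s_i$ coincide. Suppose not; then there is a node $i$ with $s_i \neq s_{i \ominus 1}$ (again, going around the cycle, if not all values are equal there must be a place where consecutive values differ). For that $i$, the "otherwise" case applies, so $p_i(s) = -\theta(i, s_i) < 0$, and deviating to $t_0$ yields payoff $0$, contradicting the Nash property. Hence $s_i = s_{i \ominus 1}$ for all $i$, which forces $s_i$ to be a common value $t$ for all $i$, i.e. $s = \bar{t}$ with $t \neq t_0$ (and in particular $t \in P(i)$ for all $i$, so the notation $\bar{t}$ is legitimate). Finally, since $s$ is a Nash equilibrium, each player weakly prefers $s_i = t$ to $t_0$, which gives $p_i(s) \geq p_i(t_0, s_{-i}) = 0$. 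This establishes the form $\bar{t}$ with $p_i(s) \geq 0$ for all $i$, and covers the "determined" variant simultaneously since we showed any non-trivial equilibrium here is automatically determined.

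I do not expect a serious obstacle: the argument is a short case analysis on the three-line payoff formula plus a "walk around the cycle to find a transition" argument. The only point requiring minor care is the observation that in a cyclic sequence that is not constant, there must exist adjacent positions with different values — this is immediate but should be stated cleanly, and it is what powers both the "no $t_0$" step and the "all equal" step.
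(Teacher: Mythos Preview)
Your proof is correct and follows essentially the same approach as the paper. The paper's $(\Rightarrow)$ direction is slightly more compact: it picks any node $i$ with $s_i = t$ a product, observes $p_i(s) \geq p_i(t_0,s_{-i}) = 0$ forces $s_{i\ominus 1} = t$, and iterates this single observation backward around the cycle---thereby handling your two separate steps (eliminating $t_0$, then forcing uniformity) in one pass.
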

\begin{proof}

\NI
$(\Ra)$ Consider a non-trival Nash equilibrium $s$. Suppose that $s_i = t$
for a product $t$. We have $p_i(s) \geq p_i(t_0, s_{-i}) = 0$, 
so $s_{i \ominus 1} = t$. Iterating this reasoning we conclude that 
$s = \bar{t}$.
\II

\NI
$(\La)$ Straightforward.
\end{proof}

This yields a straightforward algorithm that allows us to
check whether there exists a non-trivial, respectively, determined, Nash equilibrium.

\begin{theorem}
\label{thm:cycle-NE}
Consider a network $\snet=(\sgraph,\products,\prodset,\theta)$ whose underlying graph is a simple cycle.
There is a procedure that runs in time $\bigo(|\products| \cdot n)$, where $n$
is the number of nodes in $\sgraph$, that decides whether
$\mathcal{G}(\snet)$ has a non-trivial (respectively, determined) Nash equilibrium.
\end{theorem}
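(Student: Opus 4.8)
The plan is to obtain the procedure directly from the characterisation in Theorem~\ref{thm:cycle}. First I would observe that in a simple cycle $1 \to 2 \to \LL \to n \to 1$ every node has exactly one incoming edge, so $\neighbour(i) \neq \emptyset$ for all $i$ and hence $\srcnodes(\snet) = \emptyset$. Consequently, by Theorem~\ref{thm:cycle}, a joint strategy is a non-trivial Nash equilibrium if and only if it is a determined one, and both cases reduce to the same question: is there a product $t \in \products$ such that $\obar{t}$ is a legal joint strategy (i.e., $t \in \prodset(i)$ for all $i$) and $p_i(\obar{t}) \geq 0$ for every $i$? Since for $\obar{t}$ each node $i$ satisfies $s_i = s_{i \ominus 1} = t$, the rewritten cycle payoff formula gives $p_i(\obar{t}) = w_{i \ominus 1\, i} - \theta(i, t)$, so the nonnegativity condition is simply $w_{i \ominus 1\, i} \geq \theta(i, t)$.

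The procedure is then: iterate over all $t \in \products$; for each $t$, scan the $n$ nodes and check that both $t \in \prodset(i)$ and $w_{i \ominus 1\, i} \geq \theta(i,t)$ hold at every node; answer ``yes'' as soon as some $t$ passes all $n$ checks, and ``no'' if none does. Correctness is immediate from Theorem~\ref{thm:cycle} (note that $t_0 \notin \products$, so the trivial equilibrium $\obar{t_0}$ is automatically excluded, as required for the non-trivial case). For the running time, each per-node check is $\bigo(1)$ under the natural input representation — membership in $\prodset(i)$ and the value $\theta(i,t)$ are looked up directly — so the double loop over products and nodes costs $\bigo(|\products| \cdot n)$. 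The same bound can alternatively be argued by processing each node once and forming the set $\{t \in \prodset(i) \mid w_{i \ominus 1\, i} \geq \theta(i,t)\}$, then intersecting these $n$ sets; their sizes sum to at most $|\products| \cdot n$.

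I do not expect a genuine obstacle here: the statement is essentially a complexity-aware restatement of Theorem~\ref{thm:cycle}. The only points needing care are (a) recording that a simple cycle has no source nodes, which is what licenses the simplified cycle payoff formula and makes the non-trivial and determined cases coincide, and (b) being explicit that $\obar{t}$ is a candidate only when $t \in \prodset(i)$ for every $i$, so that a product missing from some agent's product set is correctly discarded. With these in place, the bound $\bigo(|\products| \cdot n)$ is just the cost of the single scan over products and nodes.
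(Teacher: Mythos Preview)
Your proposal is correct and follows essentially the same approach as the paper: both invoke Theorem~\ref{thm:cycle} to reduce the problem to checking, for each product $t$, whether $t \in \prodset(i)$ and $w_{i \ominus 1\, i} \geq \theta(i,t)$ hold at every node, via a double loop costing $\bigo(|\products| \cdot n)$. The only cosmetic difference is that the paper iterates over $t \in \prodset(1)$ rather than all of $\products$, which is an immaterial optimisation.
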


\begin{proof}
Thanks to Theorem~\ref{thm:cycle} we can use the following procedure.
\[\mathsf{VerifyNashCycle(\snet)}\]

$X := P(1)$;

$\mathit{found}:= \textbf{false}$;

{\bf while} $X \neq \emptyset$ {\bf and} \textbf{not }$\mathit{found}$ {\bf do}

\quad choose $t \in X$; $X := X \setminus \{t\}$;

\quad $i := 1$;

\quad $\mathit{found}:= (w_{i \ominus 1 \: i} \geq \theta(i, t))$;

\quad {\bf while} $i \neq n$ {\bf and} $\mathit{found}$ {\bf do}

\quad \quad $i := i + 1$;

\quad \quad $\mathit{found}:= (t \in P(i)) \wedge (w_{i \ominus 1 \: i} \geq \theta(i, t))$

\quad {\bf od}


\textbf{od}

{\bf return} $\mathit{found}$

\smallskip

This procedure returns {\bf true} if a non-trivial (or equivalently, a
determined) Nash equilibrium exists and {\bf false} otherwise. Its
running time is $\bigo(|\products| \cdot n)$.
\end{proof}

Next, we consider the price of anarchy and the price of stability.
We have the following counterpart of Theorem~\ref{thm:poa1}.

\begin{theorem}
\label{thm:poa2}
  The price of anarchy and the price of stability for the games associated
  with the networks whose underlying graph is a simple cycle is
  unbounded.
\end{theorem}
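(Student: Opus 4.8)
The plan is to give an explicit family of simple-cycle networks whose associated games have a Nash equilibrium of arbitrarily low social welfare relative to the social optimum, mirroring the construction in the proof of Theorem~\ref{thm:poa1}. By Theorem~\ref{thm:cycle-NE} (and Theorem~\ref{thm:cycle}) these games always have the trivial Nash equilibrium $\obar{t_0}$, so the idea is to arrange matters so that $\obar{t_0}$ is the \emph{only} Nash equilibrium while a non-equilibrium joint strategy achieves a large positive social welfare. Since on a simple cycle every non-trivial Nash equilibrium must, by Theorem~\ref{thm:cycle}, be of the form $\obar{t}$ with $p_i(\obar{t}) \geq 0$ for all $i$, it suffices to pick thresholds large enough at just one node to rule out every such uniform profile.

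Concretely, I would take the cycle $1 \to 2 \to \cdots \to n \to 1$ (for a suitable $n$ depending on the desired ratio $r > 0$), let every product set be, say, $\{t_1\}$ (or $\products = \{t_1\}$ for all nodes), and set the edge weights to $1$. Then $p_i(\obar{t_1}) = w_{i\ominus 1\, i} - \theta(i,t_1) = 1 - \theta(i,t_1)$. Choose $\theta(1,t_1) = 1$ so that $p_1(\obar{t_1}) = 0$ but set $\theta(i,t_1)$ very small for $i \neq 1$, so that $p_i(\obar{t_1}) = 1 - \varepsilon$ is close to $1$ for each of the $n-1$ other nodes. By Theorem~\ref{thm:cycle} the profile $\obar{t_1}$ \emph{is} a (non-trivial) Nash equilibrium here with social welfare close to $n-1$; so to make $\obar{t_0}$ the only equilibrium I instead push $\theta(1,t_1)$ strictly above $1$, i.e.\ $\theta(1,t_1) = 1 + \delta$, giving $p_1(\obar{t_1}) < 0$, so that $\obar{t_1}$ is not an equilibrium. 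Since $\{t_1\}$ is the only available product everywhere, no $\obar{t}$ profile is an equilibrium, hence by Theorem~\ref{thm:cycle} there is no non-trivial Nash equilibrium and $\obar{t_0}$ (social welfare $0$) is the unique one. Meanwhile the joint strategy $\obar{t_1}$ has social welfare $\sum_i (1 - \theta(i,t_1)) = (n-1)(1-\varepsilon) - \delta$, which exceeds $r\constutil$ once $n$ is large enough and $\varepsilon,\delta$ small. Since the social optimum is at least this large while the unique Nash equilibrium has social welfare $0$, both the price of anarchy and the price of stability are $\infty$ (the ``division by zero'' case), in particular larger than $r$.

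A small wrinkle: to keep $\theta$ a legal threshold function we need $\theta(i,t) \in (0,1]$, so $\theta(1,t_1) = 1+\delta$ is not permitted. I would instead engineer $p_1(\obar{t_1}) < 0$ by lowering the incoming weight: allow the cycle to have $w_{n\,1} = \tfrac12$, say, and $\theta(1,t_1) = 1$, so $p_1(\obar{t_1}) = \tfrac12 - 1 < 0$; the weight constraint $\sum_{j \in \neighbour(1)} w_{j1} \le 1$ is trivially satisfied since node $1$ has a single incoming edge. This keeps everything within the model. Alternatively one can simply reuse the ``replace one node by many descendants'' trick of Theorem~\ref{thm:poa1} adapted to a cycle, but the single-node threshold adjustment is cleaner.

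I do not expect a serious obstacle here: the content is entirely carried by Theorem~\ref{thm:cycle}, which reduces the search for non-trivial equilibria on a cycle to checking uniform profiles $\obar{t}$, and once product sets are singletons there is exactly one candidate to rule out. The only point requiring a little care is staying inside the admissible ranges for weights and thresholds while still forcing $p_1(\obar{t_1}) < 0$ and keeping the social welfare of $\obar{t_1}$ growing without bound in $n$; the weight-reduction device above handles this. I would then conclude, as in Theorem~\ref{thm:poa1}, that for every $r > 0$ there is a simple-cycle network whose game has price of anarchy and price of stability exceeding $r$.
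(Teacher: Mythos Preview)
Your argument is correct, but it differs from the paper's. The paper works with a fixed two-node cycle in which each node has the product set $\{t_1,t_2\}$: the parameters are chosen so that $(t_2,t_2)$ is the social optimum (welfare $1-2\epsilon$) while the only Nash equilibria are $(t_1,t_1)$ (welfare $2\epsilon$) and the trivial one (welfare $0$). Thus the price of anarchy is $\infty$ via the trivial equilibrium, and the price of stability is the finite ratio $(1-2\epsilon)/(2\epsilon)$, which is made to exceed any given $r$ by taking $\epsilon$ small. Your construction instead makes the trivial equilibrium the \emph{only} one (by breaking $\obar{t_1}$ at a single node), so both ratios are immediately $\infty$ by the division-by-zero convention; you do not actually need $n$ to grow, a two-node cycle already suffices. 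Your route is a bit simpler and leans entirely on Theorem~\ref{thm:cycle}, but the paper's version has the advantage of showing that the price of stability can be made arbitrarily large even in the presence of a non-trivial best equilibrium, which is marginally more informative than the pure $\infty$ conclusion. One small correction: after you switch to $w_{n\,1}=\tfrac12$ and $\theta(1,t_1)=1$, the social welfare of $\obar{t_1}$ is $(n-1)(1-\varepsilon) + (\tfrac12 - 1)$, not the expression you wrote with $-\delta$; this does not affect the argument.
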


\begin{proof}
Choose an arbitrary value $r > 0$ and let $\epsilon$ be such that 
$\epsilon < min(\frac{1}{4}, \frac{1}{2 (r + 1)})$. Then both
$1 - 2 \epsilon > 2 \epsilon$ and $\frac{1 - 2 \epsilon}{2 \epsilon} > r$.

Consider the network depicted in Figure~\ref{fig:poa-cycle}.

\begin{figure}[ht]
\centering
$ \def\objectstyle{\scriptstyle} \def\labelstyle{\scriptstyle}
  \xymatrix@W=10pt @C=50pt @R=5pt{
& \\
1 \ar@/^0.7pc/[r]^{} \ar@{}[u]^<{\{t_1,t_2\}}& 2 \ar@/^0.7pc/[l]^{} 
\ar@{}[u]_<{\{t_1,t_2\}} 
}$
\caption{\label{fig:poa-cycle}Another network with a high price of anarchy and stability}
\end{figure}
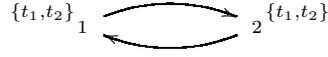

We assume that 
\[
\mbox{
$w_{ 1 2} - \theta(2, t_2) = 1 - \epsilon$,
$w_{ 2 1} - \theta(1, t_2) = - \epsilon$,
$w_{1 2} - \theta(2, t_1) = \epsilon$,
$w_{2 1} - \theta(1, t_1) =  \epsilon$.
}
\]
Then the social optimum is achieved in the joint strategy $(t_2, t_2)$ and
equals $1 - 2 \epsilon$.  There are two Nash equilibria, $(t_1,t_1)$
and the trivial one, with the respective social welfare $2 \epsilon$
and 0.

In the case of the price of anarchy we deal with the division by
zero. We interpret the outcome as $\infty$. The price of stability
equals $\frac{1 - 2 \epsilon}{2 \epsilon}$, so is higher than $r$.
\end{proof}


\subsection{Graphs with no source nodes}
\label{subsec:nosource}

Finally, we consider the case when the underlying graph $\sgraph =
(V,E)$ of a network $\snet$ has no source nodes, i.e., for all
$i \in V$, $\neighbour(i) \neq \emptyset$.  Intuitively, such a
network corresponds to a `circle of friends': everybody has a friend
(a neighbour).  For such networks we prove the following
result.

\begin{theorem} \label{thm:nosource1}
Consider a network $\snet=(\sgraph,\products,\prodset,\theta)$ whose underlying graph has no source nodes.
There is a procedure that runs in time $\bigo(|\products| \cdot n^3)$, where $n$
is the number of nodes in $\sgraph$, that decides whether
$\mathcal{G}(\snet)$ has a non-trivial Nash equilibrium.
 \end{theorem}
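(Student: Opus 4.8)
The plan is, for each product $t \in \products$ in turn, to reduce the question to computing a single monotone greatest fixpoint, and then to show that $\mathcal{G}(\snet)$ has a non-trivial Nash equilibrium at all precisely when one of these fixpoints is non-empty.

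Call a set $U \subseteq V$ \emph{$t$-consistent} if $t \in \prodset(i)$ and $\sum_{j \in \neighbour(i) \cap U} w_{ji} \ge \theta(i,t)$ for every $i \in U$. I would first prove the structural claim: $\mathcal{G}(\snet)$ has a non-trivial Nash equilibrium iff some product $t$ admits a non-empty $t$-consistent set. For the ``only if'' direction, take a non-trivial Nash equilibrium $\strprofile$ and a node $i$ with $\strprofile_i = t \neq t_0$, and set $U := \{j \mid \strprofile_j = t\} \ni i$. Because the graph has no source nodes, the payoff of every node $j$ is given by the second clause of the payoff definition, so the equilibrium inequality $\payoff_j(\strprofile) \ge \payoff_j(t_0, \strprofile_{-j}) = 0$ says exactly $\sum_{k \in \neighbour(j) \cap U} w_{kj} \ge \theta(j, t)$ for every $j \in U$; hence $U$ is $t$-consistent. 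This is the step where the hypothesis is used: a source node could sit in a non-trivial equilibrium playing a product while having an empty, hence deficient, in-neighbourhood, which is precisely the slack that makes the general problem NP-complete (Theorem~\ref{thm:np}). The ``join the crowd'' structure is also what keeps the argument clean: only neighbours of $j$ playing the \emph{same} product as $j$ enter $\payoff_j$, so whatever other products occur in $\strprofile$ is irrelevant to $t$-consistency.

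For the ``if'' direction I would use that $t$-consistent sets are closed under union (enlarging $U$ only increases each sum $\sum_{j \in \neighbour(i) \cap U} w_{ji}$), so there is a largest one, $U^\ast(t)$, which is the greatest fixpoint of the operator $X \mapsto \{i \in X \mid \sum_{j \in \neighbour(i) \cap X} w_{ji} \ge \theta(i,t)\}$ on subsets of $\{i \mid t \in \prodset(i)\}$. It is computed by the evident peeling procedure: start with $U := \{i \mid t \in \prodset(i)\}$ and repeatedly delete every $i \in U$ with $\sum_{j \in \neighbour(i) \cap U} w_{ji} < \theta(i,t)$, halting when nothing is deleted. If $U^\ast(t)$ is non-empty, let $\strprofile$ assign $t$ to each node of $U^\ast(t)$ and $t_0$ to every other node. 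A node $i \in U^\ast(t)$ receives $\sum_{j \in \neighbour(i) \cap U^\ast(t)} w_{ji} - \theta(i,t) \ge 0$ (as $U^\ast(t)$ is $t$-consistent), whereas $t_0$ gives $0$ and any $t' \neq t$ gives $-\theta(i,t') < 0$, so $t$ is a best response. For $i \notin U^\ast(t)$: if $i$ was deleted during peeling then $\sum_{j \in \neighbour(i) \cap U} w_{ji} < \theta(i,t)$ for the set $U$ at that round, which still contained $U^\ast(t)$, so by monotonicity $\sum_{j \in \neighbour(i) \cap U^\ast(t)} w_{ji} < \theta(i,t)$; and if $i$ was never in the initial set then $t \notin \prodset(i)$. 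Either way every product available to $i$ yields a negative payoff against $\strprofile_{-i}$, so $t_0$ is a best response. Hence $\strprofile$ is a (non-trivial) Nash equilibrium, which completes both directions.

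The algorithm is therefore: run the peeling procedure once for each $t \in \products$, and answer ``yes'' iff some $U^\ast(t)$ is non-empty. Each peeling performs at most $n$ rounds (every non-final round deletes a node), each round recomputing at most $n$ sums of at most $n$ terms with $\bigo(1)$ membership tests, i.e. $\bigo(n^3)$ per product and $\bigo(|\products| \cdot n^3)$ overall, as claimed. The one genuine obstacle is the ``only if'' step: recognising that an arbitrary, possibly multi-product, non-trivial equilibrium already forces a single-product one. This rests on combining the ``join the crowd'' property with the absence of source nodes; once that is secured, the remainder is a routine greatest-fixpoint computation together with its complexity accounting.
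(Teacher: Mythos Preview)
Your proof is correct and arrives at exactly the same algorithm and characterization as the paper: compute, for each product $t$, the greatest fixpoint $U^\ast(t)$ of the peeling operator (the paper calls it $X_t$), and answer ``yes'' iff some $U^\ast(t)$ is non-empty; the running time analysis is also identical.

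The one noteworthy difference is in the correctness argument for the ``only if'' direction. The paper takes a detour through an auxiliary structural notion, \emph{self-sustaining strongly connected subgraphs}: it first shows (Lemma~\ref{lm:Ne-struct}) that any non-trivial equilibrium contains, for every product used, a self-sustaining SCS, then (Lemma~\ref{lm:Ne-nosrc}) that the existence of such an SCS is equivalent to the existence of a non-trivial equilibrium, and only then (Lemma~\ref{lm:proc-Ne-exists}) links this to $X_t \neq \emptyset$. You bypass this entirely: observing directly that the set of $t$-players in a non-trivial equilibrium is already what you call $t$-consistent, you get $U^\ast(t) \neq \emptyset$ in one step. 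This is more economical for the theorem at hand; the paper's longer route yields the SCS structure lemma, which it advertises as being of independent interest, but which is not needed for Theorem~\ref{thm:nosource1} itself.
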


The proof of Theorem \ref{thm:nosource1} requires some
characterization results that are of independent interest.  The
following concept plays a crucial role.  Here and elsewhere we only
consider subgraphs that are \emph{induced} and identify each such
subgraph with its set of nodes. (Recall that $(V',E')$ is an induced
subgraph of $(V,E)$ if $V' \sse V$ and $E' = E \cap (V' \times V')$.)

We say that a (non-empty) strongly connected subgraph (in short, SCS)
$C_t$ of $\sgraph$ is \bfe{self-sustaining} for a product $t$ if for
all $i \in C_t$,

\begin{itemize}
\item $t \in \prodset(i)$,
\item $\sum\limits_{j \in \neighbour(i)\cap C_t} w_{ji} \geq
  \theta(i,t)$.
\end{itemize}

An easy observation is that if $\snet$ is a network with no source
nodes, then it always has a trivial Nash equilibrium,
$\obar{t_0}$. The following lemma states that for such networks every
non-trivial Nash equilibrium satisfies a structural property which
relates it to the set of self- sustaining SCSs in the underlying
graph. We use the following notation: for a joint strategy
$\strprofile$ and a product $t$, $\agents_t(\strprofile):=\{i \in V
\mid \strprofile_i=t\}$ and $\prodset(\strprofile):=\{t \mid \exists i
\in V \text{ with }s_i=t\}$.

\begin{lemma}
\label{lm:Ne-struct}
Let $\snet=(\sgraph,\products,\prodset,\theta)$ be a network whose
underlying graph has no source nodes. If $\strprofile \neq \obar{t_0}$
is a Nash equilibrium in $\mathcal{G}(\snet)$, then for all products
$t \in \prodset(\strprofile) \setminus \{t_0\}$ and $i \in
\agents_t(s)$ there exists a self-sustaining SCS $C_t \subseteq
\agents_t(\strprofile)$ for $t$ and a $j \in C_t$ such that $j \to^*
i$.
\end{lemma}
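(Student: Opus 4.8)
The plan is to fix a non-trivial Nash equilibrium $\strprofile$, a product $t \in \prodset(\strprofile) \setminus \{t_0\}$, and a node $i \in \agents_t(\strprofile)$, and to locate the desired self-sustaining SCS inside $\agents_t(\strprofile)$ by tracing backwards along ``witnessing'' neighbours. The first observation is the local Nash condition at any node $k$ with $\strprofile_k = t$: since $\payoff_k(\strprofile) \geq \payoff_k(t_0, \strprofile_{-k}) = 0$, we get $\sum_{j \in \inflset_k^t(\strprofile)} w_{jk} \geq \theta(k,t)$, i.e.\ $\sum_{j \in \neighbour(k) \cap \agents_t(\strprofile)} w_{jk} \geq \theta(k,t) > 0$. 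In particular $\neighbour(k) \cap \agents_t(\strprofile) \neq \emptyset$: every node playing $t$ has at least one in-neighbour that also plays $t$. This is the only consequence of the Nash property that I expect to use.

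Next I would work entirely inside the induced subgraph $H := \sgraph[\agents_t(\strprofile)]$, which is nonempty (it contains $i$). By the previous paragraph every node of $H$ has an incoming edge within $H$, so $H$ has no source nodes. The plan is then to show: in any finite directed graph with no source nodes, from every node $i$ one can reach (in the reverse direction) a strongly connected subgraph $C$ such that $C \to^* i$ and every node of $C$ has all of its $H$-in-neighbours already ``accounted for'' --- more precisely, I want a terminal strongly connected component in a suitable sense. Concretely, start from $i$ and repeatedly follow an incoming edge backwards; since the graph is finite this backward walk must eventually revisit a node, producing a cycle $D$ with $D \to^* i$. Then let $C$ be the set of nodes lying on some cycle reachable backward from within $D$ --- equivalently, take the strongly connected component structure of $H$ and pick a \emph{source component} of the condensation among those components that can reach $i$; such a component $C$ is strongly connected, satisfies $C \to^* i$, and --- being a source in the condensation --- has the property that for each $k \in C$, \emph{all} in-neighbours of $k$ in $H$ already lie in $C$, i.e.\ $\neighbour(k) \cap C = \neighbour(k) \cap \agents_t(\strprofile)$.

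Finally I would verify that this $C$ is self-sustaining for $t$. Membership $t \in \prodset(k)$ for $k \in C$ is immediate since $\strprofile_k = t$ is a legal strategy. For the weight condition: for $k \in C$ we have, using the local Nash inequality from the first step together with $\neighbour(k) \cap C = \neighbour(k) \cap \agents_t(\strprofile)$,
\[
\sum_{j \in \neighbour(k) \cap C} w_{jk} \;=\; \sum_{j \in \neighbour(k) \cap \agents_t(\strprofile)} w_{jk} \;\geq\; \theta(k,t).
\]
Thus $C$ is a self-sustaining SCS contained in $\agents_t(\strprofile)$ with $C \to^* i$, as required. The main obstacle is the purely graph-theoretic step: making precise that a source component of the condensation of $H$, chosen among components reaching $i$, both reaches $i$ and absorbs all its own in-neighbours; once that is stated cleanly, the weight inequality is a one-line consequence of the Nash condition, and I would keep that part brief.
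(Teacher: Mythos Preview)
Your proposal is correct and follows essentially the same route as the paper. The paper builds the set $\spred$ of $t$-playing nodes that can reach $i$ (your $H$ restricted to the ancestors of $i$) and then takes a minimal strongly connected component of $\spred$ in the edge-induced order; this is exactly your ``source component of the condensation among those reaching $i$'', and the verification that it is self-sustaining is the same one-line use of the Nash inequality $\sum_{j \in \inflset_k^t(\strprofile)} w_{jk} \geq \theta(k,t)$.
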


\begin{proof}
Suppose $s \neq \obar{t_0}$ is a Nash equilibrium. Take any product $t
\neq t_0$ and an agent $i$ such that $\strprofile_i=t$ (by assumption,
at least one such $t$ and $i$ exists).  Recall that
$\inflset_i^t(\strprofile)$ is the set of neighbours of $i$ who
adopted the product $t$ in $s$.  Consider the set of nodes
$\spred:=\bigcup_{m \in \nat} \spred_m$, where
\begin{itemize}
\item $\spred_0:=\{i\}$,

\item $\spred_{m+1}:=\spred_m \cup \bigcup_{j \in \spred_m} \inflset_j^t(\strprofile)$.

\end{itemize}

By construction for all $j \in \spred$, $s_j = t$ and
$\inflset_j^t(\strprofile) \sse \spred$.  Moreover, since
$\strprofile$ is a Nash equilibrium, we also have $\sum\limits_{k \in
  \inflset_j^t(\strprofile)} w_{kj} \geq \theta(j,t)$.

Consider the partial ordering $<$ between the strongly connected
components of the graph induced by $\spred$ defined by: $C < C'$ iff
$j \to k$ for some $j \in C$ and $k \in C'$. Now take some SCS $C_t$
induced by a strongly connected component that is minimal in the $<$
ordering.
Then for all $k \in C_t$ we have $\inflset_k^t(\strprofile) \sse C_t$
and hence
$\inflset_k^t(\strprofile) \sse \neighbour(k)\cap C_t$.
This shows that $C_t$ is self-sustaining.

Moreover, by the construction of $\spred$ for all $j \in \spred$, and
a fortiori for all $j \in C_t$, we also have $j \to^* i$. Since the
choice of $t$ and $i$ was arbitrary, the claim follows.
\end{proof}

\begin{figure}[ht]
\centering
$ \def\objectstyle{\scriptstyle} \def\labelstyle{\scriptstyle}
  \xymatrix@W=10pt @C=20pt{
& 1 \ar[rd]^{} \ar@{}[rd]^<{\{t_1,t_2\}}\\
3 \ar[ru]^{} \ar@{}[ru]^<{\{t_1,t_2,t_3,t_4\}}& & 2 \ar[ll]^{} \ar@{}[lu]_<{\{t_1,t_2\}} \ar[ddl]{}\\
\\  
& 4 \ar[dr]{} \ar@{}[dr]^<{\{t_3,t_4\}} \ar[uul]{}\\
6 \ar[ur]{}\ar[uuu]{} \ar@{}[uuu]^<{\{t_3,t_4\}}& & 5 \ar[ll]{} \ar@{}[uuu]_<{\{t_3,t_4\}} \\
}
$
\caption{\label{fig:lem-converse}An equitable network}
\end{figure}
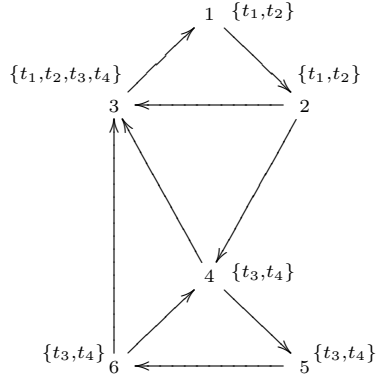 

The converse of Lemma~\ref{lm:Ne-struct} does not hold. Indeed,
consider the equitable network given in Figure
\ref{fig:lem-converse}. The product set of each agent is marked next
to the node denoting the agent. Assume that the threshold of each node
is a constant smaller than $\frac{1}{k}$, where $k$ is the number of
incoming edges. So each agent has a non-negative payoff when he adopts
any product adopted by some of his neighbours. Consider the joint
strategy $\strprofile$ in which agents 1, 2 and 3 adopt product $t_1$
and agents 4, 5 and 6 adopt product $t_3$, i.e.\ ,
$\strprofile=(t_1,t_1,t_1,t_3,t_3,t_3)$. It follows from the
definition that $\strprofile$ satisfies the following condition of
Lemma~\ref{lm:Ne-struct}:
\begin{itemize}
\item for all products $t \in \prodset(\strprofile) \setminus \{t_0\}$
  and $k \in \agents_t(s)$ there exists a self-sustaining SCS $C_t
  \subseteq \agents_t(\strprofile)$ for $t$ and $j \in S_t$ such that
  $j \to^* k$.
\end{itemize}

However, $\strprofile$ is not a Nash equilibrium since agent 3 has the
incentive to deviate to product $t_3$. Also, note that the joint strategy
$\strprofile'=(t_0,t_0,t_3,t_3,t_3,t_3)$ is a Nash equilibrium.


Using Lemma~\ref{lm:Ne-struct}, we can now provide a necessary and
sufficient condition for the existence of non-trivial Nash equilibria
for the considered networks.

\begin{lemma}
\label{lm:Ne-nosrc}
Let $\snet=(\sgraph,\products,\prodset,\theta)$ be a network
whose underlying graph has no source nodes. The joint strategy
$\obar{t_0}$ is a unique Nash equilibrium in $\mathcal{G}(\snet)$ iff
there does not exist a product $t$ and a self-sustaining SCS $C_t$ for
$t$ in $\sgraph$.
\end{lemma}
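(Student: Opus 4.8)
The plan is to prove both directions of the iff, with the forward direction ($\obar{t_0}$ unique $\To$ no self-sustaining SCS) being essentially the contrapositive of a direct construction, and the backward direction following from Lemma~\ref{lm:Ne-struct}.

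For the direction ``if there is no self-sustaining SCS, then $\obar{t_0}$ is the unique Nash equilibrium'': I would argue by contraposition. Suppose $\strprofile \neq \obar{t_0}$ is a Nash equilibrium in $\mathcal{G}(\snet)$. Then some agent $i$ has $\strprofile_i = t$ for some product $t \neq t_0$, so $t \in \prodset(\strprofile) \setminus \{t_0\}$ and $i \in \agents_t(\strprofile)$. Applying Lemma~\ref{lm:Ne-struct} directly to this $t$ and $i$ yields a self-sustaining SCS $C_t \subseteq \agents_t(\strprofile)$ for $t$ in $\sgraph$. Hence a self-sustaining SCS exists, which is the contrapositive of what we want.

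For the converse direction ``if a self-sustaining SCS $C_t$ for some product $t$ exists, then $\obar{t_0}$ is not the unique Nash equilibrium'': here I need to build a non-trivial Nash equilibrium. The natural candidate is the joint strategy in which every agent in $C_t$ plays $t$ and every other agent plays $t_0$. The agents in $C_t$ are happy: by the self-sustaining condition $\sum_{j \in \neighbour(i) \cap C_t} w_{ji} \geq \theta(i,t)$, so each $i \in C_t$ receives payoff $\geq 0$ from playing $t$, which is at least the payoff of deviating to $t_0$; and since $C_t$ collects only the $t$-playing neighbours, playing any other product or $t_0$ cannot do better (no neighbour outside $C_t$ plays $t$, and no neighbour plays any product other than $t$ or $t_0$). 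The agents outside $C_t$ play $t_0$ and get payoff $0$; the potential obstacle is that such an agent might have an incentive to switch to some product. This is not automatically ruled out, so the honest candidate is not quite a Nash equilibrium in general --- but it need not be: we only need to show $\obar{t_0}$ is not the \emph{unique} equilibrium. The right fix is to start from this joint strategy and run a best-response improvement sequence among the agents currently playing $t_0$ (e.g., a ``$t$-phase'' style argument as in the proof of the two-product theorem, or more generally let dissatisfied $t_0$-players switch to best responses), invoking the ``join the crowd'' property to see that once an agent in $C_t$ plays $t$ it keeps playing a best response, and that the process among finitely many nodes must stabilize at a Nash equilibrium distinct from $\obar{t_0}$ (it still has all of $C_t$ playing $t$).

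The main obstacle is precisely this last point: verifying that the improvement process initiated from the ``$C_t$ plays $t$, rest plays $t_0$'' configuration terminates and does so without ever destabilizing $C_t$. For the termination one must be slightly careful, since allowing $t_0$-players to pick arbitrary best responses could in principle cycle; the clean route is to restrict attention to a suitable monotone process (adding products one at a time, $t$-phase by $t$-phase, as in the two-product proof) and use the ``join the crowd'' monotonicity to show that the set of satisfied agents only grows, so the process halts. Once this is in place, the resulting terminal joint strategy is a Nash equilibrium that is non-trivial (it agrees with the starting configuration on $C_t$), completing the proof.
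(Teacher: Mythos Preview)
Your backward direction (no self-sustaining SCS $\Rightarrow$ $\obar{t_0}$ unique) via Lemma~\ref{lm:Ne-struct} is exactly what the paper does.

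For the forward direction your instinct is right but you make it harder than necessary. The paper avoids any improvement-path or scheduling argument: given a self-sustaining SCS $C_t$ for product $t$, it directly defines the closure
\[
R := \bigcup_{m \in \nat} R_m, \qquad R_0 := C_t, \qquad R_{m+1} := R_m \cup \Bigl\{\, j \;\Big|\; t \in \prodset(j) \ \text{and} \ \sum_{k \in \neighbour(j) \cap R_m} w_{kj} \geq \theta(j,t) \Bigr\},
\]
and takes the joint strategy with $\strprofile_j = t$ for $j \in R$ and $\strprofile_j = t_0$ otherwise. This $R$ is exactly the terminal state of a single $t$-phase started from your candidate, so your monotone process and the paper's closure compute the same object; but writing down $R$ directly eliminates all your worries about termination and cycling. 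In particular only the one product $t$ is ever needed --- there is no reason to let $t_0$-players pick arbitrary best responses or to run phases ``one product at a time''. The Nash verification is then immediate: agents in $R$ get payoff $\geq 0$ from $t$ and no neighbour plays anything besides $t$ or $t_0$; agents outside $R$ cannot profitably switch to $t$ (by maximality of $R$) nor to any other product (no neighbour plays it).
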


\begin{proof}

\noindent($\Leftarrow$) By Lemma~\ref{lm:Ne-struct}.


\noindent ($\Rightarrow$) Suppose there exists a self-sustaining SCS
$C_t$ for a product $t$. Let $R$ be the set of nodes reachable from
$C_t$ which eventually can adopt product $t$. Formally, $R:=\bigcup_{m
  \in \nat} R_m$ where
\begin{itemize}
\item $R_0:=C_t$,
\item $R_{m+1}:=R_m \cup \{j \mid t \in \prodset(j) \mbox{ and } \sum\limits_{k \in \neighbour(j) \cap R_m} w_{kj} \geq \theta(j,t)\}$.
\end{itemize}

Let $s$ be the joint strategy such that for all $j \in R$, we have
$\strprofile_j=t$ and for all $k \in V \setminus R$, we have
$\strprofile_k=t_0$. It follows directly from the definition of $R$
that $\strprofile$ satisfies the following properties:
\begin{enumerate}
\item[(P1)] for all $i \in V$, $\strprofile_i=t_0$ or $\strprofile_i=t$,
\item[(P2)] for all $i \in V$, $\strprofile_i \neq t_0$ iff $i \in R$,
\item[(P3)] for all $i \in V$, if $i \in R$ then $\payoff_i(s) \geq 0$.
\end{enumerate}

We show that $\strprofile$ is a Nash equilibrium. Consider first any
$j$ such that $\strprofile_j=t$ (so $\strprofile_j\neq t_0$). By (P2)
$j \in R$ and by (P3) $\payoff_j(s)\geq 0$. Since
$\payoff_j(s_{-j},t_0)=0\leq \payoff_j(s)$, player $j$ does not gain
by deviating to $t_0$. Further, by (P1), for all $k \in
\neighbour(j)$, $\strprofile_k=t$ or $\strprofile_k=t_0$ and therefore
for all products $t' \neq t$ we have $\payoff_j(\strprofile_{-j},t')
<0 \leq \payoff_j(s)$. Thus player $j$ does not gain by deviating to
any product $t' \neq t$ either.

Next, consider any $j$ such that $\strprofile_j=t_0$. We have
$\payoff_j(\strprofile)=0$ and from (P2) it follows that $j \not\in
R$. By the definition of $R$ we have $\sum\limits_{k \in \neighbour(j)
  \cap R} w_{kj} < \theta(j,t)$. Thus
$\payoff_j(\strprofile_{-j},t) <0$. Moreover, for all products $t'
\neq t$ we also have $\payoff_j(\strprofile_{-j},t') <0$ for the same
reason as above. So player $j$ does not gain by a unilateral
deviation. We conclude that $\strprofile$ is a Nash equilibrium.
\end{proof}

Next, for a product $t \in \products$, we define the set
$X_t:=\bigcap_{m \in \nat} X_t^m$, where
\begin{itemize}
\item $X_t^0:=\{i \in V \mid t \in \prodset(i)\}$,
\item $X_t^{m+1}:=\{i \in V \mid \sum_{j \in \neighbour(i) \cap
  X_t^m} w_{ji} \geq \theta(i,t)\}$.
\end{itemize}

We need the following characterization result.

\begin{lemma}
\label{lm:proc-Ne-exists}
Let $\snet$ be a network
whose underlying graph has no source nodes. 
There exists a non-trivial Nash equilibrium in
$\mathcal{G}(\snet)$ iff there exists a product $t$ such that $X_t
\neq \emptyset$.
\end{lemma}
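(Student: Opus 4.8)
The plan is to prove the two directions separately, using the characterization of Lemma~\ref{lm:Ne-nosrc} for the backward direction and Lemma~\ref{lm:Ne-struct} for the forward direction, while establishing two auxiliary facts about the set $X_t$: first, that $X_t$ is exactly the maximal set of nodes that can ``sustain'' product $t$ in the sense of the iteration, and second, that $X_t \neq \emptyset$ forces the existence of a self-sustaining SCS for $t$ inside $X_t$.

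For the direction ($\Leftarrow$), suppose $X_t \neq \emptyset$ for some product $t$. The key observation is that $X_t$ is a fixed point of the operator $Y \mapsto \{i \mid \sum_{j \in \neighbour(i) \cap Y} w_{ji} \geq \theta(i,t)\}$ restricted appropriately: since $X_t^0 \supseteq X_t^1 \supseteq \cdots$ is a decreasing chain (one checks $X_t^1 \subseteq X_t^0$ because membership in $X_t^1$ requires some neighbour, hence $t$ in its product set by the threshold being positive... actually one must check monotonicity of the operator and that $X_t^1\subseteq X_t^0$), the intersection $X_t$ satisfies: for all $i \in X_t$, $t \in \prodset(i)$ and $\sum_{j \in \neighbour(i) \cap X_t} w_{ji} \geq \theta(i,t)$. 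Now I would argue that the subgraph induced by $X_t$ contains a self-sustaining SCS for $t$: take a strongly connected component $C$ of $X_t$ that is minimal in the ``$C < C'$ iff there is an edge from $C$ to $C'$'' ordering (as in the proof of Lemma~\ref{lm:Ne-struct}). Minimality means every edge into a node of $C$ from within $X_t$ originates in $C$ itself, so $\neighbour(i) \cap X_t = \neighbour(i) \cap C$ for $i \in C$, giving $\sum_{j \in \neighbour(i) \cap C} w_{ji} \geq \theta(i,t)$; hence $C$ is self-sustaining. Wait — one must be careful that a minimal component has \emph{all} its $X_t$-incoming edges inside it; that is exactly what minimality in this ordering gives, since an incoming edge from outside $C$ would come from a component strictly below $C$. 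Having produced a self-sustaining SCS, Lemma~\ref{lm:Ne-nosrc} (contrapositive of its ($\Leftarrow$) half, i.e. its ($\Rightarrow$) half) yields a non-trivial Nash equilibrium.

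For the direction ($\Rightarrow$), suppose $\mathcal{G}(\snet)$ has a non-trivial Nash equilibrium $\strprofile \neq \obar{t_0}$. Pick any product $t \in \prodset(\strprofile) \setminus \{t_0\}$ and any $i \in \agents_t(\strprofile)$. By Lemma~\ref{lm:Ne-struct} there is a self-sustaining SCS $C_t \subseteq \agents_t(\strprofile)$ for $t$. I claim $C_t \subseteq X_t$, which immediately gives $X_t \neq \emptyset$. To see $C_t \subseteq X_t$, I show by induction on $m$ that $C_t \subseteq X_t^m$. For $m=0$: every $i \in C_t$ has $t \in \prodset(i)$ by the definition of self-sustaining, so $C_t \subseteq X_t^0$. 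For the inductive step, if $C_t \subseteq X_t^m$, then for $i \in C_t$ we have $\neighbour(i) \cap C_t \subseteq \neighbour(i) \cap X_t^m$, so $\sum_{j \in \neighbour(i)\cap X_t^m} w_{ji} \geq \sum_{j \in \neighbour(i) \cap C_t} w_{ji} \geq \theta(i,t)$, whence $i \in X_t^{m+1}$. Taking the intersection over all $m$ gives $C_t \subseteq X_t$, so $X_t \neq \emptyset$, as required.

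The main obstacle I anticipate is the ($\Leftarrow$) direction's step that $X_t \neq \emptyset$ yields a self-sustaining SCS — specifically, justifying that the decreasing iteration stabilizes to a genuine fixed point of the threshold operator (so that every $i \in X_t$ really does satisfy $\sum_{j \in \neighbour(i) \cap X_t} w_{ji} \geq \theta(i,t)$ rather than merely the one-step condition), and then the minimal-component argument. Everything else is a direct application of the already-established lemmas and a routine induction.
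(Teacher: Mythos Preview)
Your proof is correct. The ($\Rightarrow$) direction matches the paper's: both show that any self-sustaining SCS $C_t$ satisfies $C_t \subseteq X_t$ (the paper asserts this ``follows directly from the definitions''; your induction on $m$ is the unpacked version), and then invoke the existence of such a $C_t$ from a non-trivial equilibrium.

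For ($\Leftarrow$) you take a genuinely different route. The paper bypasses self-sustaining SCSs entirely: it defines the joint strategy in which every node in $X_t$ plays $t$ and all others play $t_0$, and verifies directly that this is a Nash equilibrium, using only the fixed-point property of $X_t$ (nodes in $X_t$ have payoff $\geq 0$; nodes outside $X_t$ cannot profitably adopt $t$ because failing the defining inequality of $X_t^{M+1} = X_t$ is exactly having negative payoff from $t$, and no other product is played by anyone). Your approach instead extracts a self-sustaining SCS from $X_t$ via a minimal-component argument and then appeals to Lemma~\ref{lm:Ne-nosrc}. Both are valid, but the paper's direct construction has a practical payoff: it exhibits the equilibrium explicitly as a function of $X_t$, which is what makes the subsequent remark---that a non-trivial equilibrium can be \emph{computed} in polynomial time---immediate. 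Your route yields the same conclusion only after tracing back through the (also constructive) proof of Lemma~\ref{lm:Ne-nosrc}. As for the obstacle you flag, it is harmless: finiteness of $V$ forces the decreasing chain $X_t^0 \supseteq X_t^1 \supseteq \cdots$ to stabilize at some $X_t^M = X_t$, and then $X_t^{M+1} = X_t^M$ is exactly the fixed-point inequality you need; this same inequality (with strictly positive threshold) also rules out the possibility that your minimal component is a singleton.
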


\begin{proof}
Suppose $\snet=(\sgraph,\products,\prodset,\theta)$.

\noindent
$(\Rightarrow)$ It follows directly from the definitions
that if there is a self-sustaining SCS $C_t$ for product $t$,
then $C_t \subseteq X_t$. Suppose now that for all $t$, $X_t =
\emptyset$. Then for all $t$, there is no self-sustaining SCS
for $t$. So by Lemma~\ref{lm:Ne-nosrc}, $\obar{t_0}$ is a unique
Nash equilibrium.

\noindent$(\Leftarrow)$ Suppose there exists $t$ such that $X_t \neq
\emptyset$. Let $\strprofile$ be the joint strategy defined as
follows:
\[
s_i:= \begin{cases}
      t & \mathrm{if}\ i \in X_t \\
      t_0 & \mathrm{if}\ i \not\in X_t
\end{cases}
\]

By the definition of $X_t$, for all $i \in X_t$,
$\payoff_i(\strprofile) \geq 0$. So no player $i \in X_t$ gains by
deviating to $t_0$ (as then his payoff would become $0$) or to a
product $t' \neq t$ (as then his payoff would become negative since no
player adopted $t'$). Also, by the definition of $X_t$ and of the
joint strategy $\strprofile$, for all $i \not\in X_t$ and for all $t' \in
\prodset(i)$, $\payoff_i(t',\strprofile_{-i})<0$. Therefore, no player
$i \not\in X_t$ gains by deviating to a product $t'$ either. It
follows that $\strprofile$ is a Nash equilibrium.
\end{proof}

This theorem leads to a direct proof of the claimed result.
\III

\NI \emph{Proof of Theorem~\ref{thm:nosource1}.}

\NI We use the following
procedure for checking for the existence of a non-trivial Nash
equilibrium. 
\[\mathsf{VerifyNash(\snet)}\]

$\mathit{found}:= \mathbf{false}$;

{\bf while} $\products \neq \emptyset$ {\bf and} $\neg \mathit{found}$ {\bf do}

\quad choose $t \in \products$;

\quad $\products := \products - \{t\}$;

\quad compute  $X_t$;

\quad $\mathit{found}:= (X_t \neq \emptyset)$

{\bf od}

{\bf return} $\mathit{found}$

\smallskip

On the account of Lemma~\ref{lm:proc-Ne-exists} this
procedure returns {\bf true} if a
non-trivial Nash equilibrium exists and {\bf false} otherwise.
To assess its complexity, note that for a network
$\snet=(\sgraph,\products,\prodset,\theta)$ and a fixed product $t$,
the set $X_t$ can be constructed in time $\bigo(n^3)$, where $n$
is the number of nodes in $\sgraph$. Indeed, each iteration of $X_t^m$
requires at most $\bigo(n^2)$ comparisons and the fixed point is
reached after at most $n$ steps. In the worst case, we need to compute
$X_t$ for every $t \in \products$, so the procedure runs in time
$\bigo(|\products| \cdot n^3)$.
\qed
\VV

In fact, the proof of Lemma~\ref{lm:proc-Ne-exists} shows that if a
non-trivial Nash equilibrium exists, then it can be constructed in
polynomial time as well. The complexity changes in the case of
determined Nash equilibria.

\begin{theorem} \label{thm:nosource2}
For a network $\snet$ whose underlying graph has no source
nodes, deciding whether the game $\mathcal{G}(\snet)$ has a determined Nash
equilibrium is NP-complete.

\end{theorem}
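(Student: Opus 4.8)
The proof will follow the template of Theorem~\ref{thm:nd}: membership in NP is straightforward (given a network with $n$ nodes and a joint strategy $s$, one checks in polynomial time that $s_i \neq t_0$ for all $i$ and that each $s_i$ is a best response to $s_{-i}$, using at most $n\cdot|\products|$ payoff comparisons), so the real work lies in an NP-hardness reduction, which I would again base on PARTITION.

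Starting from an instance $(a_1,\ldots,a_n)$ of PARTITION normalised so that $\sum_{i=1}^n a_i = 1$, I would reuse the network of Figure~\ref{fig:partition} — whose only source nodes are $1,\ldots,n$, since $a$ and $b$ already receive edges from all of $1,\ldots,n$ — and ``de-source'' each node $i\in\{1,\ldots,n\}$ with a private gadget: introduce a fresh node $i^{*}$ with product set $\{t_1,t_1'\}$ and the two edges $i\to i^{*}$ and $i^{*}\to i$, each of weight $1$, and set $\theta(i,t_1)=\theta(i,t_1')=\theta(i^{*},t_1)=\theta(i^{*},t_1')=\tfrac12$. The resulting network $\snet$ has no source nodes, is constructible in polynomial time, and still satisfies $\sum_{j\in\neighbour(k)}w_{jk}\le 1$ at every node $k$.

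The key point is that, subject to the requirement of being \emph{determined}, each two-cycle $\{i,i^{*}\}$ behaves as a free binary switch: the unique in-neighbour of $i$ is $i^{*}$ and the unique in-neighbour of $i^{*}$ is $i$, so a non-$t_0$ best response of $i$ must agree with the strategy of $i^{*}$ (agreement yields payoff $1-\tfrac12>0$, disagreement yields $-\tfrac12<0$), and symmetrically. Hence in every determined Nash equilibrium $s$ we have $s_i=s_{i^{*}}\in\{t_1,t_1'\}$, and conversely any assignment of $t_1$ or $t_1'$ to the individual gadgets extends consistently. Setting $S:=\{i\mid s_i=t_1\}$: node $a$ must play $t_1$ (its only non-null strategy), so for $t_1$ to be a best response we need $\sum_{i\in S}a_i\ge\tfrac12$, and likewise node $b$ forces $\sum_{i\notin S}a_i\ge\tfrac12$; since the $a_i$ sum to $1$, both are equalities and $S$ solves PARTITION. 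Conversely, a solution $S$ yields the determined Nash equilibrium that assigns $t_1$ to $i$ and $i^{*}$ for $i\in S$, $t_1'$ to $i$ and $i^{*}$ for $i\notin S$, $t_1$ to $a$ and $t_1'$ to $b$: the verification for $a$ and $b$ is the same as in Theorem~\ref{thm:nd}, and for the gadget nodes it is immediate.

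I expect the only real obstacle to be the choice of the de-sourcing gadget. The obvious attempt of threading $1,\ldots,n$ into one directed cycle does not work, because — as in Theorem~\ref{thm:cycle} — it would force all of $1,\ldots,n$ to adopt a single common product in any determined equilibrium, and such an equilibrium cannot meet the threshold of both $a$ and $b$ at once, so the reduction collapses. Using one private two-cycle per source node instead keeps the decisions at $1,\ldots,n$ mutually independent, which is exactly what the PARTITION encoding needs; once the gadget is fixed, what remains is the routine weight/threshold bookkeeping indicated above.
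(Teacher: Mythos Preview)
Your proposal is correct and follows essentially the same approach as the paper: the paper also removes the source nodes of the Figure~\ref{fig:partition} network by twinning each $i\in\{1,\ldots,n\}$ with a private node $i'$ via a two-cycle, choosing weights and thresholds so that matching products give positive payoff, and then repeating the argument of Theorem~\ref{thm:nd}. Your version is slightly more explicit about the concrete weights and thresholds, and your remark about why threading $1,\ldots,n$ into a single cycle fails is a nice addition, but the underlying construction is the same.
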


\begin{proof}
We modify the network given in Figure~\ref{fig:partition} so that 
the graph has no source nodes. To this end we `twin' each node $i \in \{1, \LL, n\}$
with a new node $i'$, also with the product set $\{t_1, t'_1\}$,
by adding edges $(i,i')$ and $(i',i)$. Additionally, we choose
the weights $w_{i i'}$ and $w_{i' i}$ and the corresponding thresholds so that when
$i$ and $i'$ adopt a common product, their payoff is positive.

For the so modified network we can now repeat the proof of Theorem~\ref{thm:nd}.
\end{proof}

\section{Finite best response property: general case}
\label{sec:FBRP}

We noted already that some social network games do not have a Nash
equilibrium.  A fortiori, such games do not have the finite best
response property (FBRP) defined in Subsection~\ref{subsec:games}. We
now analyze the complexity of determining whether a game has the FBRP.
We establish the following general result.

\begin{theorem} \label{thm:fbrp-hard}
Deciding whether for a network $\snet$ the game $\mathcal{G}(\snet)$
has the FBRP is co-NP-hard.
\end{theorem}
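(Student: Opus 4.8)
The plan is to reduce from a suitable NP-complete problem (most naturally PARTITION, matching the style of the earlier proofs in the paper) and build a network $\snet$ such that $\mathcal{G}(\snet)$ has the FBRP if and only if the PARTITION instance has \emph{no} solution; since the complement of PARTITION is co-NP-complete, this shows FBRP-checking is co-NP-hard. The guiding idea is that a network with no Nash equilibrium automatically fails to have the FBRP (an infinite best-response improvement path must exist, by maximality plus finiteness of the strategy space), so I want the PARTITION instance to be solvable exactly when the ``bad gadget'' of Example~\ref{exa:nonash} gets ``switched off'' and replaced by something with the FBRP, and unsolvable exactly when that gadget stays live and destroys the FBRP.

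First I would recall Example~\ref{exa:nonash}: the three-product triangle network has no Nash equilibrium, and moreover every best-response improvement path in it cycles forever through the eight non-$t_0$ joint strategies, so that network does \emph{not} have the FBRP. Next, I would reuse essentially the construction from the proof of Theorem~\ref{thm:np}: attach a copy of this triangle network (in one version with $t_1$, in a mirror version with $t_1'$) to the PARTITION gadget of Figure~\ref{fig:partition}, identifying node $a$ with the $\{t_1\}$-source and node $b$ with the $\{t_1'\}$-source. The key structural point is that the triangle's misbehaviour is ``gated'' by whether $a$ (resp.\ $b$) plays $t_1$ (resp.\ $t_1'$): if $a$ is pinned to $t_0$, the triangle attached above it collapses into a sub-network that does have the FBRP (one verifies directly that with the extra source contributing $w_1$ the payoffs become consistent and best-response dynamics terminate). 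As in the proof of Theorem~\ref{thm:np}, $a$ can best-respond with $t_0$ precisely when $\sum_{i \mid s_i = t_1} w_{ia} \le \tfrac12$, and since each $w_{ia}=a_i$ and $\sum a_i = 1$, there is a joint strategy keeping both $a$ and $b$ at $t_0$ iff a PARTITION solution exists.

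So the dichotomy I would establish is: (a) if a PARTITION solution exists, then $\mathcal{G}(\snet)$ has the FBRP; here the argument is to show every maximal best-response improvement path eventually drives both $a$ and $b$ to $t_0$ and keeps them there (the source nodes $1,\dots,n$ stabilise quickly, each to $t_1$ or $t_1'$, and once their combined weight into $a$ and into $b$ can force the best response $t_0$ — which a PARTITION solution guarantees is \emph{achievable} but I would need the stronger claim that $t_0$ becomes the \emph{eventual} best response along every path, possibly by choosing the weights so that $a,b$ strictly prefer $t_0$ in the relevant configurations), after which the triangles are in their tame regime and terminate; and (b) if no PARTITION solution exists, then in every reachable configuration at least one of $a,b$ is forced off $t_0$, the corresponding triangle is live, and one exhibits an infinite best-response improvement path by cycling that triangle through the eight joint strategies of Example~\ref{exa:nonash} (with the rest of the network held fixed), so the FBRP fails. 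Finally the construction is polynomial-time, giving co-NP-hardness.

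\textbf{The main obstacle} I anticipate is direction (a): ``has a Nash equilibrium'' is not the same as ``has the FBRP'', so I cannot simply invoke Theorem~\ref{thm:np}. I must argue that \emph{every} best-response improvement path is finite, which requires controlling the interaction between the PARTITION sources, the nodes $a,b$, and the two triangles — in particular ruling out a path where $a$ oscillates between $t_0$ and $t_1$ forever while the sources also move. The cleanest fix is likely to make the PARTITION gadget ``one-shot'': arrange weights/thresholds (and perhaps replace the single-product sources with dedicated source nodes as in the twinning trick of Theorem~\ref{thm:nosource2}) so that the sources $1,\dots,n$ have a unique best response and never move, so that after one step $a$ and $b$ see a fixed incoming weight, and so that when a PARTITION solution exists the generic reachable split still forces $t_0$ at $a$ or at $b$ to be strictly optimal except on a measure-zero set of configurations that a careful choice of thresholds excludes. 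Getting these inequalities exactly right — tight enough that non-solvability genuinely leaves a triangle live, loose enough that solvability genuinely tames both triangles along \emph{all} paths — is the delicate bookkeeping at the heart of the proof.
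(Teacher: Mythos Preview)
Your proposal contains a genuine directional error, and the gadget you chose cannot be repaired to fix it.

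You open by correctly stating the target biconditional: $\mathcal{G}(\snet)$ has the FBRP iff the PARTITION instance has \emph{no} solution. But your dichotomy (a)/(b) then asserts the opposite --- (a) says a PARTITION solution implies FBRP, (b) says no solution implies not-FBRP --- which, even if proved, would yield NP-hardness rather than co-NP-hardness. This is not merely a slip of the pen; it is forced by your choice of gadget. Attaching the no-Nash-equilibrium triangle of Example~\ref{exa:nonash} to the PARTITION layer via $a$ and $b$, exactly as in Theorem~\ref{thm:np}, makes the triangle ``live'' (and hence destroys FBRP) precisely when $a$ or $b$ is forced to play its product --- and by the argument of Theorem~\ref{thm:np} that happens exactly when PARTITION has \emph{no} solution. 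So your construction yields ``no solution $\Rightarrow$ not FBRP'', the reverse of what is needed. Moreover, your direction (a) is simply false: even when a PARTITION solution exists, the source nodes $1,\dots,n$ (for which every product is a best response) may along a particular best-response path stabilise at an \emph{unbalanced} split, forcing one of $a,b$ off $t_0$; the attached triangle is then live and that path is infinite. Since FBRP is a universal statement over all paths while ``a balanced split exists'' is merely existential, no threshold tuning or twinning trick closes this gap.

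The paper therefore uses a different gadget with the opposite behaviour: a $3$-cycle $a\to b\to c\to a$ in which all three nodes have the \emph{same} two-product set $\{t_1,t_2\}$, with thresholds chosen so that $c$'s payoff is always at most $0$. In any infinite best-response path $c$ can never select $t_0$ (else it would be stuck there and the path would terminate), so $c$ alternates between $t_1$ and $t_2$ forever; tracing this back through $b$ and $a$ forces $a$ to repeatedly adopt $t_1$ and $b$ to repeatedly adopt $t_2$ as best responses, which in turn forces the source weight into $a$ for $t_1$ and into $b$ for $t_2$ each to be at least $\tfrac14$. Since the $a_i$ are normalised to sum to $\tfrac12$, both inequalities are equalities and a PARTITION solution drops out. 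Conversely, given a solution one exhibits an explicit infinite best-response cycle. Thus ``PARTITION solvable $\Leftrightarrow$ not FBRP'', which is the direction co-NP-hardness requires.
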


\begin{proof}
We prove that the complement of the problem is NP-hard.  We use again
an instance of the PARTITION problem in the form of $n$ positive
rational numbers $(a_1,\LL,a_n)$, appropriately normalized, so that
(this time) $\sum_{i=1}^n a_i = \frac12$, and the network given in
Figure~\ref{fig:partition-fbrp}.  For every node in the network, the
product set is $\{t_1, t_2\}$ and for $i\in\{1,\LL,n\}$ we set $w_{i
  a} = w_{i b} = a_i$.  The other three weights are $\frac12$.  Next,
we set $\theta(a, t_1) = \theta(b, t_2) = \frac34$ and stipulate that
for other inputs $\theta$ yields $\frac12$.

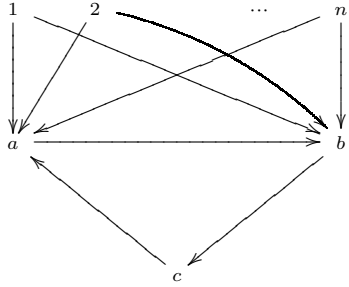
\begin{figure}[ht]
\centering
$
\def\objectstyle{\scriptstyle}
\def\labelstyle{\scriptstyle}
\xymatrix@W=10pt @R=40pt @C=15pt{
1 \ar[d]_{} \ar[rrrrd]_{}& 2 \ar[ld]^{} \ar@/^0.7pc/[rrrd]^{}& & \cdots &n \ar[lllld]^{} \ar[d]^{}\\
a  \ar[rrrr]^{}& & & &b  \ar[lld]^{}\\
& &c \ar[llu]^{}\\
}$
\caption{\label{fig:partition-fbrp}A network related to the FBRP}
\end{figure}

Suppose now that a solution to the considered instance of the
PARTITION problem exists, that is for some set $S \subseteq \{1, \LL,
n\}$ we have $\sum_{i\in S} a_i = \sum_{i\not\in S} a_i =
\frac{1}{4}$.  Consider the game $\mathcal{G}(\snet)$.  Assume now
that strategy $t_1$ is selected by each node $i \in S$ and strategy
$t_2$ by each node $i \in \{1, \LL, n\} \setminus S$.  Identify each
completion of this selection to a strategy profile with the triple of
strategies selected by the nodes $a, b$ and $c$.  Then it is easy to
check that Figure~\ref{fig:fbrp-infpath} exhibits an infinite best
response improvement path in $\mathcal{G}(\snet)$.

\begin{figure}[ht]
\centering
$
\def\objectstyle{\scriptstyle}
\def\labelstyle{\scriptstyle}
\xymatrix@W=8pt @C=15pt @R=15pt{
(\underline{t_1},t_1,t_2)\ar@{=>}[r]& (t_2, t_1, \underline{t_2})\ar@{=>}[r]& (t_2, \underline{t_1}, t_1)\ar@{=>}[d]\\
(t_1, \underline{t_2}, t_2)\ar@{=>}[u]& (t_1, t_2, \underline{t_1})\ar@{=>}[l]& (\underline{t_2}, t_2, t_1)\ar@{=>}[l]\\
}$
\caption{\label{fig:fbrp-infpath} An infinite best response improvement path}
\end{figure}

Suppose that an infinite best response improvement path $\xi$ exists
in $\mathcal{G}(\snet)$.  The payoff for the node $c$ is always at
most 0.  Hence $t_0$ is always a best response for $c$, so it is never
chosen by $c$ in $\xi$.  This means that in $\xi$, the node $c$
alternates between the strategies $t_1$ and $t_2$.  Hence strategies
$t_1$ and $t_2$ are also infinitely often selected in $\xi$ by the
nodes $a$ and $b$.  Consequently, the payoff for the node $a$ at the
moment it switches to $t_1$ has to be $\geq 0$ and the payoff for the
node $b$ at the moment it switches to $t_2$ has to be $\geq 0$.  This
means that $ \sum_{i \in \{1, \LL, n\} \mid s_i = t_1} w_{i a} \geq
\frac14 $ and $ \sum_{i \in \{1, \LL, n\} \mid s_i = t_2} w_{i b} \geq
\frac14.  $ But $\sum_{i=1}^n a_i =\frac12$ and for $i\in\{1,\LL,n\}$,
$w_{i a} = w_{i b} = a_i$, and $s_i \in \{t_1, t_2\}$.  So both above
inequalities are in fact equalities.  Consequently there exists a
solution to the considered instance of the PARTITION problem.
\end{proof}

It would be interesting to find out precisely the complexity of
determining whether a social network game has the FBRP. We conjecture
that it is not co-NP.

\section{Finite best response improvement property: special cases}
\label{sec:fbrp-special}

As in the case of Nash equilibria we now analyze the FBRP for social
network games whose underlying graph satisfies certain properties.

\subsection{Directed acyclic graphs}
\label{sec:fbrp-DAG}
We begin with social network games whose underlying graph is a DAG.
Then the following positive result holds.

\begin{theorem}
\label{thm:fbrp-DAG}
Consider a network $\snet$ whose underlying graph is a DAG.
Then the game $\mathcal{G}(\snet)$ has the FBRP.
\end{theorem}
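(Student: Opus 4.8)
The plan is to exploit the layered structure of a DAG. Fix the enumeration $\dagrank()$ satisfying (\ref{equ:rank}); read level by level it makes every edge increase the rank strictly, so for all $i$ and all $j \in \neighbour(i)$ we have $\dagrank(j) < \dagrank(i)$ --- equivalently, one may work with a topological order of the nodes in which every neighbour of a node precedes it. The point to keep in mind is that $p_i(s)$ depends only on $s_i$ and on the strategies $s_j$ with $j \in \neighbour(i)$; hence once the strategies of all neighbours of $i$ have stopped changing along a path, the payoff of $i$ is a fixed function of $s_i$ alone.

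I would then prove, by (strong) induction on $\dagrank(i)$, the claim that \emph{in any best response improvement path $\xi$ of $\mathcal{G}(\snet)$ the node $i$ deviates only finitely many times}. Assume this holds for every node of rank smaller than $\dagrank(i)$ (for a source node this hypothesis is vacuous). Then there is a finite position $T$ in $\xi$ after which no node of rank $< \dagrank(i)$ ever moves; in particular $s_{\neighbour(i)}$ is constant from position $T$ onward. Before $T$ the node $i$ moves only finitely often, since the prefix of $\xi$ up to $T$ is finite. After $T$: the first time $i$ deviates it must move to a best response against $s_{\neighbour(i)}$, by the definition of a best response improvement path; because $s_{\neighbour(i)}$ does not change afterwards, $i$ then continues to play a best response, so no subsequent deviation of $i$ can be strictly improving, and $i$ cannot be the deviating player again. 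Thus $i$ moves at most once after $T$, hence only finitely often overall, which closes the induction.

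The theorem follows at once: since $\sgraph$ has finitely many nodes and each node deviates only finitely many times, every best response improvement path consists of finitely many steps, i.e.\ is finite, so $\mathcal{G}(\snet)$ has the FBRP.

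I do not anticipate a real difficulty here; the only things that need a little care are the justification that the level-by-level reading of (\ref{equ:rank}) really gives a strict rank decrease across edges --- so that the induction is well-founded and ``all neighbours have smaller rank'' is legitimate --- and the bookkeeping observation that the position $T$ handed down by the induction hypothesis is genuinely finite.
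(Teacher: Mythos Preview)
Your argument is correct. The paper, however, takes a different route: it does not prove the FBRP directly but deduces it from the stronger FIP (Theorem~\ref{thm:fip-DAG}), whose proof is a short contradiction argument based on Note~\ref{not:inf} --- if some node were scheduled infinitely often in an improvement path, then so would one of its neighbours, and iterating this would produce an infinite backward walk in $\sgraph$, hence a cycle, contradicting acyclicity. Your induction on $\dagrank()$ is a more explicit, forward version of the same idea, but as written it is tailored to best responses: the ``at most once after $T$'' step genuinely uses that the deviating player moves to a best response. The paper's detour through the FIP buys the stronger statement for free; conversely, your framework would also yield the FIP with a one-line change --- after $T$ the payoff of $i$ is a fixed function of $s_i$, so along any improvement path $i$ can switch at most $|S_i|-1$ times after $T$ --- which would align your proof with the paper's and make Theorem~\ref{thm:fip-DAG} an immediate corollary rather than the other way round.
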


\noindent This is a direct consequence of a stronger result, Theorem
\ref{thm:fip-DAG} of Section \ref{sec:fip-scc}.

\subsection{Simple cycles}

The property that the game has the FBRP does not hold anymore when the
underlying graph is a simple cycle.  To see this consider
Figure~\ref{fig:br0}(a). Suppose that $\bar{t}$ is a Nash equilibrium
in which each player gets a strictly positive payoff.
Figure~\ref{fig:br0}(b) then shows an infinite best response
improvement path. In each strategy profile, we underline the strategy
that is not a best response to the choice of other players. 

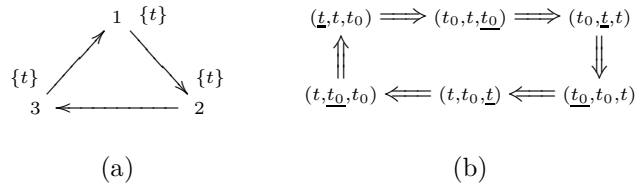
\begin{figure}[ht]
\centering
\begin{tabular}{ccc}
$
\def\objectstyle{\scriptstyle}
\def\labelstyle{\scriptstyle}
\xymatrix@W=10pt @C=15pt{
 &1 \ar[rd]^{} \ar@{}[rd]^<{\{t\}} \ar@{}[rd]^>{\{t\}}\\
3  \ar[ru]_{} \ar@{}[ru]^<{\{t\}} & &2 \ar[ll]^{}\\
}$
&
&
$
\def\objectstyle{\scriptstyle}
\def\labelstyle{\scriptstyle}
\xymatrix@W=8pt @C=15pt @R=15pt{
(\underline{t},t, t_0)\ar@{=>}[r]& (t_0, t, \underline{t_0})\ar@{=>}[r]& (t_0, \underline{t}, t)\ar@{=>}[d]\\
(t, \underline{t_0}, t_0)\ar@{=>}[u]& (t, t_0, \underline{t})\ar@{=>}[l]& (\underline{t_0}, t_0, t)\ar@{=>}[l]\\
}$
\\
\\
(a) & & (b)\\
\end{tabular}
\caption{\label{fig:br0} A network with an infinite best response improvement path}
\end{figure} 

Next consider Figure~\ref{fig:br1}(a). Suppose that both
$\overline{t_1}$ and $\overline{t_2}$ are Nash equilibria.  Then
Figure~\ref{fig:br1}(b) shows an infinite best response improvement
path. This essentially repeats the reasoning used to show that the
network given in Figure~\ref{fig:partition-fbrp} has an infinite best
response improvement path.

\begin{figure}[ht]
\centering
\begin{tabular}{ccc}
$
\def\objectstyle{\scriptstyle}
\def\labelstyle{\scriptstyle}
\xymatrix@W=10pt @C=15pt{
 &1 \ar[rd]^{} \ar@{}[rd]^<{\{t_1,t_2\}} \ar@{}[rd]^>{\{t_1,t_2\}}\\
3  \ar[ru]_{} \ar@{}[ru]^<{\{t_1,t_2\}} & &2 \ar[ll]^{}\\
}$
&
&
$
\def\objectstyle{\scriptstyle}
\def\labelstyle{\scriptstyle}
\xymatrix@W=8pt @C=15pt @R=15pt{
(\underline{t_1},t_1,t_2)\ar@{=>}[r]& (t_2, t_1, \underline{t_2})\ar@{=>}[r]& (t_2, \underline{t_1}, t_1)\ar@{=>}[d]\\
(t_1, \underline{t_2}, t_2)\ar@{=>}[u]& (t_1, t_2, \underline{t_1})\ar@{=>}[l]& (\underline{t_2}, t_2, t_1)\ar@{=>}[l]\\
}$
\\
\\
(a) & &(b)\\
\end{tabular}
\caption{\label{fig:br1} Another network with an infinite best response improvement path}
\end{figure}
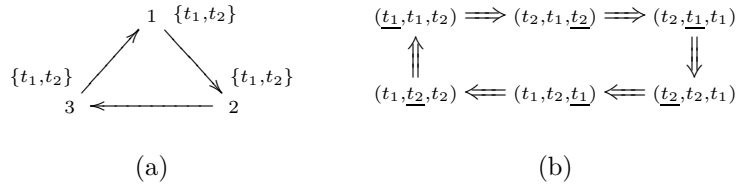 

One can easily generalize the above two examples to simple cycles with
more than three nodes.  Below we show that when the game does not have
the FBRP, the network is necessarily of one of the above two
types.  This will allow us to establish the following complexity
result concerning the FBRP property.

\begin{theorem} 
\label{thm:fbrp-cycle}
Consider a network $\snet=(\sgraph,\products,\prodset,\theta)$ such
that $G$ is a simple cycle.  There is a procedure that runs in time
$\bigo(|\products| \cdot n)$, where $n$ is the number of nodes
in $\sgraph$, that decides whether the game $\mathcal{G}(\snet)$ has
the FBRP.
\end{theorem}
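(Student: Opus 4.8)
The plan is to characterize exactly when a simple-cycle network fails to have the FBRP, and then observe that the characterization is decidable in time $\bigo(|\products|\cdot n)$ by a scan of the cycle for each product. By Theorem~\ref{thm:cycle}, every non-trivial Nash equilibrium of $\mathcal{G}(\snet)$ is of the form $\obar{t}$ for some product $t$ with $p_i(\obar{t})\geq 0$ for all $i$; since the cycle is strongly connected and each node's payoff depends only on whether its unique predecessor made the same choice, the only improvement dynamics that matter are the ones propagating a single product around the cycle. The two examples in Figures~\ref{fig:br0} and \ref{fig:br1} isolate the two mechanisms producing an infinite best response improvement path: (a) there is a single product $t$ such that $\obar{t}$ is a Nash equilibrium in which \emph{every} player gets strictly positive payoff (so a player can profitably switch away to $t_0$ only to be `chased' back around the cycle); and (b) there are two distinct products $t_1,t_2$ such that both $\obar{t_1}$ and $\obar{t_2}$ are Nash equilibria (the players can then oscillate between the two global profiles around the cycle, as in the PARTITION gadget of Figure~\ref{fig:partition-fbrp}).

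First I would prove the `if' direction: if either condition (a) or (b) holds, one exhibits an explicit infinite best response improvement path, generalizing Figures~\ref{fig:br0}(b) and~\ref{fig:br1}(b) to a cycle of arbitrary length $n$ — a player deviates, this makes its successor's current choice suboptimal, that player deviates, and so on around the cycle, returning to a shifted copy of the starting profile. The key points to check are that each deviation is genuinely a \emph{best} response (using the definition of $BR_i$ and the sign conditions on payoffs that conditions (a)/(b) guarantee) and that the path is maximal, i.e.\ never gets stuck — this is where strong positivity in (a), or the existence of two equilibria in (b), is used.

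The harder direction is the converse: if neither (a) nor (b) holds, then every best response improvement path is finite. Here I would argue that along any best response improvement path, once some player adopts a product $t$ with positive payoff, the `wave' of adoptions of $t$ can only move forward around the cycle and is monotone until either it closes up into $\obar{t}$ (a Nash equilibrium, so the path stops) or it stalls at a node where $t$ is not a best response. The absence of condition (b) rules out an infinite alternation between two products; the absence of condition (a) rules out the self-chasing phenomenon with a single product, because if $\obar{t}$ is reached but some $p_i(\obar{t})=0$, then $t_0$ is \emph{also} a best response at $i$ and no further profitable deviation from $\obar{t}$ exists — so the path terminates. I expect the main obstacle to be bookkeeping: carefully tracking, along an arbitrary best response improvement path (not just the `canonical' ones in the figures), which players currently have positive payoff, and showing a suitable potential — e.g.\ the cyclic length of the maximal arc of players playing a common positive-payoff product, lexicographically combined with the set of products ever used — strictly progresses, so that infinite paths are impossible unless (a) or (b) holds.

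Finally, the decision procedure: condition (a) is checked by, for each $t\in\products$, scanning the cycle once to test whether $t\in P(i)$ and $w_{i\ominus 1\,i}>\theta(i,t)$ for all $i$ (strict inequality everywhere); condition (b) is checked by counting how many products $t$ satisfy $t\in P(i)$ and $w_{i\ominus 1\,i}\geq\theta(i,t)$ for all $i$ (using essentially the routine $\mathsf{VerifyNashCycle}$ of Theorem~\ref{thm:cycle-NE}) and testing whether this count is at least $2$. Each product costs $\bigo(n)$, so the whole test runs in $\bigo(|\products|\cdot n)$, as claimed.
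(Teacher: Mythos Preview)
Your characterization --- $\mathcal{G}(\snet)$ fails to have the FBRP iff (a) some $\obar{t}$ is a determined Nash equilibrium with all payoffs strictly positive, or (b) there are at least two determined Nash equilibria --- is exactly the one the paper establishes (as Theorem~\ref{thm:fbrp-cycle3}), and your decision procedure matches the paper's. Two points deserve attention.

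First, the characterization as you state it is only valid for $n\geq 3$. For $n=2$ the game \emph{always} has the FBRP (the paper checks this by hand: the longest improvement path has length five), even though condition (a) or (b) may well hold --- e.g.\ two mutually linked nodes sharing a product $t$ with both weights strictly above the thresholds. Your infinite-path constructions for the `if' direction break down when $n=2$: after one deviation you land at a Nash equilibrium or at $\obar{t_0}$ rather than at a shifted copy of the start. So you must treat $n=2$ separately, and the procedure should simply return ``yes'' in that case.

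Second, your plan for the converse (``if neither (a) nor (b), then FBRP'') via a forward-propagating wave and an unspecified potential is on shakier ground than necessary; arbitrary best response paths need not advance a single contiguous arc monotonically, and the potential you gesture at is not obviously well-founded. The paper argues the contrapositive instead, which is much cleaner: take an infinite best response improvement path $\xi$; some node $i$ selects some product $t$ infinitely often; each time $i$ switches to $t$ it is a best response with non-negative payoff, so $i\ominus 1$ must already be playing $t$ at that moment; hence $i\ominus 1$ also plays $t$ infinitely often; iterating around the cycle, every node plays $t$ infinitely often and $\obar{t}$ is a determined Nash equilibrium. If $\obar{t}$ is the \emph{unique} one, all other products appear only finitely often in $\xi$, so some tail of $\xi$ uses only $t$ and $t_0$; there each switch to $t$ is from $t_0$ and strictly improving, forcing $p_i(\obar{t})>0$ for all $i$, i.e.\ condition (a). This backward-propagation argument sidesteps the bookkeeping you anticipate and needs no potential.
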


We need the following characterization result.

\begin{theorem}
\label{thm:fbrp-cycle3}
Let $\snet$ be a network whose underlying graph is a simple cycle.

\begin{enumerate}[(i)]
\item Suppose that $\snet$ has 2 nodes. Then 
the game $\mathcal{G}(\snet)$ has the FBRP.

\item Suppose that $\snet$ has at least 3 nodes. Then the game
  $\mathcal{G}(\snet)$ does not have the FBRP iff either it has a
  determined Nash equilibrium $s$ such that for all $i$, $p_i(s) > 0$
  or it has two determined Nash equilibria.
\end{enumerate}
\end{theorem}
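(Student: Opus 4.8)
The plan is to prove $(i)$ and $(ii)$ separately, the $2$-node case being a short direct argument and the $\geq 3$-node case being the substantive one. For $(i)$, suppose $G$ is the simple cycle $1 \to 2 \to 1$. In any best response improvement path, a deviating player $i$ either moves to $t_0$ (obtaining payoff $0$, which can only be a strict improvement once) or to some product $t \in \prodset(i)$ that is a best response; but then $s_i = s_{i \ominus 1}$ would be needed for $p_i(s) > 0$, and since there are only two players the structure of improving deviations is extremely limited. I would simply enumerate: after one deviation the single deviating player is at a best response to the other's current strategy, and the only way to keep improving is for the other to change, but that change must also be strictly improving; tracking the payoff values shows no cycle of length $\geq 2$ of strict improvements is possible. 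This is a finite case analysis.

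For $(ii)$, the $(\Leftarrow)$ direction reuses the two example constructions already displayed: if $\snet$ has a determined Nash equilibrium $\bar t$ with all payoffs strictly positive, embed the cycle of Figure~\ref{fig:br0}(b); if it has two distinct determined Nash equilibria, which by Theorem~\ref{thm:cycle} must be $\overline{t_1}$ and $\overline{t_2}$ for distinct products $t_1, t_2$ with all payoffs $\geq 0$, embed the cycle of Figure~\ref{fig:br1}(b). One must check these cycles are genuine best response improvement paths; the strict-payoff hypothesis in the first case and the two-equilibria hypothesis in the second are exactly what make each single deviation a strictly improving best response (a player switching back toward a product that forms a Nash equilibrium gains at least $0$, but since in the cyclic profiles he was previously at a negative or zero payoff, the move is strict). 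Generalizing the three-node pictures to $n \geq 3$ nodes is routine: the path has the deviating player "chase" around the cycle.

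The harder direction is $(\Rightarrow)$: assume $\mathcal{G}(\snet)$ does not have the FBRP, i.e.\ there is an infinite best response improvement path $\xi$. I would argue as follows. First, since $t_0$ always yields payoff $0$ and is always an available best response candidate, any player that ever moves to $t_0$ does so at most finitely often as a strict improvement *unless* he is repeatedly driven below $0$ — so on the infinite tail of $\xi$, every player who moves infinitely often must be moving to genuine products and attaining payoff $\geq 0$ at the moment he moves, which by the cycle payoff formula forces $s_i = s_{i\ominus 1}$ at that moment, i.e.\ $w_{i\ominus 1\, i} \geq \theta(i, s_i)$. Since some player moves infinitely often and on a cycle a change propagates, in fact *every* player moves infinitely often on the tail. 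Now look at the set of products each player $i$ uses infinitely often: by the "matching predecessor" condition just derived, if $i$ moves to $t$ infinitely often then $i \ominus 1$ must be playing $t$ at those times, hence $i\ominus 1$ also uses $t$ infinitely often; chasing this around the whole cycle shows there is a single product $t$ (or a set of products) that *every* node uses infinitely often and for which $w_{i\ominus1\,i}\geq\theta(i,t)$ for all $i$ — i.e.\ $\bar t$ is a (determined) Nash equilibrium. If only one such product $t$ is used infinitely often by everyone, I would then show all payoffs in $\bar t$ are strictly positive: if some $p_i(\bar t)=0$, then when $\xi$ reaches $\bar t$ player $i$ is indifferent between $t$ and $t_0$, and one checks the path cannot continue strictly improving past that configuration, contradicting infiniteness. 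If two distinct products are each used infinitely often by everyone, we directly have two determined Nash equilibria. That case split is the main obstacle: making precise the combinatorial claim that an infinite best-response path on a cycle forces either a strictly-positive determined equilibrium or two determined equilibria, and ruling out "mixed" asymptotic behaviour, is where the real work lies; the cycle payoff formula (each $p_i$ depends only on $s_i$ and $s_{i\ominus1}$) and Theorem~\ref{thm:cycle} are the key levers. Finally, Theorem~\ref{thm:fbrp-cycle} follows by combining this characterization with the $\bigo(|\products|\cdot n)$ procedure of Theorem~\ref{thm:cycle-NE}: run it to test for a determined Nash equilibrium with all payoffs $>0$ (a trivial modification of the inequality check) and to count whether two distinct products yield determined equilibria.
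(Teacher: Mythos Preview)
Your plan matches the paper's proof almost step for step: the $(\Leftarrow)$ direction via the two explicit rotating cycles, and the $(\Rightarrow)$ direction via ``some node selects a product $t$ infinitely often $\Rightarrow$ its predecessor does $\Rightarrow$ all nodes do $\Rightarrow$ $\bar t$ is a determined Nash equilibrium,'' followed by a case split on whether one or at least two such products occur.

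There is one genuine wobble in your $(\Rightarrow)$ argument for the strictly-positive case. You write that if $p_i(\bar t)=0$ ``then when $\xi$ reaches $\bar t$ player $i$ is indifferent \ldots\ and the path cannot continue.'' But $\xi$ need never visit the profile $\bar t$, so this step as stated does not go through. The paper's (and the natural) fix is local rather than global: once you know $\bar t$ is the unique determined equilibrium, all other products appear only finitely often, so there is a suffix $\eta$ of $\xi$ in which every node's strategy is always $t$ or $t_0$; in $\eta$ each node still switches infinitely often, hence in particular switches from $t_0$ to $t$ infinitely often, and each such switch is a \emph{strict} improvement. But the payoff obtained at that moment is exactly $w_{i\ominus 1\,i}-\theta(i,t)=p_i(\bar t)$ (since the predecessor must be at $t$), so $p_i(\bar t)>0$ for every $i$. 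This replaces your ``$\xi$ reaches $\bar t$'' with the correct observation that the relevant payoff value is attained at every $t_0\to t$ switch in the tail. With that correction your outline is the paper's proof.
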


\begin{proof}

\NI
$(i)$ A simple analysis, which we leave to the reader, shows that the longest
possible improvement path is of length five and is of the form
$(t_1, t_2)$, $(t_0, t_2)$, $(t_2, t_2)$, $(t_2, t_0)$, $(t_0, t_0)$.
\II

\NI
$(ii)$ $(\Ra)$ 
Consider an infinite best response improvement path $\xi$.  Some node
changes his strategy in $\xi$ infinitely often.  This means that some
node, say $i$, selects in $\xi$ some product $t$ infinitely often.
Indeed, otherwise from some moment on in each strategy profile in
$\xi$ its strategy would be $t_0$, which is not the case.

Each time node $i$ switches in $\xi$ to the product $t$, it selects a
best response, so its payoff becomes at least $0$. Consequently, at
the moment of such a switch its predecessor $i \ominus 1$'s strategy is
necessarily $t$ as well.  So if from some moment on node $i \ominus 1$
does not switch from the strategy $t$, then node $i$ does not switch
from $t$ either.  This shows that node $i \ominus 1$ also selects
in $\xi$ product $t$ infinitely often.  Iterating this reasoning we
conclude that each node selects in $\xi$ the product $t$ infinitely
often.
Therefore, for all $i$, $t \in \prodset(i)$. Since the payoff of $i$
depends only on the choice of $i \ominus 1$, we also have that
$\payoff_i(\obar{t}) \geq 0$ for all $i$. By Theorem~\ref{thm:cycle},
$\bar{t}$ is a Nash equilibrium.

This shows that if a node selects in $\xi$ some product $t_1$
infinitely often, then all nodes select in $\xi$ the product $t_1$
infinitely often and $\overline{t_1}$ is a Nash equilibrium.  Suppose
now that $\bar{t}$ is a unique determined Nash equilibrium.  This
means that all other products are selected in $\xi$ finitely often.
So from some moment on in $\xi$ nodes select only $t$ or $t_0$.  In
this suffix $\eta$ of $\xi$ each node selects $t$ infinitely often.
Further, each switch to $t$ from $t_0$ is a better response. Hence
each time a node switches in $\eta$ to $t$ its payoff becomes $> 0$.
This shows that $\bar{t}$ is a determined Nash equilibrium such that
for all $i$, $p_i(\bar{t}) > 0$.
\II

\NI
$(\La)$ Suppose first that the game $\mathcal{G}(\snet)$ has a determined
Nash equilibrium $s$ such that for all $i$, $p_i(s) > 0$.
By Theorem~\ref{thm:cycle} $s$ is of the form $\bar{t}$ for some product $t$.
Then consider the following strategy profile:
\[
s := (t, \LL, t, t_0).
\]
First schedule  node 1 that has a better response, namely $t_0$.  Next,
schedule node $n$ for which $t$ is a best response.  After these two
steps the strategy profile becomes a rotation of $s$ by $1$.
Iterating this selection procedure we obtain an infinite best response
improvement path.

Next, suppose that the game $\mathcal{G}(\snet)$ has two determined
Nash equilibria. By 
Theorem~\ref{thm:cycle} they are of the form $\overline{t_1}$ and $\overline{t_2}$
for some products $t_1$ and $t_2$. 
Then consider the following strategy profile:
\[
s := (t_1, \LL, t_1, t_2).
\]
First schedule node 1 for which $t_2$ is a best response.  Next,
schedule node $n$ for which $t_1$ is a best response.  After these two
steps the strategy profile becomes a rotation of $s$ by $1$.
Iterating this selection procedure we obtain an infinite best response
improvement path.
\end{proof}

This theorem leads to a direct proof of the claimed result.
\III

\NI
\emph{Proof of Theorem~\ref{thm:fbrp-cycle}.}

Thanks to the above theorem, one can check whether
$\mathcal{G}(\snet)$ has the FBRP as follows. First apply the
procedure $\mathsf{VerifyNashCycle(\snet)}$ defined in the proof of
Theorem~\ref{thm:cycle-NE}, appropriately modified to check for the
existence of a Nash equilibrium with strictly positive payoffs. If
such an equilibrium does not exist, then we use a modified version of
$\mathsf{VerifyNashCycle(\snet)}$ to check whether
$\mathcal{G}(\snet)$ has two determined Nash equilibria.

\subsection{Graphs with no source nodes}
As in Section \ref{sec:Ne-special} we conclude by considering social
networks whose underlying graphs have no source nodes. The following
counterpart of Theorem~\ref{thm:fbrp-hard} holds.

\begin{theorem} \label{thm:fbrp-hard-nosource}
For a network $\snet$ whose underlying graph has no source nodes,
deciding whether the game $\mathcal{G}(\snet)$ has the FBRP is
co-NP-hard.
\end{theorem}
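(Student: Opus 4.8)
The plan is to re-use the PARTITION reduction behind Theorem~\ref{thm:fbrp-hard}, but to eliminate the source nodes of Figure~\ref{fig:partition-fbrp} by the same `twinning' device used in the proof of Theorem~\ref{thm:nosource2}. Given an instance $(a_1,\dots,a_n)$ of PARTITION normalised so that $\sum_{i=1}^n a_i=\frac12$, I would take the network of Figure~\ref{fig:partition-fbrp} and, for each $i\in\{1,\dots,n\}$, add a new node $i'$ with product set $\{t_1,t_2\}$ together with the edges $(i,i')$ and $(i',i)$, setting $w_{ii'}=w_{i'i}=1$ and $\theta(i,t_1)=\theta(i,t_2)=\theta(i',t_1)=\theta(i',t_2)=\frac12$. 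The resulting network $\snet$ has no source nodes, since the only former source nodes were $1,\dots,n$ and each now receives the edge $(i',i)$; the constraints $\sum_{j\in\neighbour(\cdot)}w_{j\cdot}\le 1$ still hold because $\neighbour(i)=\{i'\}$ and $\neighbour(i')=\{i\}$, while $\neighbour(a)$, $\neighbour(b)$, $\neighbour(c)$ are unchanged with incoming-weight sums already equal to $1$, $1$, $\frac12$. The crucial feature is that each pair $\{i,i'\}$ forms a \emph{closed} subsystem: $i$ and $i'$ are each other's only neighbour (the edges $(i,a),(i,b)$ affect only the payoffs of $a$ and $b$), so their payoffs depend on one another alone. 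When $i$ and $i'$ adopt a common product each gets payoff $1-\frac12>0$, strictly better than $t_0$ (payoff $0$) and than the other product (negative payoff); hence the only best-response-stable configurations of the pair are $(t_0,t_0)$, $(t_1,t_1)$, $(t_2,t_2)$, all absorbing, and any best response improvement path restricted to the pair reaches one of them in at most a couple of steps.

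For the forward direction, if $S$ witnesses $\sum_{i\in S}a_i=\sum_{i\notin S}a_i=\frac14$, assign $t_1$ to every $i\in S$ and to its twin, and $t_2$ to every $i\notin S$ and to its twin; each $i$ and $i'$ then plays its unique best response and is never scheduled. On the triangle $a\to b\to c\to a$ the $t_1$-weight reaching $a$ and $b$ from $\{1,\dots,n\}$ is $\frac14$ and the $t_2$-weight is $\frac14$, exactly as in Theorem~\ref{thm:fbrp-hard}, so the six-state cycle of Figure~\ref{fig:fbrp-infpath} (read on the nodes $a,b,c$, with all other players frozen) is once more an infinite best response improvement path; thus $\mathcal{G}(\snet)$ lacks the FBRP.

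For the converse, suppose $\xi$ is an infinite best response improvement path in $\mathcal{G}(\snet)$. Since each pair $\{i,i'\}$ is a closed subsystem with only the three absorbing rest configurations above, and a best-response deviation requires a strict payoff increase, every pair moves only finitely often along $\xi$; hence there is a suffix $\eta$ of $\xi$ along which every $i\in\{1,\dots,n\}$ has a fixed strategy $\hat s_i\in\{t_0,t_1,t_2\}$. Along $\eta$ only $a,b,c$ move, so the argument of Theorem~\ref{thm:fbrp-hard} applies with the $\hat s_i$ playing the role of the source strategies there: $c$'s payoff is always $\le 0$, so $c$ is never at $t_0$ in $\eta$ (else $c$ would be frozen and then $a,b$ would stabilise, contradicting infiniteness), whence $c$, and then $b$, and then $a$, alternate between $t_1$ and $t_2$ infinitely often; inspecting the moments at which $a$ switches to $t_1$ and $b$ switches to $t_2$ gives $\sum_{\hat s_i=t_1}a_i\ge\frac14$ and $\sum_{\hat s_i=t_2}a_i\ge\frac14$, which together with $\sum_i a_i=\frac12$ and $a_i>0$ forces all $\hat s_i\in\{t_1,t_2\}$ and both inequalities to be equalities. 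Then $S:=\{i:\hat s_i=t_1\}$ solves the instance, and polynomial-time constructibility of $\snet$ is clear, so the complement problem is NP-hard.

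The reduction itself is routine, being a recombination of the proofs of Theorem~\ref{thm:fbrp-hard} and of the twinning step in Theorem~\ref{thm:nosource2}. The point deserving genuine care — and the one I would write out in full — is the claim that along any infinite best response improvement path every twin pair is eventually frozen: this is exactly what reduces the analysis to the source-node situation of Figure~\ref{fig:partition-fbrp}, and it rests precisely on the fact that $i$ and $i'$ have no neighbour other than each other, so the pair evolves independently of the rest of the network (and of its own outgoing edges).
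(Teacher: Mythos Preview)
Your proposal is correct and takes essentially the same approach as the paper: twin the source nodes of Figure~\ref{fig:partition-fbrp}, argue that each twin pair stabilises along any best response improvement path, and then reduce to the analysis of Theorem~\ref{thm:fbrp-hard}. The only difference is how the stabilisation of the pairs is justified: the paper simply invokes Theorem~\ref{thm:fbrp-cycle3}(i) (a two-node simple cycle has the FBRP), whereas you argue it directly from the closed-subsystem observation; you also spell out the extra wrinkle that some $\hat s_i$ might be $t_0$, which the paper leaves implicit.
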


\begin{proof}
The proof is a modification of the proof of Theorem~\ref{thm:fbrp-hard}.  Given an
instance of the PARTITION problem we use the following modification of
the network given in Figure~\ref{fig:partition-fbrp}.  We `twin' each
node $i \in \{1, \ldots, n\}$ with a new node $i'$, also with the
product set $\{t_1, t_2\}$, by adding the edges $(i,i')$ and
$(i',i)$. We choose the weights $w_{i i'}, w_{i' i}$, where $i \in
\{1, \ldots, n\}$ and the thresholds so that when $i$ and $i'$ adopt a
common product, their payoffs are positive.

Suppose that a solution to the considered instance of the PARTITION
problem exists. Then we extend the joint strategy considered in the
proof of Theorem~\ref{thm:fbrp-hard} by additionally assigning $t_1$
to each node $i'$ such that $i \in S$ and $t_2$ to each node $i'$ such
that $i \in \{1, \ldots, n\} \setminus S$. Then, as before, there is
no finite best response improvement path starting in this joint
strategy, so $\mathcal{G}(\snet)$ does not have the FBRP.

Suppose that an infinite best response improvement path $\xi$ exists
in $\mathcal{G}(\snet)$. By Theorem \ref{thm:fbrp-cycle3}(i) each
network consisting of just the twined nodes $(1,1'),\ldots,(n,n')$ has
the FBRP. Therefore there exists a joint strategy $\strprofile^k$ in
$\xi$ such that in the (infinite) suffix of $\xi$ starting at
$\strprofile^k$ (call it $\xi_k$), none of the following nodes are
scheduled: $1,1',\ldots,n,n'$. The payoff for node $c$ in $\xi_k$ is
at most 0.  Hence $t_0$ is always a best response for it.  If $c$
chooses $t_0$ in $\xi$ then clearly $\xi$ is finite which contradicts
the assumption. Also, if there exists a $\strprofile^m$ in $\xi_k$
such that $c$ never changes its strategy in the suffix of $\xi_k$
starting at $\strprofile_m$, then again $\xi$ is finite. This means
that in $\xi_k$ the node $c$ alternates between the strategies $t_1$
and $t_2$.  Hence in $\xi_k$ strategies $t_1$ and $t_2$ are also
infinitely often selected by the nodes $a$ and $b$.  Consequently, the
payoff for the node $a$ at the moment it switches to $t_1$ has to be
$\geq 0$ and the payoff for the node $b$ at the moment it switches to
$t_2$ has to be $\geq 0$.  This means that $ \sum_{i \in \{1, \LL, n\}
  \mid s_i = t_1} w_{i a} \geq \frac14 $ and $ \sum_{i \in \{1, \LL,
  n\} \mid s_i = t_2} w_{i b} \geq \frac14.  $ But $\sum_{i=1}^n a_i
=\frac12$ and for $i\in\{1,\LL,n\}$, $w_{i a} = w_{i b} = a_i$, and
$s_i \in \{t_1, t_2\}$.  So both above inequalities are in fact
equalities.  Consequently there exists a solution to the considered
instance of the PARTITION problem.
\end{proof}

\section{Finite improvement property: general case}
\label{sec:FIP}

In the previous section we noted that some social network games do not
have the FBRP. A fortiori, such games do not have the FIP either. We
now analyze the complexity of determining whether a game has the
FIP. We start with the following analogue of
Theorem~\ref{thm:fbrp-hard}, though the proof is more complex.

\begin{figure}[ht]
\centering
$
\def\objectstyle{\scriptstyle}
\def\labelstyle{\scriptstyle}
\xymatrix@W=10pt @R=25pt @C=15pt{
1 \ar[d]_{} \ar[rrrrd]_{}& 2 \ar[ld]^{} \ar@/^0.7pc/[rrrd]^{}& & \cdots &n \ar[lllld]^{} \ar[d]^{}\\
a  \ar[rrrr]_{\frac34} \ar@{}[rrd]_<{\{t_2,t_3\}}& & & &b  \ar[lld]^{\frac34} \ar@{}[lld]^<{\{t_1,t_2\}}\\
& &c \ar[llu]^{\frac34} \ar@{}[d]_<{\{t_1,t_3\}}\\
& &\{t_3\} \ar[u]_{\frac12}
}$
\caption{\label{fig:partition-fip}A network related to the FIP}
\end{figure}
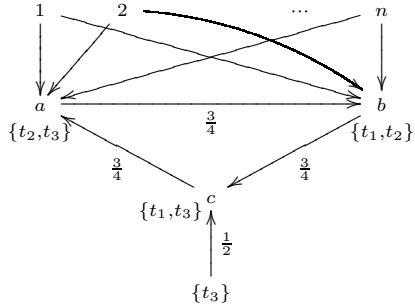

\begin{theorem} \label{thm:fip-hard}
Deciding whether for a network $\snet$ the game
$\mathcal{G}(\snet)$ has the FIP is co-NP-hard.
\end{theorem}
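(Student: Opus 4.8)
The plan is to adapt the reduction from \textsc{Partition} used in the proof of Theorem~\ref{thm:fbrp-hard}, but now building a gadget that has an infinite \emph{improvement} path (not merely an infinite \emph{best response} improvement path) precisely when the \textsc{Partition} instance is solvable. The natural candidate is the network in Figure~\ref{fig:partition-fip}: the same triangle $a,b,c$ fed by the $n$ source-like nodes $1,\ldots,n$ with $w_{ia}=w_{ib}=a_i$ and $\sum_i a_i=\tfrac12$, together with an extra source node with product set $\{t_3\}$ feeding $c$, and product sets $\{t_2,t_3\}$, $\{t_1,t_2\}$, $\{t_1,t_3\}$ on $a,b,c$ respectively. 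One direction should be the easy one: assuming a set $S$ with $\sum_{i\in S}a_i=\tfrac14$, assign $t_1$ to nodes in $S$ and $t_2$ to nodes outside $S$, fix the $t_3$-source, and exhibit an explicit cyclic improvement path through the joint strategies of $(a,b,c)$ — a six-step cycle analogous to Figure~\ref{fig:fbrp-infpath}, where the node $c$ keeps switching between $t_1$ and $t_3$, $a$ alternates between $t_2$ and $t_3$, and $b$ alternates between $t_1$ and $t_2$, every move strictly improving the deviator's payoff. The thresholds ($\theta(a,t_1)$-type entries set to $\tfrac34$, others to $\tfrac12$, and the $t_3$-edge into $c$ having weight $\tfrac12$) must be chosen so that each of these six moves is genuinely improving; I would just verify the six inequalities directly.

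The harder direction is: if the \textsc{Partition} instance has \emph{no} solution, then $\mathcal{G}(\snet)$ has the FIP. Here, unlike in the FBRP proof, I cannot simply say ``the deviating player picks a best response'', so I must argue that \emph{every} improvement path terminates. The strategy is a potential/structural argument restricted to the triangle. First observe that the source nodes $1,\ldots,n$ and the $t_3$-source each change strategy at most once along any improvement path (their payoff does not depend on anyone, so after their first move they are frozen); hence after a finite prefix the only nodes that can move are $a,b,c$, and the strategies of $1,\ldots,n$ are fixed, determining two constants $\alpha:=\sum_{i:\,s_i=t_1}w_{ia}$ and $\beta:=\sum_{i:\,s_i=t_2}w_{ib}$ with $\alpha+\beta=\tfrac12$. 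The key point is that non-solvability of \textsc{Partition} forces $\alpha\neq\tfrac14$ or (equivalently here) the two ``around the cycle'' gains cannot both be nonnegative, which breaks the only potential cycle in the $3$-player sub-game on $\{a,b,c\}$. Concretely I would enumerate the improvement graph of the small sub-game $\mathcal{G}'$ on players $a,b,c$ (finitely many joint strategies, $4\times 3\times 3$ at most once $t_0$'s are accounted for) and show that its only directed cycle uses a move whose improving-ness requires $\alpha\ge\tfrac14$ together with a move requiring $\beta\ge\tfrac14$; since $\alpha+\beta=\tfrac12$ and they are not both $\tfrac14$, at least one such move is blocked, so the improvement graph of $\mathcal{G}'$ is acyclic, hence every improvement path is finite.

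I would then assemble these pieces: co-NP-hardness follows because the network is constructed from the \textsc{Partition} instance in polynomial time, the ``yes'' instances of \textsc{Partition} map to games \emph{without} the FIP, and the ``no'' instances map to games \emph{with} the FIP, so a polynomial decision procedure for the FIP would decide the complement of \textsc{Partition}. The main obstacle I anticipate is the ``no solution $\Rightarrow$ FIP'' direction: it requires showing acyclicity of the improvement dynamics of the three-node core for \emph{all} relevant values of $\alpha,\beta$ with $\alpha+\beta=\tfrac12$, $\alpha\neq\tfrac14$, including making sure that introducing $t_0$-moves for $a,b,c$ (which are always available) does not create new cycles — essentially a careful but finite case analysis of the improvement graph of a fixed small game, parametrized by $\alpha$. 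The role of the extra $t_3$-source into $c$ is exactly to make the triangle's intended cycle ``rotate through three products'' so that the obstruction is a pair of inequalities summing to $\tfrac12$, mirroring the FBRP argument but now robust against arbitrary (not just best-response) deviations.
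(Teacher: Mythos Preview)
Your overall plan---same reduction, same gadget of Figure~\ref{fig:partition-fip}, same six-step cycle on $(a,b,c)$---is indeed what the paper does. The gap is in the choice of thresholds, and it is not a detail you can patch by ``just verifying six inequalities''.

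You propose fixed thresholds (``$\tfrac34$-type and $\tfrac12$''), carried over from the FBRP proof. This cannot work for FIP. In the intended cycle, node $a$ must at some point \emph{strictly} improve by switching from $t_3$ to $t_2$ while $c$ holds $t_1$; that requires $\text{Val}(t_2)-\theta(a,t_2) > -\theta(a,t_3)$, i.e.\ $\text{Val}(t_2) > \theta(a,t_2)-\theta(a,t_3)$. For the forward direction ($\text{Val}(t_2)=\tfrac14$) this forces $\theta(a,t_2)-\theta(a,t_3)<\tfrac14$, say $=\tfrac14-\delta$ for some fixed $\delta>0$. But then in the reverse direction, any no-partition instance with $\tfrac14-\delta<\text{Val}(t_2)<\tfrac14$ still allows that move as a strict improvement, and the cycle persists---so the game lacks the FIP even though the instance is a no-instance. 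No instance-independent $\delta$ can work; the two directions pull against each other.

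The paper's fix is to choose $\delta$ from the instance: set $\tau=\frac{1}{4\,r_1\cdots r_n}$ where $a_i=l_i/r_i$, so every partial sum $\sum_{i\in S}a_i$ is a multiple of $\tau$. Then $\text{Val}(t_2)<\tfrac14$ actually implies $\text{Val}(t_2)\le\tfrac14-\tau$ (this is the paper's Property~\ref{partition-tau}), and with $\theta(a,t_3)=\tfrac14+\tau$ and $\theta(b,t_2)=\tfrac12+\tau$ the dangerous interval is empty. This granularity trick is the crux of the FIP reduction and is missing from your plan.

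A secondary remark: your reverse-direction strategy (``enumerate the improvement graph on $\{a,b,c\}$; its only cycle needs both $\alpha\ge\tfrac14$ and $\beta\ge\tfrac14$'') is optimistic---with $t_0$ available the graph has many edges, and you would have to rule out \emph{every} cycle. The paper instead argues sequentially: once $\text{Val}(t_2)\le\tfrac14-\tau$, it shows that $t_2$ is never a better response for $a$ from any state; hence after $a$ first moves, $b$'s payoff depends only on the frozen sources, so $b$ settles, then $c$, then $a$. The symmetric case $\text{Val}(t_2)\ge\tfrac14+\tau$ shows $t_1$ is never a better response for $b$. This avoids the enumeration entirely.
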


\begin{proof}
We prove that the complement of the problem is NP-hard.  We use again
an instance of the PARTITION problem in the form of $n$ positive
rational numbers $(a_1,\LL,a_n)$, appropriately normalised, so that
(this time) $\sum_{i=1}^n a_i = \frac12$, and the network given in
Figure~\ref{fig:partition-fip}.  For each node $i\in\{1,\LL,n\}$ we
set the product set to be $\{t_1,t_2\}$ and $w_{i a} = w_{i b} = a_i$.
The weights of the other edges are shown in the figure. 

Since for all $i \in \{1,\ldots,n\}$, $a_i$ is rational, it has the
form $a_i = \frac{l_i}{r_i}$. Let $\tau=\frac{1}{4\cdot r_1 \cdot
  \ldots \cdot r_n}$. 
The following property holds.

\begin{property}
\label{partition-tau}
Given an instance $(a_1,\ldots,a_n)$ of the PARTITION problem and
$\tau$ defined as above, for all $S \subseteq \{1,\ldots,n\}$
\begin{enumerate}[(i)]
\item if $\sum_{i \in S} a_i < \frac14$, then $\sum_{i \in S} a_i \leq \frac14 -\tau$,
\item if $\sum_{i \in S} a_i > \frac14$, then $\sum_{i \in S} a_i \geq \frac14 +\tau$.
\end{enumerate}
\end{property}
\begin{proof}
By definition, each $a_i$ and $\frac14$ is a multiple of $\tau$. Thus
$\sum_{i \in S} a_i = x \cdot \tau$ and $\frac14 = y \cdot \tau$ where
$x$ and $y$ are integers. \\
\noindent {\it (i)} If $x \cdot \tau < y \cdot \tau$, then $x \cdot
\tau \leq (y-1) \cdot \tau$. Therefore $\sum_{i \in S} a_i \leq \frac14 -
\tau$.\\
\noindent The proof of {\it (ii)} is analogous.
\end{proof}
Note that given $(a_1,\ldots,a_n)$, $\tau$ can be defined in
polynomial time. Let the thresholds be defined as follows:
$\theta(a,t_2)=\frac12$, $\theta(a,t_3)=\frac14+\tau$,
$\theta(b,t_1)=\frac12$, $\theta(b,t_2)=\frac12 + \tau$ and for node
$c$, $\theta(c,t_1)=\theta(c,t_3) = \frac14$.

Suppose now that a solution to the considered instance of the
PARTITION problem exists. That is, for some set $S \subseteq \{1, \LL,
n\}$ we have $\sum_{i\in S} a_i = \sum_{i\not\in S} a_i =
\frac{1}{4}$.  Consider the game $\mathcal{G}(\snet)$.  Assume now
that strategy $t_1$ is selected by each node $i \in S$, strategy $t_2$
by each node $i \in \{1, \LL, n\} \setminus S$ and that the node
labelled by $\{t_3\}$ selected strategy $t_3$.  Identify each
completion of this selection to a joint strategy with the triple of
strategies selected by the nodes $a, b$ and $c$.  Then it is easy to
check that Figure~\ref{fig:FIP-hardness2}(a) exhibits an infinite improvement path
in $\mathcal{G}(\snet)$; for convenience we list the corresponding
payoffs in Figure~\ref{fig:FIP-hardness2}(b).

\begin{figure}[ht]
\centering
\begin{tabular}{ccc}
$
\def\objectstyle{\scriptstyle}
\def\labelstyle{\scriptstyle}
\xymatrix@W=8pt @C=10pt @R=15pt{
(\underline{t_2},t_2,t_3)\ar@{=>}[r]& (t_3, \underline{t_2}, t_3)\ar@{=>}[r]& (t_3, t_1, \underline{t_3})\ar@{=>}[d]\\
(t_2, t_2, \underline{t_1})\ar@{=>}[u]& (t_2, \underline{t_1}, t_1) \ar@{=>}[l]& (\underline{t_3}, t_1, t_1)\ar@{=>}[l]\\
}$
&
$
\def\objectstyle{\scriptstyle}
\def\labelstyle{\scriptstyle}
\xymatrix@W=8pt @C=10pt @R=15pt{
(-\frac14,\frac12-\tau,\frac14)\ar@{=>}[r]& (\frac12-\tau, -\frac14-\tau, \frac14)\ar@{=>}[r]& (\frac12-\tau, -\frac14, \frac14)\ar@{=>}[d]\\
(-\frac14, \frac12 -\tau, -\frac14)\ar@{=>}[u]& (-\frac14, -\frac14, -\frac14) \ar@{=>}[l]& (-\frac14-\tau, -\frac14, \frac12)\ar@{=>}[l]\\
}$
\\
\\
(a) & (b)\\
\end{tabular}

\caption{\label{fig:FIP-hardness2} An infinite improvement path and the corresponding payoffs}
\end{figure}

Conversely, we show that if there is no solution to the considered
instance of the PARTITION problem, then the game $\mathcal{G}(\snet)$
has the FIP. Consider an improvement path $\xi$. Let $k\geq 0$ be the
first index (of a joint strategy) in $\xi$ such that in the suffix of
$\xi$ starting at $k$, none of the source nodes is anymore
scheduled. Let $\setn{n}$ denote the set $\{1,2,\ldots,n\}$,
$\strprofile^k$ the $k$-th element of $\xi$, $\val(t_1) = \sum_{i \in
  \setn{n} | \strprofile_i^k=t_1} a_i$ and $\val(t_2) = \sum_{i \in
  \setn{n} | \strprofile_i^k=t_2} a_i$. There are two cases.

\II \NI \emph{Case 1:} $\val(t_2) < \frac14$. By Property
\ref{partition-tau}, it follows that $\val(t_2) \leq \frac14
-\tau$. We argue that $t_2$ is never a better response for node
$a$. Suppose $\strprofile_a^k=t_3$. We have the following two
possibilities:
\begin{itemize}
\item $\strprofile^k_c=t_3$; then
  $\payoff_a(\strprofile^k)=\frac12-\tau$, so $t_3$ is node $a$'s best
  response,
\item $\strprofile^k_c \neq t_3$; then
  $\payoff_a(\strprofile^k)=-\frac14 - \tau$. If node $a$ switches to
  $t_2$, then the payoff is $\payoff_a(s^k_{-a},t_2) \leq (\frac14 -
  \tau) - \frac12 = -\frac14 - \tau$. Thus $t_2$ is not a better
  response for $a$.
\end{itemize}

Similarly, it is easy to check that if $\strprofile_a^k=t_0$, then
$t_2$ is not a better response for node $a$. If node $a$ never changes
its strategy in $\xi$ then clearly $\xi$ is finite. If it does, then
let $k_a >k$ be the first index in $\xi$ such that node $a$ changes
its strategy (i.e., $\strprofile_a^k \neq \strprofile_a^{k_a}$). Since
$t_2$ is never a better response for $a$, in the suffix of $\xi$ (call
it $\xi_a$) starting at $k_a$, node $a$ never chooses $t_2$. In
$\xi_a$ the better response of node $b$ can be $t_1$, $t_2$ or
$t_0$. However, since node $a$ never chooses $t_2$, $b$'s payoff in
$\xi_a$ depends only on the choice made by the source nodes (which
never changes in $\xi_a$). Therefore, there exists an index $k_b$ in
$\xi$, where $k<k_a<k_b$, such that in the suffix of $\xi$ starting at
$k_b$ (call it $\xi_b$), node $b$ never changes its strategy. The
payoff of node $c$ depends on the choice made by nodes $b$ and the
source node marked $\{t_3\}$. Since in $\xi_b$ node $b$ never changes
its strategy, there is a suffix of $\xi_b$ in which node $c$ never
changes its strategy. Therefore $\xi_b$, and hence $\xi$ is finite.

\II 

\NI 
\emph{Case 2:} $\val(t_2) > \frac14$. By Property
\ref{partition-tau}, it follows that $\val(t_2) \geq \frac14 +
\tau$. We argue that $t_1$ is never a better response for node
$b$. Suppose $\strprofile_b^k=t_2$. We have the following two
possibilities:

\begin{itemize}
\item $\strprofile^k_a=t_2$; then $\payoff_b(\strprofile^k) \geq
  \frac12$, so $t_2$ is node $b$'s best response,

\item if $\strprofile^k_a \neq t_2$, then $\payoff_b(\strprofile^k)
  \geq (\frac14+\tau)-\frac12-\tau = -\frac14$. If node $b$ switches
  to $t_1$, then since $\val(t_1) < \frac14$, the payoff is
  $\payoff_b(s^k_{-b},t_1) < \frac14 - \frac12 = -\frac14$. Thus $t_1$
  is not a better response for $b$.
\end{itemize}

Similarly, it is easy to check that if $\strprofile_b^k=t_0$ then
$t_1$ is not a better response for node $b$. This implies that there
exists an index $k_b >k$ in $\xi$ such that for the suffix of $\xi$
(call it $\xi_b$) starting at $k_b$, node $b$ never chooses $t_1$. In
$\xi_b$, since node $b$ never chooses $t_1$, the better response of
node $c$ can be either $t_3$ or $t_0$. Thus $c$'s payoff in $\xi_b$
depends only on the choice made by the source node labelled by
$\{t_3\}$. Therefore, there exists an index $k_c$ in $\xi$, where
$k<k_b<k_c$, such that in the suffix of $\xi$ starting at $k_c$ (call
it $\xi_c$), node $c$ is never scheduled. It follows that $\xi_c$ and
hence $\xi$ is finite.

Since the choice of the improvement path was arbitrary, it follows
that $\mathcal{G}(\snet)$ has the FIP.
\end{proof}

\section{Finite improvement property: special cases}
\label{sec:fip-special}

In this section we clarify whether the special classes of social
network games have the FIP and if not, what the complexity of the
problem is.  

First, we would like to mention the following question to which
we did not succeed to find an answer.
\II

\NI
\textbf{Open problem:}
Consider a network $\snet$ whose underlying graph is a simple cycle.
What is the complexity of determining 
whether $\mathcal{G}(\snet)$ has the FIP?
\II

In what follows we make use of the following simple observation.
\begin{note} \label{not:inf}
 Consider a game $\mathcal{G}(\snet)$. If a node $i$ is infinitely
 often selected in an improvement path, then so is a node $j \in
 \neighbour(i)$.  \HB
\end{note}

\subsection{Graphs with special strongly connected components}
\label{sec:fip-scc}
We begin with the strategic games associated with
the networks whose underlying graph is a DAG. 
Then the following positive result holds.

\begin{theorem}
\label{thm:fip-DAG}
Consider a network $\snet$ whose underlying graph is a DAG.
Then the game $\mathcal{G}(\snet)$ has the FIP.
\end{theorem}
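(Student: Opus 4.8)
\noindent\emph{Proof idea.} The plan is to argue by contradiction. Suppose $\xi = (s^1, s^2, \ldots)$ is an infinite improvement path in $\mathcal{G}(\snet)$. Since $\xi$ is infinite and there are only finitely many players, some player deviates at infinitely many steps of $\xi$; call a player \emph{active} if this is the case for it. Fix the level-by-level enumeration $\dagrank$ satisfying (\ref{equ:rank}) and let $i$ be an active player with $\dagrank(i)$ minimal among all active players. I would obtain a contradiction by showing that $i$ cannot in fact be active.

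If $i$ is a source node, then $\payoff_i(s) = \constutil$ whenever $s_i \in \prodset(i)$ and $\payoff_i(s) = 0$ when $s_i = t_0$, regardless of the other players' strategies. Hence the only strictly improving deviation for $i$ is a switch from $t_0$ to some product; after such a switch $i$'s payoff already equals the maximal value $\constutil$, so $i$ can make at most one improving deviation along $\xi$ --- contradicting that $i$ is active. If $i$ is not a source node, then $\neighbour(i) \neq \emptyset$, and since $i$ is scheduled infinitely often, Note~\ref{not:inf} yields a neighbour $j \in \neighbour(i)$ that is scheduled infinitely often as well, i.e.\ $j$ is active too. But $j \to i$, so there is a path from $j$ to $i$ in $\sgraph$, and (\ref{equ:rank}) then forces $\dagrank(j) < \dagrank(i)$, contradicting the minimality of $\dagrank(i)$. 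In both cases we reach a contradiction, so no infinite improvement path exists and $\mathcal{G}(\snet)$ has the FIP.

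If one prefers to avoid Note~\ref{not:inf}, the non-source case can instead be handled by a self-contained induction on $\dagrank$: assuming every node of rank $< \dagrank(i)$ is scheduled only finitely often, there is a suffix of $\xi$ along which all neighbours of $i$ keep their strategies fixed, so on that suffix $\payoff_i$ depends only on $s_i$; since each deviation of $i$ strictly increases $\payoff_i$ and $S_i$ is finite, $i$ can deviate only finitely often there, contradicting activity. I do not expect a genuine obstacle: the acyclicity of $\sgraph$ together with the locality of the payoff functions (each $\payoff_i$ depends only on $s_i$ and on the $s_j$ with $j \in \neighbour(i)$) does all the work, and the only mild subtlety is noticing that the constant-payoff structure at a source node permits at most one improving move.
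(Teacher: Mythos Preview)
Your proof is correct and follows essentially the same route as the paper: both argue by contradiction, invoke Note~\ref{not:inf} to push ``infinitely often scheduled'' back to a neighbour, and then exploit the DAG structure to reach a contradiction. The paper's version is terser---it simply iterates Note~\ref{not:inf} to produce an infinite backward walk in $\sgraph$ (hence a cycle)---whereas you pick an active player of minimal $\dagrank$ and handle source nodes explicitly; these are cosmetic differences, and your alternative self-contained induction on $\dagrank$ is a nice elaboration of what the paper leaves implicit.
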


\begin{proof}
Suppose not. Then by repeatedly using Note~\ref{not:inf} we obtain an
infinite path in the underlying graph. This yields a contradiction.
\end{proof}

We now generalize this result to a larger class of directed graphs.
First we consider the case of two player games.

\begin{theorem}
\label{thm:2-FIP}
Every two player social network game has the FIP.
\end{theorem}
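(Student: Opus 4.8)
The plan is to exploit the very restricted structure of a two-player social network game. Let the two players be $1$ and $2$. The underlying graph has at most the edges $(1,2)$ and $(2,1)$, and we distinguish cases according to which of these are present. If neither edge is present, both players are source nodes, each has a constant payoff function, and every improvement path has length at most $2$ (each player switches at most once, from $t_0$ to a product), so the FIP holds trivially. If exactly one edge is present, say $(1,2)$, then player $1$ is a source node and the graph is a DAG; the FIP then follows from Theorem~\ref{thm:fip-DAG}. So the only real case is the simple cycle $1 \to 2 \to 1$, which is covered by Theorem~\ref{thm:fbrp-cycle3}(i) for the best-response version — but here we need the full FIP, i.e.\ finiteness of \emph{every} improvement path, not merely of best-response improvement paths.

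For the cyclic case I would argue directly, tracking a potential-like quantity. First observe that $t_0$ is absorbing in the following weak sense: if player $i$ deviates to $t_0$, he gets payoff $0$, and he will only deviate away from $t_0$ later if some product becomes strictly profitable, i.e.\ has payoff $>0$. The key structural fact (as in the proof of Theorem~\ref{thm:fbrp-cycle3}) is that the payoff of player $i$ depends only on player $i \ominus 1$'s strategy: player $i$'s payoff is positive/zero exactly when $s_{i} = s_{i\ominus 1} \in P(i)$ and $w_{i\ominus 1\, i} \geq \theta(i,s_i)$ (resp.\ $>$), and is negative otherwise whenever $s_i \neq t_0$. I would then show that an infinite improvement path forces some player $i$ to select some fixed product $t$ infinitely often (a player switching infinitely often cannot always be landing on $t_0$, since switching \emph{to} $t_0$ requires having had a negative payoff, and after that he only moves again to something with strictly positive payoff). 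Each time $i$ lands on $t$ via an improving move, his payoff is $\geq 0$, hence $s_{i\ominus 1} = t$ at that moment; so $i \ominus 1 = 2$ (if $i=1$) or $i \ominus 1 = 1$ (if $i=2$) also takes value $t$ infinitely often. With only two players this means \emph{both} players take value $t$ infinitely often and, by the payoff formula, $p_1(\bar t), p_2(\bar t) \geq 0$, so $\bar t$ is a Nash equilibrium by Theorem~\ref{thm:cycle}.

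Now I would rule out the infinite path by a counting argument on how often each of the finitely many products can be "entered." Between two consecutive times player $1$ enters product $t$, player $2$ must have entered $t$ as well (to supply the nonnegative payoff), and symmetrically; so the moves of the two players into $t$ interleave. Consider the moment player $1$ first enters $t$ in the infinite suffix where only $t$ and $t_0$ are played by both (such a suffix exists once all products other than $t$ are played only finitely often — if several products are each played infinitely often, the previous paragraph applied to each of them forces each to be a Nash equilibrium value, and I would handle this by noting that in the two-player cycle, once a player sits at a Nash value $\bar t_1$ he has no incentive to move at all, contradicting that he moves infinitely often; hence exactly one product $t$ is played infinitely often). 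In that suffix, whenever player $2$ is at $t$ and player $1$ is at $t$, neither can improve — but the path is an improvement path, so someone moves, necessarily to $t_0$, which requires a \emph{strict} gain, impossible from payoff $\geq 0$ to $0$. This contradiction shows no such suffix, hence no infinite improvement path, exists.

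The main obstacle I anticipate is the bookkeeping in the last paragraph: carefully justifying that exactly one product is played infinitely often (ruling out oscillation between two distinct Nash-equilibrium products, which \emph{is} possible for best responses as Figure~\ref{fig:br1} shows, but which I claim cannot arise in an \emph{improvement} path on two nodes because moving off a nonnegative-payoff product to another product requires the destination to be strictly better, forcing strict inequalities that cannot hold simultaneously for both orientations of the single edge in a two-cycle). Pinning down this asymmetry between FBRP-failure and FIP-success for the two-node cycle is the crux; everything else is routine case analysis plus an appeal to Theorems~\ref{thm:cycle} and~\ref{thm:fip-DAG}.
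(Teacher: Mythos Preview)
Your reduction to the cyclic case $1 \to 2 \to 1$ via Theorem~\ref{thm:fip-DAG} is fine, but the core argument for the cycle has a real gap. You claim that ``each time $i$ lands on $t$ via an improving move, his payoff is $\geq 0$, hence $s_{i\ominus 1} = t$ at that moment.'' This is the reasoning behind Theorem~\ref{thm:fbrp-cycle3}$(ii)(\Rightarrow)$, but it is valid only for \emph{best} responses: since $t_0$ is always available with payoff $0$, any best response has payoff $\geq 0$. For a mere better response this fails. If player $1$ is at a product $t'$ while the opponent is at neither $t$ nor $t'$, then $p_1 = -\theta(1,t')$, and switching to $t$ is an improvement whenever $\theta(1,t) < \theta(1,t')$, even though the resulting payoff $-\theta(1,t)$ is negative and the opponent is not at $t$. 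So you cannot conclude that the opponent also plays $t$ infinitely often, and the rest of your argument (reaching a suffix in which only $t$ and $t_0$ occur, or handling two Nash-equilibrium products) never gets off the ground. Your own parenthetical only establishes the weaker fact that moves \emph{from $t_0$} land on strictly positive payoff; moves between two products need not.

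The paper's proof takes a genuinely different route that sidesteps this issue. Assuming the players alternate, it singles out the joint strategies of the form $(\underline{t},t)$ or $(t,\underline{t})$ (``matches'') and classifies the segment between two consecutive matches into four types, according to which player is about to move at each end. For each type one bounds the change in $p_1$ and $p_2$ between the two matched values $\bar t$ and $\overline{t_1}$: types 2 and 3 strictly increase both, type 1 increases $p_1$ by more than $w_{21}$ while decreasing $p_2$ by less than $w_{12}$, and type 4 is symmetric. Comparing the counts of type-1 and type-4 segments between two matches of the same orientation then forces one of $p_1,p_2$ to strictly increase, so the same match cannot recur. Hence only finitely many matches occur, and in the match-free suffix every step strictly increases social welfare, so the path terminates. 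The point is that this bookkeeping tracks payoffs \emph{at matches}, where by definition the two players agree, rather than at arbitrary entries into a product; that is exactly the control your argument lacks.
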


\begin{proof}
By Theorem~\ref{thm:fip-DAG} we can assume that the underlying graph is a 
cycle, say $1 \to 2 \to 1$. Consider an improvement path $\rho$. 
By removing, if necessary, some steps we can assume that the players 
alternate their moves in $\rho$.

In what follows given an element of $\rho$ (that is not the last one)
we underline the strategy of the player who moves, i.e., selects a
better response.  We call each element of $\rho$ of the type
$(\underline{t}, t)$ or $(t, \underline{t})$ a \emph{match} and use
$\Rightarrow$ to denote the transition between two consecutive joint
strategies in $\rho$.  Further, we shorten the statement
``each time player $i$ switches his strategy his payoff strictly
increases and it never decreases when his opponent switches strategy''
to ``player $i$'s payoff steadily goes up''.

Consider now two successive matches in $\rho$, based respectively on
the strategies $t$ and $t_1$. The corresponding segment of $\rho$
is one of the following four types.

\NI
\emph{Type 1}. $(\underline{t}, t) \Rightarrow^{*} (\underline{t_1}, t_1)$.

The fragment of $\rho$ that starts at 
$(\underline{t}, t)$ and finishes at $(\underline{t_1}, t_1)$ has the following form:
\[
(\underline{t}, t) \Rightarrow (t_2, \underline{t}) \Rightarrow^{*} (t_1, \underline{t_3}) \Rightarrow
(\underline{t_1}, t_1).
\]

Note that player $1$'s payoff can drop in a segment of $\rho$ only if this segment
contains a transition of the form $(t', \underline{t'}) \Rightarrow (\underline{t'}, t_1)$.
So in the considered segment player $1$'s payoff steadily goes up.  Additionally, in the step $(t_1,
\underline{t_3}) \Rightarrow (\underline{t_1}, t_1)$ his payoff increases by
$w_{2 1}$.

In turn, in the step $(\underline{t}, t) \Rightarrow (t_2,
\underline{t})$ player $2$'s payoff decreases by $w_{1 2}$ and in the
remaining steps his payoff steadily goes up.
So $p_1(\bar{t}) + w_{2 1} < p_1(\overline{t_1})$ and $p_2(\bar{t}) - w_{1 2} < p_2(\overline{t_1})$.
\II

\NI
\emph{Type 2}. $(\underline{t}, t) \Rightarrow^{*} (t_1, \underline{t_1})$.

For the analogous reason as above player $1$'s payoff steadily goes up.  In
turn, in the first step of $(\underline{t}, t) \Rightarrow^{*} (t_1,
\underline{t_1})$ the payoff of player 2 decreases by $w_{1 2}$, while
in the last step (in which player 1 moves) his payoff increases by
$w_{1 2}$.  So these two payoff changes cancel each other.
Additionally, in the remaining steps player $2$'s payoff steadily goes
up.  So $p_1(\bar{t}) < p_1(\overline{t_1})$ and $p_2(\bar{t}) <
p_2(\overline{t_1})$.  \II

\NI
\emph{Type 3}. $(t, \underline{t}) \Rightarrow^{*} (\underline{t_1}, t_1)$.

This type is symmetric to Type 2, so
$p_1(\bar{t}) < p_1(\overline{t_1})$ and $p_2(\bar{t}) < p_2(\overline{t_1})$.
\II

\NI
\emph{Type 4}. $(t, \underline{t}) \Rightarrow^{*} (t_1, \underline{t_1})$.

This type is symmetric to Type 1, so
$p_1(\bar{t}) - w_{2 1} < p_1(\overline{t_1})$ and $p_2(\bar{t}) + w_{1 2} < p_2(\overline{t_1})$.
\II

We summarize in Table~\ref{tab:T} the changes in the payoffs $p_1$ and
$p_2$ between the considered two matches.

\begin{table}[htbp]
  \begin{center}
    \leavevmode
\begin{tabular}{|l|l|l|}
\hline 
Type & $p_1$ & $p_2$ \\
\hline 
\hline 
1    & increases      & decreases \\
     & by $> w_{2 1}$ & by $< w_{1 2}$ \\
\hline
2, 3 & increases      & increases \\
\hline
4    & decreases      & increases \\
     & by $< w_{2 1}$ & by $> w_{1 2}$ \\
\hline
\end{tabular}
  \end{center}
    \caption{\label{tab:T}Changes in $p_1$ and $p_2$}
\end{table}

Consider now a match $(\underline{t}, t)$ in $\rho$ and a match $(\underline{t_1}, t_1)$
that appears later in $\rho$.
Let $T_i$ denote the number of internal segments of type $i$ that occur in the fragment of $\rho$
that starts with $(\underline{t}, t)$ and ends with $(\underline{t_1}, t_1)$.
\II

\NI
\emph{Case 1}. $T_1 \geq T_4$.

Then Table~\ref{tab:T} shows that the aggregate increase in $p_1$ in segments of type 1
exceeds the aggregate decrease in segments of type 4.
So $p_1(\bar{t}) < p_1(\overline{t_1})$.
\II

\NI
\emph{Case 2}. $T_1 < T_4$.

Then analogously Table~\ref{tab:T} shows that 
$p_2(\bar{t}) < p_2(\overline{t_1})$.
\II

We conclude that $t \neq t_1$. By symmetry the same conclusion holds
if the considered matches are of the form $(t, \underline{t})$ and
$(t_1, \underline{t_1})$.  This proves that each match occurs in
$\rho$ at most once.  So in some suffix $\eta$ of $\rho$ no match occurs.
But each step in $\eta$ increases the social welfare, so $\eta$ is
finite, and consequently $\rho$ is.
\end{proof}

In social network games the players share at least one strategy,
$t_0$, that ensures each of them the zero payoff. Also, the weights and thresholds
are drawn from specific intervals. However, these properties
are not used in the above proof. As a result the above
proof shows that each of the following two player games has
the FIP.

\begin{itemize}
\item The set of strategies of player $i$ is a finite set $S_i$,

\item the payoff function is defined by $p_i(s) := f_i(s_i) + a_i  (s_i = s_{-i})$,
where $f_i: S_i \to \mathbb{R}$, $a_i > 0$ and $(s_i = s_{-i})$ is defined by
\[
(s_i = s_{-i}) := \begin{cases}
        1     & \mathrm{if}\  s_i = s_{-i} \\
        0     & \mathrm{otherwise}
        \end{cases}
\]
\end{itemize}
Intuitively, $a_i$ can be viewed as a bonus for player $i$ for coordinating with his opponent.

It is worthwhile to notice that if we allowed that the weights depend on the product
and modified the definition of the payoff function
to
\[
\sum\limits_{j \in \inflset_i^t(\strprofile)} w_{ji}(t) - \theta(i,t),
\]
then the above result would not hold.

\begin{example}
\label{exa:12}
\rm

Suppose that the product sets of player 1 and player 2 are $\{t_1,
t_2, t_3\}$ and $\{t_1, t_3, t_4\}$ respectively. Define the threshold
functions as follows:
\[
\begin{array}{c}
\theta(1, t_1) = 0.9, \  \theta(1, t_2) = 0.8, \ \theta(1, t_3) = 1, \\
\theta(2, t_1) = 1, \ \theta(2, t_3) = 0.9, \ \theta(2, t_4) = 0.8,
\end{array}
\]
and suppose the parametrized weight functions are such that
\[
w_{2 1}(t_1) = 0, \ w_{2 1}(t_3) = 0.3, \ w_{1 2}(t_1) = 0.3, \ w_{1 2}(t_3) = 0.
\]

Then it is easy to check that the iteration of the following sequence forms an
infinite improvement path:
\[
(\underline{t_1}, t_1) \Rightarrow (t_2, \underline{t_1}) \Rightarrow 
(\underline{t_2}, t_3) \Rightarrow (t_3, \underline{t_3}) \Rightarrow 
(\underline{t_3}, t_4) \Rightarrow (t_1, \underline{t_4}) \Rightarrow 
(\underline{t_1}, t_1).
\]
\HB
\end{example}

We can now draw a conclusion about a larger class of 
social network games.

\begin{theorem}
\label{thm:FIP-SCC2}
Consider a network $\snet$ such that each strongly connected component of
the underlying graph is a cycle of length 2.
Then the game $\mathcal{G}(\snet)$ has the FIP.
\end{theorem}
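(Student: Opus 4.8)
The plan is to argue by contradiction, isolating a two‑node strongly connected component as the only possible engine of an infinite improvement path and then invoking Theorem~\ref{thm:2-FIP}. Suppose $\mathcal{G}(\snet)$ lacks the FIP and fix an infinite improvement path $\rho$. Since $\rho$ is infinite and $V$ is finite, the set $I$ of nodes scheduled infinitely often along $\rho$ is non‑empty, and by Note~\ref{not:inf} each $i\in I$ has an in‑neighbour lying in $I$. Fix a topological ordering of the SCCs of the underlying graph and let $C$ be the first SCC in this ordering that meets $I$. If $C=\{i\}$ were a singleton, then $i\in I$ would have an in‑neighbour in $I$, yet all in‑neighbours of $i$ lie in SCCs strictly preceding $C$, none of which meets $I$ — a contradiction. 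Hence, by the hypothesis that every SCC is a cycle of length $2$, $C$ is a $2$‑cycle $\{u,v\}$ with $u\to v$ and $v\to u$; moreover every in‑neighbour of $u$ or of $v$ that lies outside $\{u,v\}$ lies in an SCC strictly before $C$ and is therefore scheduled only finitely often in $\rho$. Choose a step $N^{\ast}$ after which none of these ``external'' neighbours ever moves again.

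The core of the argument concerns the tail of $\rho$ from $N^{\ast}$ on. There $p_u$ and $p_v$ depend only on $s_u$ and $s_v$, since the strategies of all their other in‑neighbours are frozen; folding those fixed contributions into the thresholds of $u$ and $v$ realizes the restriction as a two‑player social network game on the cycle $u\leftrightarrow v$, albeit possibly with thresholds outside $(0,1]$. The key point is that the proof of Theorem~\ref{thm:2-FIP} never uses the ranges of the weights or thresholds (as remarked just after that theorem), so this two‑player game still has the FIP. On the other hand, deleting from the suffix of $\rho$ after $N^{\ast}$ every step in which some player other than $u$ or $v$ moves — each such step leaves $s_u$ and $s_v$, and hence $p_u$ and $p_v$, unchanged — produces a path in this two‑player game along which every move is a strict improvement (a better response for $u$ in $\mathcal{G}(\snet)$ is a better response in the reduced game, because $p_u$ sees only $s_u$ and $s_v$). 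This path is infinite because $u,v\in I$, contradicting the FIP of the two‑player game and completing the proof.

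I expect the only genuinely delicate point to be this last reduction: checking that, once its predecessor components are frozen, a $2$‑cycle component behaves \emph{exactly} like a two‑player social network game up to a harmless shift of thresholds, and that Theorem~\ref{thm:2-FIP} — more precisely its proof — is robust enough to apply to out‑of‑range thresholds, which the paper's remark guarantees. By contrast, the combinatorial skeleton (choosing the topologically least SCC meeting $I$ and using Note~\ref{not:inf}) and the bookkeeping that restricting $\rho$ to the moves of $u$ and $v$ yields a legitimate infinite improvement path in the reduced game are both routine.
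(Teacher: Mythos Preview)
Your proof is correct and follows essentially the same approach as the paper's: both argue by contradiction, use Note~\ref{not:inf} together with a topological ordering of the condensation to locate a minimal $2$-cycle $\{u,v\}$ whose nodes are scheduled infinitely often while all their external in-neighbours are eventually frozen, and then restrict the tail of the improvement path to $u$ and $v$ to obtain an infinite improvement path in a two-player game, contradicting Theorem~\ref{thm:2-FIP} via the remark that its proof does not rely on the numerical ranges of weights and thresholds. Your version is slightly more explicit in ruling out singleton SCCs and in spelling out why deleting the moves of other players preserves the improvement property, but the structure and the delicate point you identify---that the reduced two-player game may have out-of-range thresholds---are exactly those the paper highlights.
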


\begin{proof}
  Suppose $\snet =(\sgraph,\products,\prodset,\theta)$. Consider the
  condensation of $\sgraph$, i.e., the DAG $\sgraph'$ resulting from
  contracting each cycle of $\sgraph$ to a single node.  We enumerate
  the nodes of $\sgraph$ using the function $\dagrank()$ (introduced
  in Subsection~\ref{subsec:dag}) such that if $\dagrank(i) < \dagrank(j)$, then there is no path in
  $\sgraph$ from $j$ to $i$, 
and subsequently modify it to an
  enumeration $\dagrank'()$ of the nodes of $\sgraph'$, by replacing
  each contraction of a cycle by its two nodes.

  
  Suppose now that an infinite improvement path $\rho$ in
  $\mathcal{G}(\snet)$ exists.  Choose the first node $i$ in the
  enumeration $\dagrank'()$ that is infinitely often selected in
  $\rho$. By Note~\ref{not:inf} $i$ lies on a cycle, say $i \to i'\to
  i$, in $\sgraph$.  Consider a suffix of $\rho$ in which the nodes
  that precede $i$ in $\dagrank'()$ do not appear anymore.  In
  particular, by the choice of $i$, the neighbours of $i$ or $i'$
  precede $i$ in $\dagrank'()$, so they do not appear in this suffix
  either.  Delete from each element of this suffix the strategies of
  the nodes that differ from $i$ and $i'$. We obtain in this way an
  infinite improvement path in the two player game $G'$ associated
  with the weighted directed graph related to the cycle $i \to i'\to
  i$, which contradicts Theorem~\ref{thm:2-FIP}.
  
  There is a small subtlety in the last step that we should clarify.
  The payoff functions in the game $G'$ need to take into account the
  weights of the edges from all the neighbours of $i$ and $i'$ and the
  final strategies chosen by these neighbours. For instance, in the
  case of the directed graph from Figure~\ref{fig:3} the game $G'$ has
  the players $i$ and $i'$ but the weights of the edges $k \to i$ and
  $k \to i'$ and the final strategy of node $k$ need to be taken into
  account in the computation of the payoff functions for,
  respectively, nodes $i$ and $i'$.

\begin{figure}[ht]
\centering
$
\def\objectstyle{\scriptstyle}
\def\labelstyle{\scriptstyle}
\xymatrix@W=10pt @R=25pt @C=15pt{
& k \ar[dl]_{} \ar[dr]_{}\\
i \ar@/^0.7pc/[rr]^{} & &i' \ar@/^0.7pc/[ll]^{}\\
}$
\caption{\label{fig:3} A directed graph}
\end{figure}
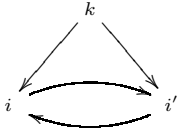

So the game $G'$ is not exactly the social network game
associated with the weighted directed graph related to $i \to i'\to
i$. However, the observation stated after the proof of
Theorem~\ref{thm:2-FIP} allows us to conclude that $G'$ does have the
FIP.
\end{proof}

\subsection{Graphs with no source nodes}
We conclude with the following analogue of
Theorem~\ref{thm:fbrp-hard-nosource}.

\begin{theorem} \label{thm:fip-hard-nosource}
For a network $\snet$ whose underlying graph has no source nodes,
deciding whether the game $\mathcal{G}(\snet)$ has the FIP is
co-NP-hard.
\end{theorem}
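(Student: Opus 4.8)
The plan is to adapt the proof of Theorem~\ref{thm:fip-hard} in exactly the way that Theorem~\ref{thm:fbrp-hard-nosource} was obtained from Theorem~\ref{thm:fbrp-hard}: reduce from PARTITION (with the numbers normalised so that $\sum_{i=1}^n a_i = \tfrac12$) using the network of Figure~\ref{fig:partition-fip}, but `twin' every source node so as to destroy all source nodes. Concretely, each node $i \in \{1,\ldots,n\}$ (product set $\{t_1,t_2\}$) is paired with a fresh node $i'$ with the same product set by adding the edges $(i,i')$ and $(i',i)$, and the source node labelled $\{t_3\}$ --- call it $p$ --- is paired with a fresh node $p'$ with product set $\{t_3\}$ by a pair of mutual edges; all the new weights and thresholds are chosen so that whenever a twinned pair adopts a common product its members get a strictly positive payoff (and the constraint $\sum_{j \in \neighbour(i)} w_{ji} \leq 1$ is kept, which is easy since the only edge into $i$ is from $i'$). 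All other weights and thresholds are exactly as in the proof of Theorem~\ref{thm:fip-hard}, including the gap parameter $\tau$ and Property~\ref{partition-tau}. The resulting graph has no source nodes, and the construction is polynomial.

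For the direction ``PARTITION solvable $\Rightarrow$ no FIP'': given $S$ with $\sum_{i\in S}a_i=\tfrac14$, extend the joint strategy used in the proof of Theorem~\ref{thm:fip-hard} by giving each twin $i'$ the same product as $i$ (that is, $t_1$ for $i\in S$ and $t_2$ otherwise) and giving $p'$ the product $t_3$. Each twinned pair is then coordinated on a common product, so none of the twin nodes, nor the nodes $1,\ldots,n$, nor $p$, has a profitable deviation; hence the infinite improvement path of Figure~\ref{fig:FIP-hardness2} --- which moves only the nodes $a$, $b$ and $c$ --- is still an infinite improvement path in $\mathcal{G}(\snet)$, so the game does not have the FIP.

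For the converse ``PARTITION unsolvable $\Rightarrow$ FIP'': suppose $\xi$ is an infinite improvement path. Since a source node of the original network has no incoming edges, after twinning the only neighbour of $i$ is $i'$ and vice versa, so the payoff of $i$ (resp.\ $i'$) in $\mathcal{G}(\snet)$ depends only on $s_i$ and $s_{i'}$, and the moves of $i$ and $i'$ in $\xi$ restrict to an improvement path in the two player social network game on $\{i,i'\}$; by Theorem~\ref{thm:2-FIP} this game has the FIP, so $i$ and $i'$ are scheduled only finitely often in $\xi$. The same applies to the pair $\{p,p'\}$. Hence $\xi$ has a suffix in which only $a$, $b$ and $c$ move, and in that suffix the values $\val(t_1)=\sum_{i:\,s_i=t_1}a_i$ and $\val(t_2)=\sum_{i:\,s_i=t_2}a_i$ (over the settled source-like nodes) and the strategy of $p$ are all fixed. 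Because PARTITION has no solution, $\val(t_2)\neq\tfrac14$ --- indeed, $\val(t_2)=\tfrac14$ would make $\{i:\,s_i=t_2\}$ a solution --- so by Property~\ref{partition-tau} either $\val(t_2)\le\tfrac14-\tau$ or $\val(t_2)\ge\tfrac14+\tau$. In the first case the argument of Case~1 of the proof of Theorem~\ref{thm:fip-hard} applies essentially verbatim ($t_2$ is never a better response for $a$, hence $a$, then $b$, then $c$ eventually stop moving), and in the second case the argument of Case~2 applies ($t_1$ is never a better response for $b$, hence $b$, then $c$ stop moving); either way $\xi$ is finite, a contradiction. Therefore $\mathcal{G}(\snet)$ has the FIP, and the reduction establishes co-NP-hardness.

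The step I expect to be the main obstacle is the last one in the converse: a twinned pair $(i,i')$ may settle on the trivial equilibrium $(t_0,t_0)$ rather than on a product, so one cannot claim $s_i\in\{t_1,t_2\}$ for the settled source-like nodes; what rescues the argument is that $\val(t_2)=\tfrac14$ would \emph{still} yield a PARTITION solution, so the $\tau$-gap dichotomy --- and hence the Case~1/Case~2 analysis of Theorem~\ref{thm:fip-hard}, which only needs the neighbours of $a$, $b$ and $c$ to have fixed strategies --- remains available no matter which equilibrium each twin pair reaches. A minor secondary point is merely to verify that the twin-edge weights and thresholds can be chosen to make coordination strictly profitable, which is routine.
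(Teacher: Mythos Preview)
Your proposal is correct and follows essentially the same approach as the paper: twin every source node of the network from Figure~\ref{fig:partition-fip} with a fresh partner so that the resulting graph has no source nodes, invoke Theorem~\ref{thm:2-FIP} to conclude that each twinned pair is scheduled only finitely often in any improvement path, and then reuse the Case~1/Case~2 analysis of Theorem~\ref{thm:fip-hard} on the suffix. The paper is marginally more explicit about the sub-case in which the $\{t_3\}$ twin pair settles on $t_0$ (in Case~2 this forces $c$'s better response to be $t_0$), but you have correctly identified this as the delicate point and your observation that the $\tau$-gap dichotomy survives regardless of which equilibrium each twin pair reaches is exactly what is needed.
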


\begin{proof}
The proof is a modification of Theorem~\ref{thm:fip-hard}.  Given an
instance of the PARTITION problem we use the following modification of
the network given in Figure~\ref{fig:partition-fip}.  We `twin' each
node $i \in \{1, \LL, n\}$ with a new node $i'$, also with the product
set $\{t_1, t_2\}$, by adding the edges $(i,i')$ and $(i',i)$. We also
`twin' the node marked $\{t_3\}$, call it $d$, with a new node $d'$
also with the product set $\{t_3\}$, by adding the edges $(d,d')$ and
$(d',d)$.  Next, we choose the weights $w_{i i'}, w_{i' i}$, where $i
\in \{1, \LL, n\}$, and $w_{d d'}$ and $w_{d' d}$ and the
corresponding thresholds so that when $i$ and $i'$ or $d$ and $d'$
adopt a common product, their payoffs are positive.

Suppose that a solution to the considered instance of the PARTITION
problem exists. Then we extend the joint strategy considered in the
proof of Theorem~\ref{thm:fip-hard} by additionally assigning $t_1$ to
each node $i'$ such that $i \in S$, $t_2$ to each node $i'$ such that
$i \in \{1, \LL, n\} \setminus S$ and $t_3$ to nodes $d$ and $d'$.
Then, as before, there is no finite improvement path starting in this
joint strategy, so $\mathcal{G}(\snet)$ does not have the FIP.

Suppose that no solution to the considered instance of the PARTITION
problem exists. We show that then the game has the FIP. Consider an
improvement path $\xi$. By Theorem~\ref{thm:2-FIP}, each network
consisting of just the twined nodes $(1,1'), \ldots, (n,n')$ and
$(d,d')$ has the FIP. Therefore, there exists a joint strategy
$\strprofile^k$ in $\xi$ such that in the suffix of $\xi$ starting at
$\strprofile^k$, none of the following nodes are scheduled:
$1,1',\ldots,n,n',d,d'$. Again, we have the two cases as analysed in
Theorem~\ref{thm:fip-hard}. When $\val(t_2) < \frac14$, using the same
argument as in Theorem~\ref{thm:fip-hard} we conclude that the
improvement path $\xi$ is finite. In the case when $\val(t_2) >
\frac14$ the argument in Theorem~\ref{thm:fip-hard} shows that there
exists an index $k_b >k$ in $\xi$ such that for the suffix of $\xi$
starting at $k_b$ (call it $\xi_b$), node $b$ never chooses $t_1$. If
$d$ and $d'$ choose $t_0$ in $\strprofile^k$, since $b$ never chooses
$t_1$ in $\xi_b$, the better response of $c$ in $\xi_b$ is $t_0$. From
this it follows that $\xi_b$ and hence $\xi$ is finite. If $d$ and
$d'$ choose $t_3$ in $\strprofile^k$, then it follows by the argument
given in Theorem~\ref{thm:fip-hard} that $\xi$ is finite.  
\end{proof}

\section{The uniform FIP property: general case}
\label{sec:uniform-FIP}

Let us reconsider the games associated with the networks from
Figures~\ref{fig:br0}(a) and \ref{fig:br1}(a). Since these games do
not have the FBRP, by definition, they do not have the FIP either. In
spite of this fact, for any initial joint strategy there exists a
finite improvement path.  Indeed, it suffices to schedule the players
in the clockwise manner, that is to choose each time the first player
who did not select a best response.  This is an instance of a more
general result proved in the next section.

By a \bfe{scheduler} we mean a function $f$ that given a joint
strategy $s$ that is not a Nash equilibrium selects a player who did
not select a best response in $s$. An improvement path
$\xi= \strprofile^1,\strprofile^2,\ldots$ \bfe{respects} a
scheduler $f$ if for all $k$ smaller than the length of $\xi$,
$\strprofile^{k+1}=(\strprofile_i',\strprofile^{k}_{-i})$, where
$f(\strprofile^k)=i$. We say that a strategic game has the
\bfe{uniform FIP} if there exists a scheduler $f$ such that all
improvement paths $\rho$ which respect $f$ are finite.

In this and the next section we investigate the complexity of
determining whether a social network game has the uniform FIP.  We
begin by establishing the following counterpart of Theorem
\ref{thm:fip-hard}.

\begin{theorem} \label{thm:ufip-hard}
Deciding whether for a network $\snet$ the game
$\mathcal{G}(\snet)$ has the uniform FIP is co-NP-hard.
\end{theorem}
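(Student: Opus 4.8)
The plan is to reduce from PARTITION, following the template of the proofs of Theorems~\ref{thm:fip-hard} and~\ref{thm:np}. Given a normalised instance $(a_1,\ldots,a_n)$ I would build in polynomial time a network $\snet$ out of: (a)~the PARTITION source nodes $1,\ldots,n$ feeding two nodes $a$ and $b$ with weights $a_i$; (b)~a ``bad cycle'' gadget on nodes $a,b,c$ as in Figure~\ref{fig:partition-fip}, with a separation parameter $\tau=1/(4r_1\cdots r_n)$ and thresholds tuned so that Property~\ref{partition-tau} applies; and (c)~one further ingredient whose sole purpose is to make the infinite improvement path produced when a partition exists \emph{unescapable by any scheduler} -- either a copy of the Nash-equilibrium-free triangle of Figure~\ref{fig:noNe1}, attached so that it becomes ``live'' exactly when the bad cycle is active (as in Theorem~\ref{thm:np}, where feeding such a triangle from a node that is forced off $t_0$ destroys all Nash equilibria), or a re-tuning of $\theta(a,\cdot),\theta(b,\cdot),\theta(c,\cdot)$ that turns the six joint strategies of the cycle in Figure~\ref{fig:FIP-hardness2} into a locked trap. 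The target property is: $(a_1,\ldots,a_n)$ admits a partition iff $\mathcal{G}(\snet)$ does \emph{not} have the uniform FIP; since PARTITION is NP-complete this yields co-NP-hardness of deciding the uniform FIP.

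For the direction ``partition exists $\Rightarrow$ no uniform FIP'', fix a solution $S$, set each source node $i$ to $t_1$ for $i\in S$ and to $t_2$ otherwise, and the auxiliary source(s) to $t_3$. I would exhibit a non-empty set $T$ of joint strategies, all with this source assignment, such that (i)~$T$ contains no Nash equilibrium, and (ii)~for every $s\in T$ and every player $i$ not playing a best response in $s$, \emph{some} better response of $i$ leads back into $T$. Given an arbitrary scheduler $f$, one then builds an infinite improvement path respecting $f$ by starting in $T$ and, at each step, answering $f$'s choice of player by a better-response move that stays in $T$; by (i) this path never reaches a Nash equilibrium, hence is infinite. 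The honest realisation of $T$ makes the six strategies of Figure~\ref{fig:FIP-hardness2} a trap: in each of them exactly one of $a,b,c$ deviates, and the weights and thresholds must be arranged so that this deviator has a \emph{unique} better response -- in particular $t_0$ is never a better response, which forces the deviator's current payoff to be $\geq 0$ -- while the other two nodes sit exactly at a best response. Alternatively, if the construction instead kills all Nash equilibria whenever a partition exists, then $T$ can be taken to be the whole strategy space.

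For the converse ``no partition $\Rightarrow$ uniform FIP'', I would reuse the argument of Theorem~\ref{thm:fip-hard} essentially verbatim: after the finitely many moves of the source nodes, an arbitrary improvement path $\xi$ freezes the values $\val(t_1),\val(t_2)$; Property~\ref{partition-tau} supplies a $\tau$-gap, so either $t_2$ is never a better response for $a$ or $t_1$ is never a better response for $b$; hence $a$ (respectively $b$) eventually stabilises, then the node below it, and finally $c$, so $\xi$ is finite. Thus $\mathcal{G}(\snet)$ has the FIP and a fortiori the uniform FIP (every scheduler then works). Combining the two directions gives the equivalence, and hence the theorem.

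The step I expect to be the real obstacle is (ii). Unlike the FIP and FBRP hardness proofs, where a single infinite (best-response) improvement path suffices, the witness here must survive \emph{every} scheduler, so one cannot afford states in which the scheduler may pick an ``escaping'' deviator -- precisely the mechanism that, as in Figures~\ref{fig:br0} and~\ref{fig:br1}, normally lets such cycles be broken and is the reason simple-cycle games \emph{do} have the uniform FIP. Making the PARTITION-controlled cycle genuinely lockable -- keeping $t_0$ sub-optimal all the way around it while simultaneously keeping intact the $\tau$-gap argument for the no-partition case -- or, alternatively, arranging the outright absence of a Nash equilibrium exactly when a partition exists, is where the construction must go beyond that of Theorem~\ref{thm:fip-hard}, and where the bulk of the verification will lie.
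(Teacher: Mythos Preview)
Your proposal circles the right idea but does not quite land on the paper's proof, and the gap is precisely in the ``no partition $\Rightarrow$ uniform FIP'' direction.

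The alternative you flag in~(c)---attaching the Nash\hyp{}equilibrium\hyp{}free triangle of Figure~\ref{fig:noNe1}---is exactly what the paper does. It glues the PARTITION gadget of Figure~\ref{fig:partition} (not Figure~\ref{fig:partition-fip}) to the triangle by identifying $a$ and $b$ with two of the triangle's three source nodes; the normalisation is $\sum a_i=1$, thresholds $\tfrac12$, no $\tau$, no Property~\ref{partition-tau}. When a partition exists, one fixes the sources according to $S$, so that $a,b$ (and the remaining source $g$) all sit at a best response with their respective products; every scheduler is then forced to work inside the triangle, which has no Nash equilibrium, so \emph{every} improvement path from that start is infinite. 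Your trap set $T$ is simply ``all joint strategies with those outer nodes frozen,'' and condition~(ii) holds trivially because only $c,d,e$ can ever be scheduled. So far your plan and the paper agree.

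Where they diverge is the converse. You propose to reuse the FIP argument of Theorem~\ref{thm:fip-hard} ``essentially verbatim'' to conclude that when no partition exists the game has the FIP, and hence the uniform FIP. But that argument is tailored to the network of Figure~\ref{fig:partition-fip}; once you replace the $a,b,c$ cycle by the no\hyp{}NE triangle the payoff structure is different and the case analysis does not transfer. The paper does \emph{not} attempt to prove the FIP in this case. Instead it exploits the definition of uniform FIP directly: it exhibits a single concrete scheduler---process the nodes in the order $1,\ldots,n,a,b,g,c,e,d$, always picking the first one not at a best response---and verifies by a short case split (on which of $a,b$ is forced to $t_0$) that every improvement path respecting this scheduler terminates. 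This is both simpler and exactly matched to the target property: for uniform FIP one scheduler suffices, so there is no need to control \emph{all} improvement paths.

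In short: drop the $\tau$ machinery and the Figure~\ref{fig:partition-fip} cycle, commit to the triangle gadget, and for the ``no partition'' direction build an explicit scheduler rather than trying to prove the FIP.
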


\begin{proof}
  We prove that the complement of the problem is NP-hard.  As before
  we use an instance of the PARTITION problem in the form of $n$
  positive rational numbers $(a_1,\LL,a_n)$ such that $\sum_{i=1}^n
  a_i = 1$.  To construct the appropriate network $\snet$, as in the
  proof of Theorem~\ref{thm:np}, we employ the networks given in
  Figures~\ref{fig:noNe1} and \ref{fig:partition}. However, we use
  different product sets in the second network and combine these two
  networks differently.
  
  For each node $i\in\{1,\LL,n\}$ we set $P(i) = \{t_1, t_2\}$.
  Further we set $P(a) = \{t_1\}$ and $P(b) = \{t_2\}$.  As before we
  set $w_{i a} = w_{i b} = a_i$ and we assume that the threshold of
  the nodes $a$ and $b$ is constant and equals $\frac12$.  We now
  identify the nodes $a$ and $b$ of the network from
  Figure~\ref{fig:partition} respectively with the nodes marked by
  $\{t_1\}$ and $\{t_2\}$ in the network in Figure~\ref{fig:noNe1}.
  The resulting network is given in Figure~\ref{fig:ufip-hard}.

\begin{figure}[ht]
\centering
$
\def\objectstyle{\scriptstyle}
\def\labelstyle{\scriptstyle}
\xymatrix@W=10pt @R=27pt @C=15pt{
1 \ar[d]_{} \ar[rrrrd]_{}& 2 \ar[ld]^{} \ar@/^0.7pc/[rrrd]^{}& & \cdots &n \ar[lllld]^{} \ar[d]^{} \\
a \ar@{}[u]^<{\{t_1\}} \ar[d]{} & & & &b \ar@{}[u]_<{\{t_2\}} \ar[d]{} \\
c  \ar[rrd]^{} \ar@{}[u]^<{\{t_1,t_2\}}& & & &d  \ar[llll]^{} \ar@{}[u]_<{\{t_2,t_3\}}\\
& &e \ar[rru]^{} \ar@{}[rru]_<{\{t_1,t_3\}}\\
& &g \ar[u]^{} \ar@{}[u]_<{\{t_3\}}\\
}$
\caption{\label{fig:ufip-hard}A network related to the uniform FIP}
\end{figure}
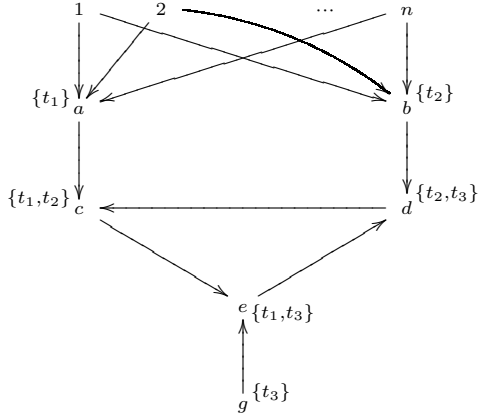

Suppose now that a solution to the considered instance of the
PARTITION problem exists, that is for some set $S \sse \{1, \LL, n\}$
we have $\sum_{i\in S} a_i = \sum_{i\not\in S} a_i = \frac{1}{2}$.
Consider the game $\mathcal{G}(\snet)$. 
Take the joint strategy $s$ formed by the following strategies:

\begin{itemize}

\item $t_1$ assigned to each node $i \in S$ and the nodes $a$ and $c$,

\item $t_2$ assigned to each node $i \in \{1, \LL, n\} \setminus S$
and the nodes $b$ and $d$,

\item $t_3$ assigned to the nodes $e$ and $g$.

\end{itemize}

Any improvement path that starts in this joint strategy will not
change the strategies assigned to the nodes $a, b$ and $g$. So if such
an improvement path terminates, it produces a Nash equilibrium in the
game associated with the network given in Figure~\ref{fig:noNe1} of
Example~\ref{exa:nonash}. But we showed there that this
game does not have a Nash equilibrium. Consequently, there is no
finite improvement path in the game $\mathcal{G}(\snet)$ that starts
in the above joint strategy and a fortiori $\mathcal{G}(\snet)$ does
not have the uniform FIP.

Suppose now that no solution to the considered instance of the
PARTITION problem exists. We show that then
the game $\mathcal{G}(\snet)$ has the uniform FIP.
To this end we order the nodes of $\snet$ as follows
(note the positions of the nodes $c, d$ and $e$):
\[
1, 2, \LL, n, a, b, g, c, e, d.
\]
Consider the scheduler $f$ that given a joint strategy selects the
first node in the above list that did not select a best response. 
(An aside: a node can be selected more than once --
such a possibility is discussed in the second case below.)

Take an improvement path that respects this scheduler.  After at most
$n$ steps the nodes $1, 2, \LL, n$ all selected a product $t_1$ or
$t_2$.  Suppose first that $ \sum_{i \in \{1, \LL, n\} \mid s_i = t_1}
w_{i a} > \frac12 $.

In the considered improvement path eventually
the following selections have been made
by the nodes $a, b, g, c$ and $e$:
\[
a: t_1, b: t_0, g: t_3, c: t_1, e: t_1.
\]
Either the node $d$ initially selected $t_0$ or it 
is eventually scheduled and then it switches to $t_0$.
At this moment a Nash equilibrium is reached, that is, the 
considered improvement path is finite.

Suppose next that
$
\sum_{i \in \{1, \LL, n\} \mid s_i = t_2} w_{i b} > \frac12
$.
Then after a number of steps the following selections have been made by the nodes
$a, b$ and $g$:
\[
a: t_0, b: t_2, g: t_3.
\]

Either the node $c$ initially selected $t_0$ or $t_2$, or it is
eventually scheduled and then it switches to $t_0$ or $t_2$.  Further,
either the node $e$ initially selected $t_3$ or it is eventually
scheduled and then it switches to $t_3$.  Finally, if the node $d$ is
eventually scheduled, then it switches to $t_3$, as well.  If now the
node $c$ is scheduled again, then it switches to $t_0$.  At this
moment a Nash equilibrium is reached and the considered improvement
path terminates.
\end{proof}

\section{The uniform FIP: special cases}
\label{sec:uniform-FIP-special}

As in the case of the Nash equilibria and the FIP we now consider
the uniform FIP for the special cases of social network games.

\subsection{Simple cycles}

In the case when the underlying graph is a simple cycle the following
positive result holds.

\begin{theorem}
\label{thm:cycle-UFIP}
Let $\snet$ be a network
such that the underlying graph is a simple cycle.
Then the game $\mathcal{G}(\snet)$ has the uniform FIP.
\end{theorem}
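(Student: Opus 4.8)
The plan is to exhibit the required scheduler explicitly. Writing the cycle as $1 \to 2 \to \cdots \to n \to 1$, we take $f$ to be the \emph{clockwise scheduler}: in a joint strategy that is not a Nash equilibrium it selects the node with the least index that is not playing a best response. We then must show that every improvement path $\xi$ respecting $f$ is finite.

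The first step is a structural observation. In a simple cycle the payoff of node $j$ depends only on the strategy of its unique in-neighbour $j \ominus 1$, so a deviation by node $j$ can change only whether node $j \oplus 1$ is best-responding. Together with the definition of $f$ this forces $\xi$ to decompose into \emph{rounds}. In a round the index of the active node is strictly increasing; each active node makes improving moves until it plays a best response to its (by then frozen) in-neighbour, which happens after finitely many steps since its payoff strictly increases and takes finitely many values; and the round ends once node $n$ has been processed. At that point, either node $1$ is still best-responding --- then the joint strategy is a Nash equilibrium and $\xi$ stops --- or node $n$'s processing has destroyed this and a new round begins with node $1$. Hence if $\xi$ is infinite it consists of infinitely many rounds, each (from the second on) running from node $1$ to node $n$.

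The second step records, for round $r$, the strategy $u^{(r)}_j$ node $j$ holds when its processing in round $r$ finishes (equal to $u^{(r-1)}_j$ if $j$ was not activated in round $r$). By construction $u^{(r)}_j$ is a best response of $j$ against $u^{(r)}_{j \ominus 1}$ for $j \geq 2$, and $u^{(r)}_1$ is a best response of $1$ against $u^{(r-1)}_n$. The crucial elementary fact is that, because every threshold is positive, the best-response correspondence is nearly a function: a best response against an in-neighbour playing a product $t$ is $t$ itself or $t_0$, and the only best response against an in-neighbour playing $t_0$ is $t_0$. Thus $u^{(r)}_j \in \{\,u^{(r)}_{j \ominus 1},\, t_0\,\}$ for every $j$ (reading $u^{(r)}_{1 \ominus 1}$ as $u^{(r-1)}_n$).

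The last step is a case distinction yielding the contradiction. If some $u^{(r)}_j$ equals $t_0$, then $t_0$ propagates forward, so $u^{(r)}_n = t_0$, whence $u^{(r+1)}_1 = t_0$ and $t_0$ propagates again, so at the end of round $r+1$ every node plays $t_0$; but $\overline{t_0}$ is a Nash equilibrium, so $\xi$ terminates --- a contradiction. Otherwise no $u^{(r)}_j$ is ever $t_0$, so $u^{(r)}_j = u^{(r)}_{j \ominus 1}$ for all $j$; hence $u^{(r)}_1 = \cdots = u^{(r)}_n$ is a single product $t^{(r)}$, and $u^{(r+1)}_1 = u^{(r)}_n$ forces $t^{(r+1)} = t^{(r)}$, so this product $t$ is the same in every round. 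Then the joint strategy $\overline{t}$ reached at the end of a round has every node best-responding --- equivalently $p_i(\overline{t}) \geq 0$ for all $i$, so by Theorem~\ref{thm:cycle} it is a Nash equilibrium --- again contradicting infiniteness. The main work will be in making the round decomposition precise: in particular handling the at most one initial partial round (which may start at a node other than $1$, or may already end in a Nash equilibrium), and checking that nodes skipped in a round because they are already happy still satisfy the best-response relation used above. The rest is routine verification.
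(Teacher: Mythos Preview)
Your proposal is correct and follows essentially the same approach as the paper: both use the clockwise scheduler selecting the least index not playing a best response, both observe that this forces each scheduled node to reach a best response before the scheduler advances, and both exploit the key fact that on a simple cycle a best response to a predecessor's strategy lies in $\{s_{i\ominus 1}, t_0\}$. The paper packages the structural consequence as a single invariant $Z(i)$ (if $f(s)=i$ and $s_{i-1}$ is a product then $s_n,s_1,\ldots,s_{i-1}$ all equal that product) and argues directly that when node $n$ is reached its best response is $t_0$, after which a single $t_0$-cascade terminates the path; your round decomposition and two-case analysis amount to the same argument spread over two rounds. One small point worth tightening in your Case~B: you invoke $u^{(r+1)}_1$ to conclude $p_1(\overline t)\ge 0$, but the cleaner contradiction is that node $1$ is actually scheduled in round $r{+}1$ (so $u^{(r)}_1=t$ is \emph{not} a best response to $u^{(r)}_n=t$), forcing $u^{(r+1)}_1=t_0$ and violating the Case~B hypothesis directly.
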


\begin{proof}
We use the scheduler $f$ that given a joint strategy
$s$ chooses the smallest index $i$ such that
$s_i$ is not a best response to $s_{-i}$.  So this scheduler
selects a player again if he did not switch to a best response.
Therefore we can assume that each selected player immediately selects
a best response.

Consider a joint strategy $s$ taken from a best response improvement
path.  Observe that for all $k$ if $s_k \in P(k)$ and $p_k(s) \geq 0$
(so in particular if $s_k$ is a best response to $s_{-k}$), then $s_k
= s_{k \ominus 1}$. Consequently the following property holds for all
$i > 1$:
\[
\mbox{$Z(i)$: if $f(s) = i$ and $s_{i-1} \in P(i-1)$ then for all $j \in
\{n,1,\LL, i-1\}$, $s_j = s_{i-1}$.}
\]
In words: if $i$ is the first player who did not choose a best
response and player $i-1$ strategy is a product, then this product
is the strategy of every earlier player and of player $n$.
  
Along each best response improvement path that respects $f$ the
value of $f(s)$ strictly increases until the path terminates or at
certain stage $f(s) = n$. In the latter case if $s_{n-1} = t_0$, then
the unique best response for player $n$ is $t_0$. Otherwise $s_{n-1}
\in P(n-1)$, so on the account of property $Z(n)$ all players'
strategies equal the same product and the payoff of player $n$ is
negative (since $f(s) = n$).  So the unique best response for player
$n$ is $t_0$, as well.

This switch begins a new round with player 1 as the next scheduled player.
Player 1 also switches to $t_0$ and from now on every consecutive
player switches to $t_0$, as well. The resulting path terminates once
player $n-2$ switches to $t_0$.
\end{proof}

Another scheduler $f$ that we could use is the one that given $s$
chooses the smallest
index $i$ such that $p_i(s) < 0$. Note that if $p_i(s) < 0$, then
$i$ does not play a best response in $s$. The argument is the same as for
the above scheduler. 

These two schedulers differ. Take for instance the network given in
Figure~\ref{fig:br1} and the initial joint strategy
$(t_2,t_2,t_1)$. The following improvement path terminating in the
trivial Nash equilibrium respects the second scheduler:

\[(\underline{t_2},t_2,t_1) \Rightarrow
(t_0,\underline{t_2},t_1) \Rightarrow (t_0,t_0,\underline{t_1})
\Rightarrow (t_0,t_0,t_0).\]

However, using the first scheduler, node 1 is scheduled again in the
joint strategy $(t_0,t_2,t_1)$ since $t_0$ is not its best response
and similarly for node 2. Thus the improvement path terminates in the
Nash equilibrium $(t_1,t_1,t_1)$.

\subsection{Graphs with no source nodes}

Finally, we consider social network games in the case when
the underlying graph has no source nodes. Then it is
possible that for some initial joint strategy \emph{no} improvement
path starting at it terminates, so in particular the game does not have the 
uniform FIP. To see this consider the network
given in Figure~\ref{fig:br-counterex1}(a).

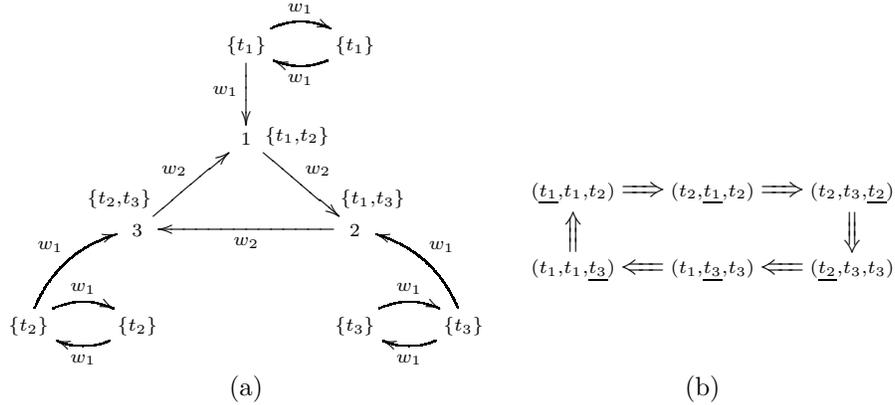
\begin{figure}[ht]
\centering
\begin{tabular}{ccc}
$
\def\objectstyle{\scriptstyle}
\def\labelstyle{\scriptstyle}
\xymatrix@W=10pt @C=20pt{
& & \{t_1\} \ar[d]_{w_1} \ar@/^0.7pc/[r]^{w_1}& \{t_1\} \ar@/^0.7pc/[l]^{w_1}\\
& &1 \ar[rd]^{w_2} \ar@{}[rd]^<{\{t_1,t_2\}}\\
 &3 \ar[ur]^{w_2} \ar@{}[ur]^<{\{t_2,t_3\}}& &2 \ar[ll]^{w_2} \ar@{}[lu]_<{\{t_1,t_3\}} \\
\{t_2\} \ar@/^0.7pc/[ru]^{w_1} \ar@/^0.7pc/[r]^{w_1}& \{t_2\} \ar@/^0.7pc/[l]^{w_1}& & \{t_3\} \ar@/^0.7pc/[r]^{w_1}&\{t_3\} \ar@/_0.7pc/[lu]_{w_1} \ar@/^0.7pc/[l]^{w_1}
}$
&
\raisebox{-2.5cm}{\parbox{8cm}{
$
\def\objectstyle{\scriptstyle}
\def\labelstyle{\scriptstyle}
\xymatrix@W=10pt @C=15pt @R=15pt{
(\underline{t_1},t_1,t_2)\ar@{=>}[r]& (t_2,\underline{t_1},t_2)\ar@{=>}[r]& (t_2,t_3,\underline{t_2})\ar@{=>}[d]\\
(t_1,t_1,\underline{t_3})\ar@{=>}[u]& (t_1,\underline{t_3},t_3)\ar@{=>}[l]& (\underline{t_2},t_3,t_3)\ar@{=>}[l]\\
}$
}}
\\
(a) & \parbox{3.7cm}{(b)}\\
\end{tabular}
\caption{\label{fig:br-counterex1}Another network with an infinite improvement path}
\end{figure}

We assume that each threshold is a constant $\theta$, where $\theta <
w_1 < w_2$. Consider the joint strategy $\strprofile$, in which the
nodes marked by $\{t_1\}$, $\{t_2\}$ and $\{t_3\}$ choose the unique
product in their product sets and players 1, 2 and 3 choose $t_1, t_1$
and $t_2$, respectively. For convenience, we denote $\strprofile$ by
$(t_1,t_1,t_2)$. This joint strategy is not a Nash equilibrium since
player 1 gains by deviating to $t_2$. There is a unique improvement
path starting at $\strprofile$ and this path is infinite.  It is shown
in Figure~\ref{fig:br-counterex1}(b). In each joint strategy, we
underline the strategy that is not a best response.

By using a more complex network that builds upon the 
above one we now prove the following analogue of
Theorem~\ref{thm:ufip-hard}.

\begin{theorem} \label{thm:ufip-hard2}
For a network $\snet$ whose underlying graph has no source
nodes, deciding whether the game $\mathcal{G}(\snet)$ has 
the uniform FIP is co-NP-hard.
\end{theorem}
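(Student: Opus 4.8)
The plan is to reduce from PARTITION, adapting the network of Figure~\ref{fig:ufip-hard} used in the proof of Theorem~\ref{thm:ufip-hard} so that it acquires no source nodes, via the `twinning' device already employed in Theorems~\ref{thm:fbrp-hard-nosource} and~\ref{thm:fip-hard-nosource}. Its source nodes are exactly $1,\dots,n$ and $g$. For each $i\in\{1,\dots,n\}$ I would add a fresh node $i'$ with product set $\{t_1,t_2\}$ and the edges $(i,i'),(i',i)$, and a fresh node $g'$ with product set $\{t_3\}$ and the edges $(g,g'),(g',g)$, giving all four new weights the value $1$ and all new thresholds the value $\frac12$, so that whenever the two members of a twinned pair adopt a common product each of them obtains a strictly positive payoff. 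Let $\snet$ be the resulting source-free network; it is computable from the PARTITION instance in polynomial time, so it remains to prove that $\mathcal{G}(\snet)$ lacks the uniform FIP iff the instance has a solution.

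For the `only if' direction, suppose $\sum_{i\in S}a_i=\frac12$. Consider the joint strategy in which $i,i'$ both play $t_1$ when $i\in S$ and both play $t_2$ otherwise, $g,g'$ both play $t_3$, $a$ plays $t_1$, $b$ plays $t_2$, and $c,d,e$ play arbitrary products. Each twinned pair sits at a strict best response to its partner, and since a node's payoff depends only on its in-neighbours the pair is permanently frozen along any improvement path; the incoming $t_1$-weight at $a$ and the $t_2$-weight at $b$ are both $\frac12$, so $t_1$ and $t_2$ stay (non-strict) best responses there and $a,b$ are frozen as well. Hence along any improvement path from this joint strategy only $c,d,e$ ever move, and with $a,b,g$ pinned to $t_1,t_2,t_3$ the induced game on $\{c,d,e\}$ is the triangle of Example~\ref{exa:nonash} with its source nodes fixed at their products, which has no Nash equilibrium. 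So no improvement path from this joint strategy is finite, whence no scheduler can make all respecting improvement paths finite: $\mathcal{G}(\snet)$ has no uniform FIP (in fact it is not even weakly acyclic).

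For the `if' direction, suppose the instance has no solution, and take the scheduler $f$ that, given a non-equilibrium joint strategy, selects the lowest-indexed non-best-responding node for the total order $1,1',2,2',\dots,n,n',g,g',a,b,c,e,d$ (a node may be selected again if it did not switch to a best response). Along any improvement path respecting $f$: the twinned pairs, each an independent two-player social network game on a $2$-cycle, are resolved first, so by Theorem~\ref{thm:2-FIP} (projecting the path onto each pair) after finitely many steps all twins are frozen at best responses, each $\{i,i'\}$ resting at $(t_1,t_1)$, $(t_2,t_2)$ or $(t_0,t_0)$ and $\{g,g'\}$ at $(t_3,t_3)$ or $(t_0,t_0)$. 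Then the weights $W_1$ (the incoming $t_1$-weight at $a$) and $W_2$ (the incoming $t_2$-weight at $b$) are fixed, and since there is no solution $W_1\neq\frac12\neq W_2$; so $a$ and $b$ move (at most once each) to their unique best responses and freeze, and since $W_1+W_2\le1$ we cannot have both $s_a=t_1$ and $s_b=t_2$. Thus the frozen triple $(s_a,s_b,s_g)$ is one of six possibilities, never $(t_1,t_2,t_3)$ — the unique context under which $\{c,d,e\}$ reproduces the frustrated triangle of Example~\ref{exa:nonash}. A finite case analysis then shows that under each of these six contexts the game on $\{c,d,e\}$ has the FIP (in each one a strategy of $c$, $d$ or $e$ becomes dominated, breaking the cycle of incentives), so $c,d,e$ move only finitely often and the path is finite; hence $\mathcal{G}(\snet)$ has the uniform FIP.

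The forward direction is routine given Theorem~\ref{thm:ufip-hard}. The main obstacle is the `if' direction: one must verify that twinning creates no infinite improvement path beyond the reach of the scheduler. The only new phenomena — absent from Theorem~\ref{thm:ufip-hard}, where the genuine source nodes were forced onto products — are the cases in which a twinned pair, or the pair replacing $g$, comes to rest at $(t_0,t_0)$, yielding the contexts with $W_1,W_2<\frac12$ or $s_g=t_0$; checking that the triangle game has the FIP in every such context is finite but somewhat tedious, and is where the argument has to do its work. Conceptually the key point is that the no-equilibrium behaviour of Example~\ref{exa:nonash} requires all three of $a,b,g$ to be on $t_1,t_2,t_3$ simultaneously, which the scheduler prevents precisely when the instance has no solution.
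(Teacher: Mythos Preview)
Your proposal is correct and follows essentially the same route as the paper: twin each source node of the network in Figure~\ref{fig:ufip-hard} (the nodes $1,\dots,n$ and $g$) with a partner so as to eliminate sources, argue the forward direction exactly as in Theorem~\ref{thm:ufip-hard} from a joint strategy in which the twins, $a$, and $b$ are all at best responses, and for the backward direction use the same scheduler ordering $1,1',\dots,n,n',g,g',a,b,c,e,d$ followed by a case analysis on the frozen values of $(s_a,s_b,s_g)$. Your invocation of Theorem~\ref{thm:2-FIP} to guarantee that each twinned pair settles in finitely many steps is a clean justification of a point the paper treats informally, and your observation that $W_1\neq\tfrac12\neq W_2$ (since the PARTITION instance has no solution) legitimately prunes the sub-cases that the paper lists under ``the latter can only arise when \dots $=\tfrac12$''.
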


\begin{proof}
The proof extends the proof of Theorem~\ref{thm:ufip-hard}.  Given
an instance of the PARTITION problem we use the following modification
of the network given in Figure~\ref{fig:ufip-hard}.  We `twin' each
node $i \in \{1, \LL, n\}$ with a new node $i'$, also with the product
set $\{t_1, t_2\}$, by adding the edges $(i,i')$ and $(i',i)$. We also
`twin' the node $g$ with a new node $g'$, also with the product set
$\{t_3\}$, by adding the edges $(g,g')$ and $(g',g)$.  Next, we choose
the weights $w_{i i'}, w_{i' i}$, where $i \in \{1, \LL, n\}$, and
$w_{g g'}$ and $w_{g' g}$ and the corresponding thresholds so that
when $i$ and $i'$ or $g$ and $g'$ adopt a common product, their payoff
is positive.

Suppose that a solution to the considered instance of the
PARTITION problem exists. Then we extend the joint strategy considered
in the proof of Theorem~\ref{thm:ufip-hard} by additionally
assigning $t_1$ to each node $i'$ such that $i \in S$,
$t_2$ to each node $i'$ such that $i \in \{1, \LL, n\} \setminus S$
and $t_3$ to the node $g'$.
Then, as before, there is no finite improvement path starting
in this joint strategy, so  $\mathcal{G}(\snet)$ does not have uniform FIP.

Suppose now that no solution to the considered instance of the
PARTITION problem exists. We show that then the game $\mathcal{G}(\snet)$ has the uniform FIP.
To this end we now use the following order of the nodes of $\snet$:
\[
1, 1', 2, 2', \LL, n, n', g, g', a, b, c, e, d,
\]
and as before use the scheduler that `follows' this ordering.
(The reader may notice that the primed nodes will never be scheduled.)

When considering the possible joint strategies in
$\mathcal{G}(\snet)$ we now need to consider new cases when
after being scheduled some of
the nodes $i \in \{1, \LL, n\}$ or $g$ end up selecting the strategy
$t_0$.  In each case the reasoning is similar as in the proof of
Theorem~\ref{thm:ufip-hard} and allows us to conclude that each
improvement path terminates with a Nash equilibrium. We only mention
the final selection of the strategies by the nodes $a, b, c, d, e$
and $g$ and leave checking the details to the reader.  
\II

\NI
\emph{Case 1}.
$
\sum_{i \in \{1, \LL, n\} \mid s_i = t_1} w_{i a} < \frac12
$, 
$
\sum_{i \in \{1, \LL, n\} \mid s_i = t_2} w_{i b} < \frac12
$ and $g$ selects $t_3$.

The final selection of the strategies by the nodes $a, b, c, d$ and $e$ is then
\[
a: t_0, b: t_0, c: t_0, d: t_3, e: t_3.
\]

\NI
\emph{Case 2}.
$
\sum_{i \in \{1, \LL, n\} \mid s_i = t_1} w_{i a} < \frac12
$, 
$
\sum_{i \in \{1, \LL, n\} \mid s_i = t_2} w_{i b} < \frac12
$ and $g$ selects $t_0$.

The final selection of the strategies by the nodes $a, b, c, d$ and $e$ is then
\[
a: t_0, b: t_0, c: t_0, d: t_0, e: t_0.
\]

\NI
\emph{Case 3}.
$
\sum_{i \in \{1, \LL, n\} \mid s_i = t_1} w_{i a} \geq \frac12
$, 
$
\sum_{i \in \{1, \LL, n\} \mid s_i = t_2} w_{i b} < \frac12
$ and $g$ selects $t_3$.

The final selection of the strategies by the nodes $a, b, c, d$ and $e$ is then
\[
\mbox{
$a: t_1, b: t_0, c: t_1, d: t_0, e: t_1$
or 
$a: t_0, b: t_0, c: t_0, d: t_3, e: t_3$,
}
\]
(where the latter can only arise when $\sum_{i \in \{1, \LL, n\} \mid s_i = t_1} w_{i a} = \frac12$).
\II

\NI
\emph{Case 4}.
$
\sum_{i \in \{1, \LL, n\} \mid s_i = t_1} w_{i a} \geq \frac12
$, 
$
\sum_{i \in \{1, \LL, n\} \mid s_i = t_2} w_{i b} < \frac12
$ and $g$ selects $t_0$.

The final selection of the strategies by the nodes $a, b, c, d$ and $e$ is then
\[
\mbox{
$a: t_1, b: t_0, c: t_1, d: t_0, e: t_1$ or $a: t_0, b: t_0, c: t_0, d: t_0, e: t_0$,
}
\]
(where the latter can only arise when $\sum_{i \in \{1, \LL, n\} \mid s_i = t_1} w_{i a} = \frac12$).
\II

\NI
\emph{Case 5}.
$
\sum_{i \in \{1, \LL, n\} \mid s_i = t_1} w_{i a} < \frac12
$, 
$
\sum_{i \in \{1, \LL, n\} \mid s_i = t_2} w_{i b} \geq \frac12
$ and $g$ selects $t_3$.

The final selection of the strategies by the nodes $a, b, c, d$ and $e$ is then
\[
\mbox{
$a: t_0, b: t_2, c: t_0, d: t_3, e: t_3$ or $a: t_0, b: t_0, c: t_0, d: t_3, e: t_3$,
}
\]
(where the latter can only arise when $\sum_{i \in \{1, \LL, n\} \mid s_i = t_2} w_{i b} = \frac12$).
\II

\NI
\emph{Case 6}.
$
\sum_{i \in \{1, \LL, n\} \mid s_i = t_1} w_{i a} < \frac12
$, 
$
\sum_{i \in \{1, \LL, n\} \mid s_i = t_2} w_{i b} \geq \frac12
$ and $g$ selects $t_0$.

The final selection of the strategies by the nodes $a, b, c, d$ and $e$ is then
\[
\mbox{
$a: t_0, b: t_2, c: t_2, d: t_2, e: t_0$ or $a: t_0, b: t_0, c: t_0, d: t_0, e: t_0$,
}
\]
(where the latter can only arise when $\sum_{i \in \{1, \LL, n\} \mid s_i = t_2} w_{i b} = \frac12$).
\II

This proves that
$\mathcal{G}(\snet)$ has the uniform FIP.
\end{proof}

Analogously to the notion of a game having the uniform FIP one can
introduce the notion of a game having the \bfe{uniform FBRP}, by
simply restricting attention to the best response improvement paths.
It is easy to check that all the proofs given in this section show
that the obtained results also hold when we replace the reference to
the uniform FIP by that to the uniform FBRP. In fact, the latter
property admits less improvement paths, so some proofs become simpler.
We shall return to this matter in the last section.

\section{Weakly acyclic games}
\label{sec:WA}

Finally, we show that for the social network games the property of
being weakly acyclic is weaker than that of having the uniform FIP.

\begin{example}
\rm Consider the network $\snet$ given in Figure
\ref{exa:weakly}(a). This is essentially the network in Figure
\ref{fig:noNe1} except for the addition of the new source node with
the product set $\{t_4\}$ as a neighbour of node 1 and the new product
$t_4$ included in the product sets of players 1,2 and 3. We assume that
each threshold is a constant $\theta$, where $\theta < w_3 < w_1 <
w_2$.

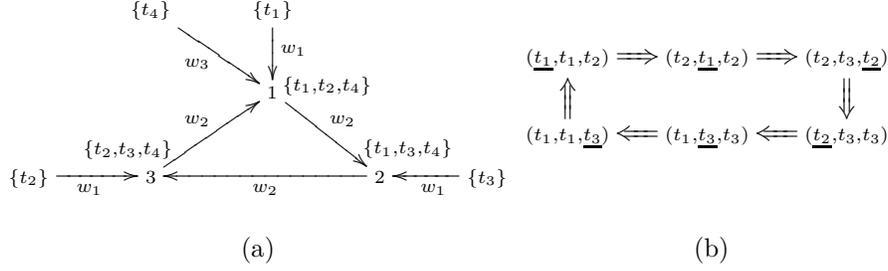
\begin{figure}[ht]
\centering
\begin{tabular}{ccc}
$
\def\objectstyle{\scriptstyle}
\def\labelstyle{\scriptstyle}
\xymatrix@R=20pt @C=25pt{
& \{t_4\} \ar[dr]_{w_3} &\{t_1\} \ar[d]^{w_1}\\
& &1 \ar[rd]^{w_2} \ar@{}[rd]^<{\{t_1,t_2,t_4\}}\\
\{t_2\} \ar[r]_{w_1} &3 \ar[ur]^{w_2} \ar@{}[ur]^<{\{t_2,t_3,t_4\}}& &2 \ar[ll]^{w_2} \ar@{}[lu]_<{\{t_1,t_3,t_4\}} &\{t_3\} \ar[l]^{w_1}\\
}$
&\hspace{-0.5cm}
\raisebox{-1.2cm}{\parbox{8cm}{
$
\def\objectstyle{\scriptstyle}
\def\labelstyle{\scriptstyle}
\xymatrix@W=10pt @C=15pt @R=15pt{
(\underline{t_1},t_1,t_2)\ar@{=>}[r]& (t_2,\underline{t_1},t_2)\ar@{=>}[r]& (t_2,t_3,\underline{t_2})\ar@{=>}[d]\\
(t_1,t_1,\underline{t_3})\ar@{=>}[u]& (t_1,\underline{t_3},t_3)\ar@{=>}[l]& (\underline{t_2},t_3,t_3)\ar@{=>}[l]\\
}$
}}
\\\\
(a) & \parbox{3.7cm}{(b)}\\
\end{tabular}
\caption{\label{exa:weakly} A weakly acyclic game that does not have the
  uniform FIP}
\end{figure}

We first show that the game $\mathcal{G}(\snet)$ does not have the
uniform FIP. Assume that each source node selects its unique
product. Identify each joint strategy that extends this selection with
the selection of the strategies by the players 1, 2 and 3. Consider
the infinite improvement path starting at the joint strategy
$s=(t_1,t_1,t_2)$ shown in Figure \ref{exa:weakly}(b). In each joint
strategy in this improvement path, there is a unique player who is not
playing his best response. Therefore, irrespective of the scheduler
chosen, there is always a unique player who can be scheduled. It
follows that $\mathcal{G}(\snet)$ does not have the uniform FIP.

On the other hand, this game is weakly acyclic. It suffices to
consider joint strategies in which each source node selects its
unique product. First note that any improvement path in which player 1
never adopts product $t_2$ is finite since he cannot switch from $t_1$
to $t_4$. We have the following cases, where as above, we only list
the strategies of players 1, 2 and 3.

\II \NI \emph{Case 1.} $\strprofile =(t_2,x,y)$ where $y \neq
t_2$. Then by definition $\payoff_1(\strprofile)<0$. If $x \neq t_4$
then consider the improvement path $\strprofile=(\underline{t_2},x,y)
\Rightarrow (t_4,\underline{x},y) \Rightarrow (t_4,t_4,\underline{y})
\Rightarrow (t_4,t_4,t_4)$. So player 1 first switches to $t_4$ (which
is a better response) instead of switching to $t_1$ (which is the best
response) and subsequently players 2 and 3 switch to $t_4$. This
results in the joint strategy $(t_4,t_4,t_4)$ which is a Nash
equilibrium. If $x = t_4$ then we have a shorter improvement path
$(t_4,t_4,\underline{y}) \Rightarrow (t_4,t_4,t_4)$.

\II \NI \emph{Case 2.} $\strprofile=(t_2,x,t_2)$. If $x \neq t_3$ then
$(t_2, \underline{x},t_2) \Rightarrow (t_2,t_3,\underline{t_2})
\Rightarrow (t_2,t_3,t_3)$. If $x=t_3$ then $(t_2,x,\underline{t_2})
\Rightarrow (t_2,t_3,t_3)$. But the proof of Case 1 shows that
$(t_2,t_3,t_3) \Rightarrow^* (t_4,t_4,t_4)$, so irrespective of the
value of $x$ there is a finite improvement path starting in
$\strprofile$.
\HB
\end{example}

In turn, there are social network games which have Nash equilibria but
are not weakly acyclic. 

\begin{example}
\rm

Consider the modification of the network in
Figure \ref{exa:weakly} where we `twin' the node with product set
$\{t_4\}$, call it 4, with a new node 5 also with the product set
$\{t_4\}$ by adding edges $(4,5)$ and $(5,4)$. Let the weight of each
new edge be a constant $w \geq \theta$. From the previous example, it
is easy to see that the joint strategy $(t_4,t_4,t_4,t_4,t_4)$ (where
we list the choice of nodes 1 to 5) is a Nash equilibrium. On the
other hand, consider the joint strategy $\strprofile=(t_1, t_1,
t_2,t_0,t_0)$. It corresponds to the joint strategy $(t_1,t_1,t_2)$
analysed in Example \ref{exa:nonash}, where we showed that every
improvement path starting at it is infinite. The same holds here and
therefore the game is not weakly acyclic.
\HB
\end{example}
\begin{theorem} \label{thm:ufip-hard3}
For an arbitrary network (respectively, a network whose underlying
graph has no source nodes) $\snet$, deciding whether the game
$\mathcal{G}(\snet)$ is weakly acyclic is co-NP-hard.
\end{theorem}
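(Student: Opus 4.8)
The plan is to derive Theorem~\ref{thm:ufip-hard3} as a corollary of the two hardness results for the uniform FIP, namely Theorems~\ref{thm:ufip-hard} and~\ref{thm:ufip-hard2}, by reusing their reductions verbatim. The bridge is the pair of implications: (i) if a game has the uniform FIP then it is weakly acyclic, and (ii) if a game admits a joint strategy from which no finite improvement path starts, then it is not weakly acyclic. Implication (ii) is immediate from the definition of weak acyclicity, which demands a finite improvement path starting at \emph{every} joint strategy. For (i), let $f$ be a scheduler witnessing the uniform FIP; from any joint strategy there is an improvement path respecting $f$ (if the strategy is already a Nash equilibrium the trivial one-element path works, otherwise extend a better-response move of the player $f$ selects to a maximal path), and by hypothesis it is finite. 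Being (by definition) maximal, a finite improvement path can only stop at a joint strategy in which no player has a profitable deviation, i.e.\ at a Nash equilibrium; so it is the required witnessing finite improvement path.

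Next I would recall precisely what the proofs of Theorems~\ref{thm:ufip-hard} and~\ref{thm:ufip-hard2} establish about the networks $\snet$ they construct from a PARTITION instance. In each case, when the instance has a solution, the proof exhibits an explicit joint strategy $s$ (labelling the source nodes $1,\dots,n$ by $t_1$/$t_2$ according to the partition, pinning $a,b,g$ to their forced values, etc.) from which \emph{every} improvement path is infinite, because $a$, $b$ and $g$ never change their strategy, so the remaining players are forced into the game associated with the network of Figure~\ref{fig:noNe1}, which by Example~\ref{exa:nonash} has no Nash equilibrium. By (ii), $\mathcal{G}(\snet)$ is then not weakly acyclic. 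When the PARTITION instance has no solution, those same proofs show $\mathcal{G}(\snet)$ has the uniform FIP, hence by (i) it is weakly acyclic. Thus, for both families of networks, $\mathcal{G}(\snet)$ is weakly acyclic iff the PARTITION instance has no solution.

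Since each network is built in polynomial time from the PARTITION instance, this is a polynomial-time reduction from the complement of PARTITION to the problem of deciding weak acyclicity, so that problem is co-NP-hard; moreover the network used in the proof of Theorem~\ref{thm:ufip-hard2} has no source nodes, which yields the same conclusion for that restricted class. I do not expect a genuine obstacle here: the only points needing care are that ``finite improvement path'' is read as a maximal (hence Nash-terminating) path, so that (ii) really captures non-weak-acyclicity, and that the finiteness-implies-Nash-termination step in (i) is made explicit. Both are routine, so the theorem is in essence a repackaging of the uniform-FIP hardness proofs via the implication ``uniform FIP $\Rightarrow$ weakly acyclic'' together with the fact, already established inside those proofs, that in the YES case the reduction produces a joint strategy with no finite improvement path.
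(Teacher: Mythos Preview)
Your proposal is correct and follows essentially the same approach as the paper: both reuse the reductions from Theorems~\ref{thm:ufip-hard} and~\ref{thm:ufip-hard2} verbatim, arguing that a YES PARTITION instance yields a joint strategy from which every improvement path is infinite (hence not weakly acyclic), while a NO instance yields the uniform FIP (hence weak acyclicity). You spell out the two bridging implications more explicitly than the paper does, but the argument is the same.
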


\begin{proof}
Consider the case of an arbitrary
network. In the proof of Theorem~\ref{thm:ufip-hard} we showed that if a
solution to the instance of the PARTITION problem exists then in the
associated game $\mathcal{G}(\snet)$, there is a joint strategy
$\strprofile$ such that all improvement paths starting in
$\strprofile$ are infinite. This implies that $\mathcal{G}(\snet)$ is
not weakly acyclic.

Conversely, if there is no solution to the instance of the PARTITION
problem then there is a scheduler $f$ such that all improvement paths
in $\mathcal{G}(\snet)$ that respects $f$ are finite. This implies
that $\mathcal{G}(\snet)$ is weakly acyclic.

The proof for the case of a network whose underlying
graph has no source nodes is analogous and uses the proof of 
Theorem~\ref{thm:ufip-hard2}.
\end{proof}

\section{Concluding remarks}
\label{sec:conc}

\subsection{Summary of the results}

In this paper we studied various aspects of product adoption by agents
who form a social network by focussing on a natural class of strategic
games associated with the class of social networks introduced in
\cite{AM11}. We identified three natural types of (pure) Nash
equilibria in these games: arbitrary, non-trivial, and determined, and
focussed on games associated with four classes of social networks:
arbitrary ones and those whose underlying graph is a DAG, or a simple
cycle, or has no source nodes.  We also showed that the price of
anarchy and the price of stability is unbounded, even if we limit
ourselves to the social networks whose underlying graph is a DAG or
has no source nodes.

Further, we studied the finite best response property (FBRP), the
finite improvement property (FIP) and also introduced a new class of
 games that have the uniform FIP. The following table summarizes
our complexity and existence results, where we refer to the underlying
graph having $n$ nodes.

\begin{center}
        \begin{tabular}{|l|c|c|c|c|}
\hline
~~~~~property & arbitrary   & DAG              & simple cycle                       & no source \\
               &             &                  &                                    &nodes \\
\hline
Arbitrary NE   & NP-complete & always exists    & always exists                      & always exists \\
Non-trivial NE & NP-complete & always exists    & $\bigo(|\products| \cdot n)$       & $\bigo(|\products| \cdot n^3)$ \\
Determined NE  & NP-complete & NP-complete      & $\bigo(|\products| \cdot n)$       & NP-complete \\
FBRP           & co-NP-hard  & yes              & $\bigo(|\products| \cdot n)$       & co-NP-hard \\ 
FIP            & co-NP-hard  & yes              & ?                                  & co-NP-hard \\ 
Uniform FIP    & co-NP-hard  & yes              & yes                                & co-NP-hard\\
Weakly acyclic & co-NP-hard  & yes              & yes                                & co-NP-hard\\
\hline
\end{tabular}
\end{center}

We also mentioned in the Section \ref{sec:uniform-FIP-special} the
notion of a game having the uniform FBRP.  In a recent paper,
\cite{AS12}, we developed the notion of schedulers further and in
particular showed that for finite games the notions of having the
uniform FIP and the uniform FBRP are equivalent.  This provides
another way of showing that the results of the previous section also
hold for the latter notion.

\subsection{Final comments}

In the definition of the social network games we took a number of
simplifying assumptions. In particular, we stipulated that the source
nodes have a constant payoff $\constutil > 0$.  One could allow the
source nodes to have arbitrary positive utility for different
products. This would not affect the proofs. Indeed, in Nash
equilibria the source nodes would select only the products with the
highest payoff, so the other products in their product sets could be
disregarded. Further, the FBRP, the FIP, the uniform FIP and weak
acyclicity of a social network game is obviously not affected by such
a modification.

We could also allow the weights to be parametrized by a product. The
corresponding expression in the definition of the payoff function
would then become $\sum_{j \in \inflset_i^t(\strprofile)}
w_{ji}(t) - \theta(i,t)$. In contrast, as shown in Example~\ref{exa:12},
such a modification can affect some of the positive results.

Further, the results of this paper can be slightly generalized by using a more
general notion of a threshold that would also depend on the set
of neighbours who adopted a given product. In this more general setup
for $i \in V$, $t \in \prodset(i)$ and $X \subseteq
\neighbour(i)$, the \bfe{threshold function} $\theta$ yields a 
value $\theta(i,t,X) \in (0,1]$. 

For the results to continue to hold one needs to 
assume that the threshold function satisfies the following
\bfe{monotonicity} condition: 
if $X_1 \subseteq X_2$ then $\theta(i,t,X_1) \geq \theta(i,t,X_2)$.
Intuitively, agent $i$'s resistance to adopt a product
decreases when the set of its neighbours who adopted it increases.
We decided not to use this definition for the sake of readability.

This work can be pursued in a couple of natural directions. One is the
study of social networks with other classes of underlying graphs.
Another is an investigation of the complexity results for other
classes of social networks, in particular for the equitable ones,
i.e., networks in which the weight functions are defined as
$\weight_{ji} = \frac{1}{|\neighbour(i)|}$ for nodes $i$ and $j \in
\neighbour(i)$. One could also consider other equilibrium concepts
like strict Nash equilibria.

Currently we started a study of slightly different games, in which the
players are obliged to choose a product, i.e., games in which the
strategy $t_0$ is absent. Such games naturally correspond to
situations in which the agents always choose a product, for instance a
subscription for their mobile telephone. These games substantially
differ from the ones considered here. For example, Nash equilibria
do not need to exist when the underlying graph is a simple cycle.

Finally, we initiated in \cite{AMS13} an analysis of the consequences
of introducing new products in a social network.  This can be done
using the framework of the social network games by focusing on the
finite best response property and the finite improvement property.  In
particular we found that in some cases such a product addition can
lead to a different product selection in which, remarkably, each agent
is strictly worse off.

\section*{Acknowledgments}

We would like to thank the referee for an exceptionally detailed and
useful report. Theorem~\ref{thm:polymatrix-NP} was suggested by
Bernhard von Stengel.

\bibliographystyle{abbrv}
\bibliography{/ufs/apt/bib/e.bib}

\end{document}

------------